\newcommand{\dist}{\text{dist}}
\newcommand{\Supp}{\text{Supp}}
\newcommand{\mdist}{\text{dist}_{\text{min}}}
\newcommand{\eq}[1]{\hyperref[eq:#1]{(\ref*{eq:#1})}}
\renewcommand{\sec}[1]{\hyperref[sec:#1]
    {Section~\ref*{sec:#1}}}
\newcommand{\thm}[1]{\hyperref[thm:#1]
    {Theorem~\ref*{thm:#1}}}
\newcommand{\lem}[1]{\hyperref[lem:#1]{Lemma~\ref*{lem:#1}}}
\newcommand{\clm}[1]{\hyperref[claim:#1]{Claim~\ref*{claim:#1}}}
\newcommand{\fct}[1]{\hyperref[fct:#1]{Fact~\ref*{fct:#1}}}
\newcommand{\prop}[1]{\hyperref[prop:#1]
    {Proposition~\ref*{prop:#1}}}
\newcommand{\prob}[1]{\hyperref[prob:#1]
    {Problem~\ref*{prob:#1}}}
\newcommand{\cor}[1]{\hyperref[cor:#1]
    {Corollary~\ref*{cor:#1}}}
\newcommand{\fig}[1]{\hyperref[fig:#1]{Figure~\ref*{fig:#1}}}
\newcommand{\tab}[1]{\hyperref[tab:#1]{Table~\ref*{tab:#1}}}
\newcommand{\alg}[1]{\hyperref[alg:#1]
    {Algorithm~\ref*{alg:#1}}}
\newcommand{\app}[1]{\hyperref[app:#1]
    {Appendix~\ref*{app:#1}}}
\newcommand{\conj}[1]{\hyperref[conj:#1]
    {Conjecture~\ref*{conj:#1}}}
\newcommand{\zoto}[1]{{\{0,1\}^{#1}}}
\newcommand{\zoton}{{\zoto{n}}}
\newenvironment{casepar}[1]{
    \begin{adjustwidth}{0.5em}{} \noindent\textbf{#1}.
}{
    \end{adjustwidth}
    \smallskip
}
\tikzstyle{zig} = [decorate,decoration={snake,amplitude=.4mm,segment length=2mm}]
\newcommand{\blackdot}[3]{\pgfpathcircle{\pgfpointxyz{#1}{#2}{#3}}{3pt} \color{black} \pgfusepath{fill}}
\newcommand{\whitedot}[3]{%
    \pgfpathcircle{\pgfpointxyz{#1}{#2}{#3}}{3pt} \color{white} \pgfusepath{fill}%
    \pgfpathcircle{\pgfpointxyz{#1}{#2}{#3}}{3pt} \color{black} \pgfusepath{stroke}}
\newcommand{\id}{\mathsf{id}}
\newcommand{\removelatexerror}{\let\@latex@error\@gobble}
\begin{document}
\onecolumn

%

\newpage

\title{Structured decomposition for reversible Boolean functions}


\author{Jiaqing~Jiang,
        Xiaoming~Sun,
        Yuan~Sun, Kewen~Wu, and~Zhiyu~Xia
\thanks{J.~Jiang, ~X.~Sun, ~Y.~Sun, ~Z.~Xia are with the CAS Key Lab of Network Data Science and Technology, Institute of Computing Technology, Chinese Academy of Sciences, Beijing, China; University of Chinese Academy of Sciences, Beijing, China (e-mail: (\{jiangjiaqing,~sunxiaoming, ~sunyuan2016,~xiazhiyu\}@ict.ac.cn).}
\thanks{K.~Wu is with the School of Electronics Engineering and Computer Science, Peking University, Beijing, China (e-mail: shlw\_kevin@pku.edu.cn).}
}

\IEEEtitleabstractindextext{%

\begin{abstract}
Reversible Boolean function is a one-to-one function which maps $n$-bit input to $n$-bit output. Reversible logic synthesis has been widely studied due to its relationship with low-energy computation as well as quantum computation. 
In this work, we give a structured decomposition for even reversible Boolean functions (RBF). Specifically, for $n\geq 6$, any even $n$-bit RBF can be decomposed to $7$ blocks of $(n-1)$-bit RBF, where $7$ is a constant independent of $n$; and the positions of those blocks have large degree of freedom.
Moreover, if the $(n-1)$-bit RBFs are required to be even as well, we show for $n\geq 10$, $n$-bit RBF can be decomposed to $10$ even $(n-1)$-bit RBFs. For simplicity, we say our decomposition has block depth $7$ and even block depth $10$.

Our result improves Selinger's work in block depth model, by reducing the constant from $9$ to $7$; and from $13$ to $10$ when the blocks are limited to be even. We emphasize that our setting is a bit different from Selinger's. In Selinger's constructive proof, each block is one of two specific positions and thus the decomposition has an alternating structure. 
We relax this restriction and allow each block to act on arbitrary $(n-1)$ bits. 
This relaxation keeps the block structure and provides more candidates when choosing positions of blocks.
\end{abstract}

\begin{IEEEkeywords}
Reversible computation, reversible logic, synthesis method, quantum computation, logic gates, integrated circuits.
\end{IEEEkeywords}}

\maketitle

\IEEEdisplaynontitleabstractindextext

\ifCLASSOPTIONpeerreview
\begin{center} \bfseries EDICS Category: 6–EMRG 
\end{center}
Dear editor:\\

We would like to send this enclosed manuscript entitled \emph{Structured decomposition for reversible Boolean functions}, which we wish to be considered for publication in \emph{the IEEE Transactions on Computer-Aided Design of Integrated Circuits and Systems}. No conflict of interests exists in the submission of this manuscript, and it is approved by all authors for publication.

We deeply appreciate your consideration, and we look forward to receiving comments from the reviewers. If you have any queries, please do not hesitate to contact us at the address below.\\\

\noindent Thank you and best regards.\\

\noindent Yours sincerely,

\noindent Jiaqing Jiang, Xiaoming Sun, Yuan Sun, Kewen Wu, and Zhiyu Xia.

\fi
\IEEEpeerreviewmaketitle

\section{Introduction}
   
    \IEEEPARstart{R}{eversible} Boolean function is a one-to-one function which maps $n$-bit input to $n$-bit output. Combinatorially, it represents a permutation over $\{0,1\}^n$. 
    One historical motivation of studying reversible computation is to reduce the energy consumption caused by computation \cite{bennett1988notes,saeedi2013synthesis,arabzadeh2010rule}. According to Landauer's principle \cite{Landauer1961Irreversibility}, irreversible computation leads to energy dissipation of the order of $KT$ per bit, where $K$ refers to the Boltzmann constant and $T$ is the temperature of the environment. In contrast, if the computing process is reversible, we can in principle use no energy. A classic example of realization of reversible Boolean function --- the billiard ball computer where computation costs no energy --- can be found in Nielsen and Chuang's book \cite{book}. In addition, reversible Boolean functions are widely used in the quantum circuit such as in the modular exponentiation part of Shor's factoring algorithm \cite{shor1999polynomial}, or oracles in Grover's search algorithm \cite{Grover1996A,shende2003synthesis}. Any quantum circuit involving a Boolean function, which is generally irreversible and can not be implemented in quantum circuit directly, such as quantum arithmetic circuit \cite{maslov2008quantum,takahashi2009quantum}, may benefit from the study of reversible Boolean function. 
    
    When implementing an $n$-bit reversible Boolean function, the intuition is to use induction and divide the problem into smaller cases. That is, we try to decompose an $n$-bit reversible Boolean function into a product of several $(n-1)$-bit reversible Boolean functions. This decomposition is generally impossible, since if the $n$-bit reversible Boolean function represents an odd permutation over $\{0,1\}^n$, it can not be implemented by $(n-1)$-bit reversible Boolean functions, which are even when regarded as a permutation on $n$ bits. However, in 2017, Selinger \cite{selinger2018finite} found the decomposition does exist for even $n$-bit reversible Boolean functions and remarkably, the number of required $(n-1)$-bit functions is a constant independent of $n$. More precisely, he proved that an arbitrary even $n$-bit reversible Boolean function can be represented by $9$ $(n-1)$-bit reversible Boolean functions with an alternating structure shown in \fig{alternation}. He also proved that, if we limit the $(n-1)$-bit functions to be even as well, then the number of $(n-1)$-bit functions is at most $13$. For simplicity, in the following we use \textit{block} to refer to the $(n-1)$-bit reversible Boolean function, and \textit{even block} to refer to the even $(n-1)$-bit reversible Boolean function.
    
    \begin{figure}[ht]
        \centering
        \scalebox{0.6}{\begin{tikzpicture}[
    very thick,scale=0.7,
    rec/.style={rectangle,fill=white, draw=black,minimum width=0.8cm,minimum height=2.7cm,scale=0.7}
    ]
    \foreach \y in {0,1,2,3}
        \draw (0.3,\y) -- (13.6+1.7+1.4,\y);
    \foreach \z in {1.7}{
        \foreach \x in {1,2,3,4}{
            \node[rec] at (\z*2*\x-\z,1) {};
            \node[rec] at (\z*2*\x,2) {};
            \foreach \y in {1.25,1.5,1.75}{
                \node at (\z*2*\x-\z*0.5,\y) {\Large $\cdot$};
                \node at (\z*2*\x+\z*0.5,\y) {\Large $\cdot$};
            }
        }
        \node[rec] at (\z*2*5-\z,1) {};
        \foreach \y in {1.25,1.5,1.75}{
            \node at (\z*2*5-\z+\z*0.5,\y) {\Large $\cdot$};
            \node at (\z*2*1-\z-\z*0.5,\y) {\Large $\cdot$};
        }
    }
\end{tikzpicture}}
        \caption{Alternating structure in \cite{selinger2018finite}}\label{fig:alternation}
    \end{figure}
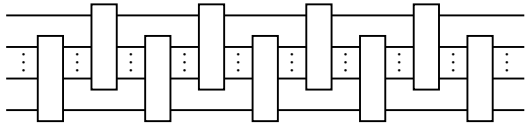
    
    Our main contributions are: we improve the constant from $9$ to $7$ for $n\geq 6$  and  $13$ to $10$ for $n\geq 9$ when limiting the blocks to be even. To be concise, our decomposition has block depth $7$ and even block depth $10$. We should emphasize that our setting is a bit different from Selinger's. In Selinger's work, the decomposition is restricted to an alternating structure.
    Instead of fixing two specific positions, we allow blocks to act on arbitrary $(n-1)$ bits. This relaxation keeps the block structure and provides more candidates when choosing the position of blocks. We believe this relaxation makes the model more flexible in application.
    
    For convenience, we abbreviate reversible Boolean function as RBF. 
    We further say a RBF is controlled RBF if it keeps a certain bit invariant (formal definition is in \sec{preliminary}). 
    Our construction consists of two steps. In the first stage, we prove that an arbitrary even $n$-bit RBF can be transformed into an even controlled RBF by $3$ $(n-1)$-bit blocks and the positions of those low-level blocks have a lot of freedom. It is worth mentioning that the number $3$ is also essentially tight.
    Then we prove that an arbitrary even controlled RBF can be substituted with $5$ blocks, where the third and fourth blocks have many choices as well. While putting it together, we can literally merge the last block in the first step with the first block in the second step, thus providing a $7$-depth full decomposition.
    As a partial result during the construction, we show that two different $(n-1)$-bit blocks are sufficient to formulate the cycle pattern of any even $n$-bit permutation free of 3/5-cycle. We believe this result has some individual interest. Here, cycle pattern is the list $\{c_k\}$, where $c_k$ is the number of cycles of length $k$; and free of 3/5-cycle means $c_3=c_5=0$.
    The limitation that cycle pattern is free of 3/5-cycle is indeed inevitable since we can also prove two $(n-1)$-bit blocks can not compose a single 3/5-cycle. The proof of even block depth $10$ is similar. Since all the proofs in this paper are constructive in essence, our decomposition can be programmed as an efficient algorithm.
    
    In 2003, Shende \emph{et al.} \cite{shende2003synthesis} proved that any even reversible Boolean function can be decomposed into  NOT gates, CNOT gates and Toffoli gates without using temporary storage. Besides, In 2010, Saeedi \emph{et al.} \cite{saeedi2010reversible}  gave an algorithm which synthesizes a given permutation by $7$ building blocks.
    These works focus on decomposing RBFs into smaller pieces, however,  their constructions can not be merged into 7 $(n-1)$-bit blocks, thus they are different from our work.
    There are also some related works about decomposing $n$-bit unitary operator to smaller ones.
    In 2010, Saeedi \emph{et al.} \cite{saeedi2010block} showed how to decompose an arbitrary $n$-bit unitary operator down into $\ell$-bit unitary operators ($\ell<n$) using quantum Shannon decomposition \cite{shende2006synthesis}.

    The structured decomposition may have some potential applications. Though not directly improving results in circuit synthesis, the structure of this decomposition implies some interesting results. For instance, in Selinger's construction in \fig{alternation}, long-distance CNOT, i.e., CNOT between the first and the last bit prohibited by today's quantum devices \cite{devitt2016performing,divincenzo2000physical}, shall be avoided. Although a similar effect can be realized with SWAP gates \cite{zulehner2018efficient}, this result actually indicates that such gate-costing alternatives will not happen frequently in a proper structure. In our setting, the positions of blocks have certain freedom to choose, which makes the construction even more flexible for different potential physical devices \cite{almudever2017engineering,veldhorst2017silicon}.
    
    {\bf Organization of the paper}\quad In \sec{preliminary}, we give formal definitions of the key elements required in expressing problem and formulating proof. Then in \sec{mainresult}, we list our main results and give a proof sketch. In \sec{colored graph} and \sec{new2}, we give detailed proofs to the result of block depth $7$. Specifically, in \sec{colored graph}, we transform an even $n$-bit RBF to an even controlled RBF by $3$ $(n-1)$-bit blocks. In \sec{new2}, we show how to recover an even controlled RBF by $5$ blocks. In addition, an explicit example of our algorithm is put in \sec{example}.
    In \sec{EvenBlock}, we give a proof sketch of the result of even block depth. This proof is similar to the proof of block depth but involves a much more sophisticated analysis. 
    At last, the paper is concluded in \sec{con}. Due to the page limit, the omitted proofs are deferred into the appendix.

    \section{Preliminary}\label{sec:preliminary}
    
    \begin{figure*}[ht]
        \centering
        \begin{tikzpicture}[->,>=stealth]
    \foreach \x in {0,1,2,3}{
        \node (sig\x) at (2.2*\x,0) {$\sigma^{(\x)}$};
    }
    \node at (2.2*3+1.15,-0.066) {$\in A_{\{0,1\}^n}^{(r_1)}$};
    \foreach \x in {4,5,6}{
        \node (sig\x) at (2.2*7-2.2*\x,-2) {$\sigma^{(\x)}$};
    }
    \node (sig7) at (2.2*7-2.2*7,-2) {$\id$};
    \node (sig33) at (2.2*7-2.2*3,-2) {$\widetilde\sigma^{(3)}$};
    \node at (2.2*7-2.2*3+1.15,-2-0.066) {$\in A_{\{0,1\}^n}^{(r_1)}$};

    {
    \small
    \draw (sig0) edge node[above] {$SC_{\{0,1\}^n}^{(r_2)}$} (sig1);
    \draw (sig1) edge node[above] {$SC_{\{0,1\}^n}^{(r_1)}$} (sig2);
    \draw (sig2) edge node[above] {$SC_{\{0,1\}^n}^{(r_2)}$} (sig3);
    \draw (sig3) edge node[right,align=center] {special\qquad\quad\hspace{0pt}\\$SC_{\{0,1\}^n}^{(r_2)}$} (sig33);
    \draw (sig33) edge node[below] {$SC_{\{0,1\}^n}^{(r_1)}$} (sig4);
    \draw (sig4) edge node[below] {$SC_{\{0,1\}^n}^{(r_3)}$} (sig5);
    \draw (sig5) edge node[below] {$SC_{\{0,1\}^n}^{(r_4)}$} (sig6);
    \draw (sig6) edge node[below] {$SC_{\{0,1\}^n}^{(r_1)}$} (sig7);
    }

    \draw[decorate,decoration={brace,amplitude=10pt,raise=20pt},yshift=15pt,-] (sig0) -- (sig3) node [black,midway,yshift=40pt] {\prop{new1}};
    \draw[-,dashed,rounded corners] (8.5,0.4) -- (6.7,0.4) -- (-0.3,-1.7) -- (-0.3,-2.8) -- (11.1,-2.8) -- (11.1,-2) -- cycle;
    \node at (5.4,-3.1) {\prop{new2}};
\end{tikzpicture}
        \caption{Process of the algorithm for \thm{7steps}.}
        \label{fig:process}
    \end{figure*}

    In general, our work aims to implement an even $n$-bit reversible Boolean function using $(n-1)$-bit reversible Boolean function.
    In order to state our problems and theorems properly, formal definitions are required. 
    
    Denote $[n]$ as $\{1,2,\cdots,n\}$ and $\{0,1\}^n$ as the set of $n$-bit binary strings. Define $S_{\zoton}$ as the group of permutations over $\zoton$; and $A_{\zoton}$ as the group of even permutations over $\zoton$. For any $\sigma\in S_{\zoton}$ and $\bm x,\bm y\in\zoton$, define 
    $$
    \dist^{\sigma}(\bm x,\bm y) = \min\{k \in \mathbb N \mid \sigma^k(\bm x) = \bm y\}
    $$ 
    (if $\bm y$ is not reachable from $\bm x$ under $\sigma$, $\dist^\sigma(\bm x,\bm y)=+\infty$)
    and 
    $\mdist^{\sigma}(\bm x, \bm y) = \min\{\dist^\sigma(\bm x, \bm y), \dist^\sigma(\bm y, \bm x)\}$.
    We also define the support of $\sigma$ as $\Supp(\sigma)=\{\bm{x}| \sigma(\bm{x})\neq \bm{x}\}$. 
    
    Recall that every permutation has a unique cycle decomposition. We say $\sigma$ has a $k$-cycle if there is a cycle of length $k$ in the cycle decomposition. We say $\bm x\in \zoton$ is a fix-point if $\sigma(\bm x)=\bm x$ and a fix-point is a $1$-cycle as well. If $\sigma$ consists of $k_1$-cycle, ..., $k_t$-cycle, we say $\sigma$ is exactly $k_1,\ldots,k_t$-cycle. 
    We may omit $k_i$ if $k_i=1$. For example, we may abbreviate $1,3,4$-cycle as $3,4$-cycle.
    In addition, we say $\sigma$ is free of $l_1/l_2/.../l_s$-cycle if for any $i\in [s],j\in[t], l_i\neq k_j$.
    
    For simplicity, we abbreviate reversible Boolean function as \textit{RBF} and permutation over $\zoton$ as \textit{$n$-bit permutation}.
    Since any $n$-bit RBF can be viewed as a permutation over $\zoton$, thus the set of all $n$-bit RBFs is isomorphic to $S_{\zoton}$. Moreover, we say an $n$-bit RBF is \textit{even} if its corresponding permutation is even. 
    
    Given $\bm x\in \zoton$, write $\bm x_i$ for the value of its $i$-th bit; and
    $\bm x^{\oplus i}:=\bm x_1\cdots\bm x_{i-1}(1-\bm x_i)\bm x_{i+1}\cdots\bm x_n$,
    i.e., $\bm x^{\oplus i}$ is $\bm x$ flipped the $i$-th bit.
    Furthermore, define $\bm x^{\oplus i_1,i_2,\ldots,i_k}$ recursively as $\left(\bm x^{\oplus i_1}\right)^{\oplus i_2,\cdots,i_k}$.
    
    \begin{definition}[Controlled RBF (CRBF)]
        Given $n>0$ and $i\in [n]$, we say $\pi$ is an $n$-bit $i$-CRBF if $\pi\in S_{\zoton}^{(i)}$, where
        $$
            S_{\zoton}^{(i)}:=\big\{\sigma\in S_{\zoton}\ \big|\ \forall\bm x\in\{0,1\}^n,\sigma(\bm x)_i=\bm x_i \big\}.
        $$
        We also define 
        $$
            A_{\zoton}^{(i)}:=\big\{\sigma\in A_{\zoton}\ \big|\ \forall\bm x\in\{0,1\}^n,\sigma(\bm x)_i=\bm x_i \big\}.
        $$
    \end{definition}
    
    An $i$-CRBF keeps the $i$-th bit of any input invariant. For example, if $i=1$, then there exist $f_0,f_1\in S_{\zoto{n-1}}$ such that $\pi(0\bm y)=0f_0(\bm y),\pi(1\bm y)=1f_1(\bm y)$ for any $\bm y\in \{0,1\}^{n-1}$. 
    Moreover, we say $\pi$ is a concurrent controlled RBF (CCRBF) if $f_0=f_1$. Further, when $f_0$ is even, we say $\pi$ is concurrently even; and concurrently odd when $f_0$ is odd. 
    The formal definitions are shown below. 
    
    \begin{definition}[Concurrent Controlled RBF (CCRBF)]\label{concface}
        Given $n>0$ and $i\in [n]$, we say $\pi$ is an $n$-bit $i$-CCRBF if $\pi\in SC_{\zoton}^{(i)}$, where
        \begin{align*}
            SC_{\zoton}^{(i)}
            &:=\big\{\sigma\in S_{\zoton}^{(i)}\ \big|\ \forall\bm x\in\{0,1\}^n,\\
            &\forall k\in[n]\backslash\{i\},\sigma(\bm x)_k=\sigma(\bm x^{\oplus i})_k\big\}.
        \end{align*}
    \end{definition}

    \begin{definition}[Concurrently Even/Odd]
        An $n$-bit $i$-CCRBF $\pi$ can be regarded as an $(n-1)$-bit RBF $\sigma|_{-i}$ on bits $[n]/\{i\}$.
        We say that $\sigma$ is $i$-concurrently even/odd if $\sigma|_{-i}$ is even/odd. 
        Define $AC_{\zoton}^{(i)}$ as the set of $n$-bit concurrently even $i$-CRBF.
    \end{definition}
    
    When dimension $i$ is clear in the context, we simply use concurrently even/odd. 
    Note that no matter whether $\sigma|_{-i}\in S_{\{0,1\}^{n-1}} $ is odd or even, CCRBF $\sigma\in S_{\zoton}$ itself is always even.

    \begin{definition}[Block depth and even block depth]
        Given $n\geq 2$ and $\sigma\in S_{\zoton}$, we say $\sigma$ has block depth $d$ if there exist $\sigma_1,\sigma_2,\ldots,\sigma_d\in\bigcup_{j=1}^n SC_{\zoton}^{(j)}$
        such that $\sigma=\sigma_1\sigma_2\cdots\sigma_d$.
        
        Similarly, we say $\sigma$ has even block depth $d$ if those $\sigma_i\in\bigcup_{j=1}^nAC_{\zoton}^{(j)}$.
    \end{definition}
    
    Notice that the decomposition problem considered here is a bit different from Selinger's work\cite{selinger2018finite}. In Selinger's work, any $\sigma_i$ is in one of two specific positions, thus the decomposition forms an alternating structure as \fig{alternation}. Here we relax the restriction and allow blocks acting on arbitrary $(n-1)$ bits. Thus we consider the block depth instead of alternation depth used in \cite{selinger2018finite}.

    \section{Main results and proof sketch}\label{sec:mainresult}
    
    In the previous work, Selinger \cite{selinger2018finite} proved that an arbitrary even $n$-bit RBF has alternation depth $9$ and even alternation depth $13$.
    Our main contribution is to improve the constant $9$ to $7$ in block depth model and $13$ to $10$ in even block depth model. The main theorems are stated as follows.
    
    \begin{theorem}\label{thm:7steps}
        For $n\geq 6$, any $\sigma\in A_{\zoton}$ has \textit{block depth} $7$.
    \end{theorem}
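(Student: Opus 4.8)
The plan is to follow the two-stage reduction drawn in \fig{process}: first carry an arbitrary $\sigma\in A_{\zoton}$ into an even controlled RBF fixing some bit $r_1$ using three blocks (\prop{new1}), then decompose that controlled RBF into five blocks (\prop{new2}), and finally merge one adjacent pair so that the eight blocks collapse to seven. Throughout I use that every block, being a CCRBF, is an even permutation, and that the inverse of a CCRBF on a given coordinate is again a CCRBF on that coordinate, so ``reducing $\sigma$ to $\id$ by a sequence of blocks'' is the same as writing $\sigma$ as a product of blocks.

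For the first stage I would think of the obstruction to being a controlled RBF on $r_1$ as the set of inputs on which $\sigma$ flips bit $r_1$, i.e.\ the crossings between the two halves $\{\bm x_{r_1}=0\}$ and $\{\bm x_{r_1}=1\}$. A block on any coordinate $\neq r_1$ can move points across this partition, so the goal is to cancel all crossings with as few blocks as possible. Following the ``colored graph'' viewpoint, I would encode the crossing structure as a graph and use a coloring of it to split the correction into three parts, realized by blocks placed on coordinates $r_2,r_1,r_2$; the output is an even controlled RBF in $A_{\zoton}^{(r_1)}$. I expect the count $3$ to be tight here.

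For the second stage I would represent an even controlled RBF $\tau\in A_{\zoton}^{(r_1)}$ by its two half-permutations $(f_0,f_1)$, which necessarily have equal parity since $\tau$ is even. The engine is the cycle-pattern result advertised in the introduction: two blocks (the third and fourth, on freely chosen coordinates $r_3,r_4$) suffice to realize any cycle pattern of an even permutation that is free of $3/5$-cycles. I would use a first block on $r_2$ together with blocks on $r_1$ to reduce $(f_0,f_1)$ to a canonical target of the right parity and cycle pattern, install that pattern with the two free blocks, and finish. The positional freedom of $r_3,r_4$ is exactly the freedom already present in the cycle-pattern construction.

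Finally, the last block of the first stage and the first (``special'') block of the second stage both act on $r_2$, and a product of two CCRBFs on the same coordinate is again a CCRBF on that coordinate, so they combine into a single block, giving $7$ in total; the hypothesis $n\ge 6$ guarantees enough spare coordinates and a large enough domain for both constructions. I expect the main obstacle to be the second stage: driving the controlled-RBF decomposition down to exactly five blocks forces a case-by-case cycle-structure analysis, and because two blocks provably cannot compose a single $3/5$-cycle, those small cycle types cannot be produced by the generic construction and must be routed around by hand.
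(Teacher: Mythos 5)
Your proposal is correct and follows essentially the same route as the paper: the three-block reduction to an even controlled RBF via the colored-graph analysis (\prop{new1}), the five-block decomposition of the controlled RBF built on half-permutations, 3/5-cycle elimination, and the two-block cycle-pattern realization on freely chosen coordinates $r_3,r_4$ (\prop{new2sum} inside \prop{new2}), and finally the merge of the two adjacent $SC_{\zoton}^{(r_2)}$ blocks into one, which is exactly how the paper obtains $3+5-1=7$. Your parity observations (CCRBFs are even, $f_0,f_1$ have equal parity, $SC_{\zoton}^{(i)}$ is closed under products and inverses) and your identification of the 3/5-cycle obstruction as the main hand-routed case analysis match the paper's \lem{new2rearrange} and \lem{35free} precisely.
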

    
    \begin{theorem}\label{thm:10steps}
        For $n\geq 10$, any $\sigma\in A_{\zoton}$ has even block depth $10$.
    \end{theorem}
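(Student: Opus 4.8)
The plan is to mirror the two-stage strategy behind \thm{7steps}, but with every block forced to be concurrently even, and to pay for the resulting parity constraints with a few extra blocks. As there, I would first reduce an arbitrary $\sigma\in A_{\zoton}$ to a controlled RBF using a constant number of \emph{even} blocks (the even analogue of \prop{new1}), then synthesize the controlled RBF from even blocks (the even analogue of \prop{new2}), and finally merge the shared interface block so that, if the two stages use $a$ and $b$ even blocks, the total is $a+b-1=10$.

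The first thing I would pin down is a parity dictionary. For a CCRBF $\tau\in SC_{\zoton}^{(i)}$, the restriction $\tau|_{-i}$ acts identically on the two slices $\bm x_i=0$ and $\bm x_i=1$, so each length-$k$ cycle of $\tau|_{-i}$ splits into two length-$k$ cycles of $\tau$; consequently $\tau$ is concurrently even exactly when $\tau|_{-i}$ has an even number of even-length cycles. This lets me read concurrent evenness directly off the cycle pattern, which is precisely the data the block-depth-$7$ construction already manipulates. The task then becomes to rebuild each step of that construction so that every block we produce has an even number of even-length cycles, and to quantify the overhead this forces.

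The core technical step is the even analogue of the cycle-pattern realization lemma (the statement that two blocks realize any pattern free of $3/5$-cycles). Under the evenness constraint each block must individually carry an even number of even-length cycles, and this cannot be patched after the fact: a product of concurrently even blocks sharing a single control bit is again concurrently even, so a lone concurrently odd block is a genuine obstruction rather than a bookkeeping artifact. I would therefore redo the realization with evenness as a hard constraint, showing that any admissible pattern is achievable by even blocks once one allows a bounded number of auxiliary even blocks on auxiliary control bits; the extra room guaranteed by $n\ge 10$ is what makes these auxiliaries available while keeping all intermediate patterns free of the forbidden small cycles. Counting the auxiliaries across both stages, and merging the interface block as in \thm{7steps}, is what turns $7$ into $10$.

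The hardest part will be the parity bookkeeping itself. The two stages impose concurrent-evenness constraints independently, yet they interact through the shared interface block and through the fact that blocks act on different control bits, so a correction that repairs one block may spoil another. I expect the main obstacle to be proving that these corrections always localize — that each can be realized as $O(1)$ extra even blocks or absorbed into a neighbor — without triggering a cascade that exceeds $10$ blocks, and that the positional freedom inherited from \thm{7steps} is enough to route around the degenerate cases (small cycle patterns, or patterns where the only available correction is itself concurrently odd). Establishing this requires a detailed case analysis on the cycle patterns of the intermediate permutations, which is exactly why the even-block argument is much more delicate than its block-depth-$7$ skeleton even though the overall shape is identical.
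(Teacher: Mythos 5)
Your outline follows the same skeleton as the paper's actual proof: an even analogue of \prop{new1} (the paper's \prop{new1even}, costing $3$ even blocks), an even analogue of \prop{new2} (\prop{new2even}, costing $8$), and a merge of two adjacent $AC_{\zoton}^{(r_2)}$ blocks at the interface, so the budget is exactly your $a+b-1=3+8-1=10$. Your observation that a lone concurrently odd block is a genuine obstruction, to be paid for by auxiliary even blocks on other control bits, is precisely the paper's Lemma~\ref{Odd4}, which writes one concurrently odd CCRBF as four concurrently even ones on dimensions $r_3,r_2,r_1,r_2$; this expansion is what accounts for the $+3$ over \thm{7steps}. Your parity dictionary is also correct: $\tau$ is concurrently even iff $\tau|_{-i}$ has an even number of even-length cycles.

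However, there is a concrete missing idea that your ``parity bookkeeping'' paragraph does not identify, and that generic localized corrections cannot fix within budget: the parity of the conjugator. The second stage hinges on the factorization $\operatorname{diag}(f,g)=\operatorname{diag}(fh^{-1},fh^{-1})\cdot\operatorname{diag}(\id,\,hf^{-1}gh^{-1})\cdot\operatorname{diag}(h,h)$, and for the two outer blocks to be concurrently even you need $h$ itself to be even (the parity of $f$ is already spent on the odd correction $s$ handled by Lemma~\ref{Odd4}). An even-block realization of the cycle pattern only hands you \emph{some} $h$ with $h(f^{-1}gg')h^{-1}=\tau_1\tau_2$; if every such $h$ is odd, repairing it by your auxiliary-block device would cost several extra even blocks at two places and blow past $10$. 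The paper's fix is structural, not corrective: Lemma~\ref{e35a2} arranges the intermediate permutation to contain an \emph{even cycle} while eliminating $3/5$-cycles, so that its centralizer contains an odd element (a rotation of that cycle), whence the conjugator can be replaced by an even one at zero block cost (Lemma~\ref{evenh}). Relatedly, your ``redo the realization with evenness as a hard constraint'' is not unconditional: the paper's \lem{GcyclesEven1} and \lem{GcyclesEven2} require the pattern to have at least $12$ nontrivial cycles or one cycle of length at least $12$, and the degenerate patterns are handled by padding with two $13$-cycles built from concurrent fix-point pairs --- that padding is the actual use of $n\geq10$, not generic ``extra room.'' Without these two mechanisms (even-cycle engineering for the conjugator, and the large-pattern hypotheses plus padding for the even realization lemmas), your plan as stated cannot be closed at depth $10$.
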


    \begin{proof}[Proof sketch of  \thm{7steps}.] To prove \thm{7steps}, we first turn $\sigma$ into an even CRBF by \prop{new1}; then further break the even CRBF down into identity by \prop{new2}.
    We achieve these two steps with $3$ and $5$ blocks respectively.
    By a finer analysis, the last block of the first step and the first block of the second step can be merged. Thus a $7$-block implementation is obtained. The sketch of the whole process is depicted in \fig{process}. 
    \end{proof}
    
    The proof of \thm{10steps} is similar. Before \sec{EvenBlock}, we only focus on the proof of block depth $7$.
    
    \prop{new1} states that we can transform an even $n$-bit RBF to an even CRBF by $3$ CCRBFs with many choices. 
   
    \begin{proposition}\label{prop:new1}
        For $n\geq 4,r_1\in[n]$ and $\sigma\in A_{\zoton}$, there exist at leasts $(n-2)$ different $r_2\in[n]\backslash\{r_1\}$ such that $\sigma\pi_1\sigma_1\pi_2\in A_{\zoton}^{(r_1)} $ holds for some $\sigma_1\in SC_{\zoton}^{(r_1)},\pi_1,\pi_2\in SC_{\zoton}^{(r_2)}$.  
    \end{proposition}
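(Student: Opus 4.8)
The plan is to reduce the statement to a purely combinatorial routing problem and then solve it with a colored-graph argument. After relabelling coordinates I may assume $r_1=n$, and I write $H_0=\{\bm x\in\zoton : \bm x_{r_1}=0\}$ and $H_1=\{\bm x : \bm x_{r_1}=1\}$ for the two halves cut out by bit $r_1$. Every element of $SC_{\zoton}^{(j)}$ acts as two parallel copies of a single $(n-1)$-bit permutation, hence has sign $+1$; since $\sigma\in A_{\zoton}$ is also even, the product $\sigma\pi_1\sigma_1\pi_2$ is automatically even. Consequently membership in $A_{\zoton}^{(r_1)}$ is equivalent to the single requirement that $\sigma\pi_1\sigma_1\pi_2$ \emph{fix bit $r_1$}, i.e. preserve each of $H_0,H_1$, and evenness comes for free.

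First I would rewrite this as a routing condition. Applying the blocks in the order $\sigma,\pi_1,\sigma_1,\pi_2$, the product fixes bit $r_1$ exactly when $\rho:=\pi_2\sigma_1\pi_1$ sends $P:=\sigma(H_0)$ onto $H_0$ (the complement $H_1$ is then handled automatically, both halves having size $2^{n-1}$). Two structural facts drive everything: $\sigma_1\in SC_{\zoton}^{(r_1)}$ fixes bit $r_1$ and permutes the $r_1$-pairs $\{\bm w,\bm w^{\oplus r_1}\}$, while $\pi_1,\pi_2\in SC_{\zoton}^{(r_2)}$ fix bit $r_2$ and send each $r_2$-pair $\{\bm w,\bm w^{\oplus r_2}\}$ to an $r_2$-pair. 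Because $\pi_2$ fixes bit $r_2$, the set $\pi_2^{-1}(H_0)$ is a union of $r_2$-pairs; conversely, for \emph{any} set $W$ of size $2^{n-1}$ that is a union of $r_2$-pairs one can build $\pi_2\in SC_{\zoton}^{(r_2)}$ with $\pi_2(W)=H_0$. Thus the problem collapses to: choose $\pi_1\in SC_{\zoton}^{(r_2)}$ and $\sigma_1\in SC_{\zoton}^{(r_1)}$ so that $W:=\sigma_1\pi_1(P)$ is a union of $r_2$-pairs.

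Next I would turn this into a pairing/colored-graph problem over the quadruples spanned by bits $r_1,r_2$. For a \emph{fixed} $\pi_1$, classify every $r_1$-pair by its ``type'', namely which of its two points lie in $\pi_1(P)$; since $\sigma_1$ permutes the $r_1$-pairs freely while fixing bit $r_1$, one can choose $\sigma_1$ making $W$ an $r_2$-union precisely when equal types can be matched two-by-two into $r_2$-pairs, i.e. when every type class has even size. A short count shows this amounts to exactly two parity conditions: $|\pi_1(P)\cap H_0|$ even and the number of $r_1$-pairs fully contained in $\pi_1(P)$ even. The remaining freedom, the choice of $\pi_1$ (which fixes bit $r_2$ but may flip bit $r_1$), is then spent to enforce these parities; I would record the relevant incidences of $P$ in a multigraph on the columns obtained by deleting bits $r_1,r_2$, with edges colored by the $(r_1,r_2)$-values, so that a good $\pi_1,\sigma_1$ correspond to a suitable matching / even substructure, the hypothesis $n\ge 4$ guaranteeing enough room to carry out the adjustment.

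The main obstacle, and where essentially all the work lies, is the dependence on $r_2$. The two parity conditions are not independent of which coordinate plays the role of $r_2$, and after $\pi_1$ is used to repair the linear condition, what survives is a single $\mathbb{Z}/2$ invariant of the pair $(\sigma,r_2)$ (coming from the quadratic ``full $r_1$-pair'' count) that can obstruct the matching. The crux is therefore to prove that this invariant vanishes for all but at most one value of $r_2\in[n]\setminus\{r_1\}$ — equivalently, that at most one coordinate need be reserved — which is exactly what yields the claimed $n-2$ admissible choices; once a good $r_2$ is fixed, $\pi_1,\sigma_1,\pi_2$ are read off directly from the matching. The fact that the number of blocks cannot be reduced below $3$ is a separate lower-bound remark and is not needed here.
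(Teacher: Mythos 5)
Your reduction is correct and, despite the different vocabulary, it is essentially the paper's own argument: your trace-types of $r_2$-pairs are the paper's black/white card types, your observation that $\sigma_1$ freely permutes $r_1$-pairs while preserving their trace is exactly how the paper matches cards into quadruples, and your $\pi_2$-step (any union of $2^{n-2}$ $r_2$-pairs can be routed onto $H_0$) is the final whitening step. The two parity conditions you extract for the $\sigma_1$-matching ($|\pi_1(P)\cap H_0|$ even and the number of full $r_1$-pairs even) are equivalent to the paper's conditions in \lem{goodcase}, and the claim that at most one $r_2$ is bad is the correct shape of the conclusion.

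However, there is a genuine gap exactly at what you yourself call the crux. First, your residual obstruction is mis-described: the unsolvable $(r_1,r_2)$-configurations are \emph{not} the nonvanishing locus of a single $\mathbb{Z}/2$ invariant of $(\sigma,r_2)$. They split into two families (\lem{badcase}/\lem{cannot}): (i) no monochromatic $r_2$-pairs and $b_1+a_2$ odd, which is indeed a parity invariant (the paper's $\xi$), and (ii) exactly two monochromatic $r_2$-pairs with all mixed pairs of a single orientation ($\min\{b_1+a_2,\,a_1+b_2\}=0$), which is a degeneracy condition that the paper rules out with a \emph{mod-4} invariant $\eta$, not a parity; so the plan ``show the $\mathbb{Z}/2$ invariant vanishes for all but one $r_2$'' does not have the right target. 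Second, and more importantly, the statement that the obstruction can occur for at most one $r_2$ is asserted but never argued, and it is precisely the substantive content of the proposition. The paper proves it by an explicit dimension-switching computation: assuming $(r_1,r_2)$ is bad, it recomputes the pair-type counts $\hat a_i,\hat b_i$ in the $(r_1,r_3)$-cuboid for an arbitrary $r_3$ and verifies they always land in the solvable cases of \lem{goodcase} (e.g., $\hat b_3=\hat b_4=2^{n-3}$, or $\hat b_3=2^{n-3},\hat b_4=2^{n-3}-1$ in family (ii), and in family (i) a counting argument producing a vertex whose color agrees with its $r_3$-neighbour, forcing $\hat a_3+\hat a_4+\hat b_3+\hat b_4>0$ and then $\hat b_1,\hat b_2>0$ in the threshold case). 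Nothing in your sketch substitutes for this computation, so as written the bound of $n-2$ admissible choices of $r_2$ is unproven. (A minor point in the same vein: you should also justify that $\pi_1$ can always repair the linear parity --- true, since if a mixed and a monochromatic $r_2$-pair coexist one can swap them across faces, and otherwise that parity is automatically even --- but this is easy compared to the missing switching lemma.)
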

     
    In addition, we also show the tightness of \prop{new1}  by \lem{new1tight} in \sec{colored graph}. It is also worth noting that the proof works for $\sigma\in S_{\zoton}$ (with $\sigma\pi_1\sigma_1\pi_2\in S_{\zoton}^{(r_1)}$) as well. For our purpose, it is more convenient to state it as \prop{new1}.
     
    \prop{new2} states that we can recover any even $n$-bit CRBF by $5$ CCRBFs. 
      
    \begin{proposition}\label{prop:new2}
        For $n\geq 6,r_1\in[n], r_2,r_3,r_4\in[n]\backslash\{r_1\},r_3\neq r_4$ and $\sigma\in A_{\zoton}^{(r_1)}$, there exist $\pi_1\in SC_{\zoton}^{(r_2)}$,
        $\sigma_1,\sigma_2\in SC_{\zoton}^{(r_1)}, \tau_1\in SC_{\zoton}^{(r_3)},\tau_2\in SC_{\zoton}^{(r_4)}$ such that $\sigma\pi_1\sigma_1\tau_1\tau_2\sigma_2=\id$.
    \end{proposition}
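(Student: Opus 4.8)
The plan is to pass to the slice picture of the fixed bit. Taking $r_1=1$ for readability, any $\sigma\in A_{\zoton}^{(r_1)}$ is a pair $(f_0,f_1)$ of $(n-1)$-bit permutations acting on the slices $\bm x_{r_1}=0$ and $\bm x_{r_1}=1$, and $\sigma$ being even means exactly $\mathrm{sgn}(f_0)=\mathrm{sgn}(f_1)$. The first step is to read off a rigid constraint from the target identity. Since $\sigma_2\in SC_{\zoton}^{(r_1)}$, the relation $\sigma\pi_1\sigma_1\tau_1\tau_2=\sigma_2^{-1}$ forces $\sigma\pi_1\sigma_1\tau_1\tau_2$ to be a concurrent block on $r_1$, i.e. a diagonal pair $(v,v)$. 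If I take $\pi_1$ to be a \emph{special} block that fixes both $r_1$ and $r_2$ (so it acts only on the remaining $n-2$ coordinates, concurrently in $r_2$), then $\sigma\pi_1$ stays in $A_{\zoton}^{(r_1)}$, and so does $P:=\sigma\pi_1\sigma_1$ after the diagonal block $\sigma_1=(g,g)$. Writing $P=(u_0,u_1)$, the proposition collapses to the following synchronization task: choose $\tau_1\in SC_{\zoton}^{(r_3)}$ and $\tau_2\in SC_{\zoton}^{(r_4)}$ so that $P\tau_1\tau_2$ is diagonal; the last block $\sigma_2$ is then whatever that diagonal dictates.

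Because both $P$ and the desired diagonal fix bit $r_1$, the product $\tau_1\tau_2$ must itself be controlled on $r_1$, with slices $(s,ds)$, where $d:=u_1^{-1}u_0$ is the relative permutation of $P$ and $s$ is completely free (it is absorbed into $v$, hence into $\sigma_2$). So the heart of the matter is a two-block realization statement: a product of one block on $r_3$ and one block on $r_4$ that happens to be controlled on $r_1$ can be made to have any prescribed relative permutation in an admissible class, with a convenient choice of free slice. This is exactly where the partial result highlighted in the introduction enters---two blocks on distinct bits formulate any even cycle pattern free of $3/5$-cycles---applied inside the $r_1$-slices and upgraded from ``cycle pattern'' to an exact controlled permutation using the conjugation freedom below.

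Tracking the available freedoms, the special block contributes $P\mapsto (K_0f_0,K_1f_1)$ for permutations $K_0,K_1$ that fix bit $r_2$ and are concurrent there, while $\sigma_1=(g,g)$ multiplies both slices on the right by $g$. Consequently $d=g^{-1}(f_1^{-1}K_1^{-1}K_0f_0)g$: the block $\sigma_1$ conjugates $d$ into canonical position for the two-block construction, while the special block reshapes the cycle type of $d$ through $K:=K_1^{-1}K_0$. Parity is automatic here: each $K_a$ is an even permutation (it is a single $(n-2)$-bit permutation duplicated across the two values of bit $r_2$), so $\mathrm{sgn}(d)=\mathrm{sgn}(f_0)\mathrm{sgn}(f_1)=1$ and $\tau_1\tau_2=(s,ds)$ is even, as every block must be. Thus the free slice $s$ (through $\sigma_2$) and the conjugator $g$ (through $\sigma_1$) pin down the exact controlled permutation, once the cycle type of $d$ has been made admissible.

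The main obstacle is precisely the $3/5$-cycle restriction. As the introduction stresses, two blocks cannot build a single $3$- or $5$-cycle, so the synchronization jams whenever the relative permutation $d$ retains such short cycles. The crux of the proof is therefore to show that, for $n\geq 6$, the admittedly restricted freedom in the special block---$K$ ranges only over the concurrent subgroup on $r_2$, isomorphic to the symmetric group on $\{0,1\}^{n-2}$---is always enough to drive $f_1^{-1}Kf_0$ to a cycle type free of $3/5$-cycles. I expect this to require a careful, slightly case-heavy argument exploiting the abundance of room for $n\geq 6$, and it is the one place where the dimension bound and the ``inevitable'' $3/5$-cycle limitation of the two-block lemma genuinely bite. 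Everything else---the conjugation to canonical form, the choice of free slice, and the parity bookkeeping---is then routine.
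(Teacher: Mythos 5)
You have reconstructed the paper's proof almost step for step: the slice decomposition of $\sigma\in A_{\zoton}^{(r_1)}$ into a pair of $(n-1)$-bit permutations, the special block $\pi_1\in SC_{\zoton}^{(r_2)}$ whose $r_1$-slices fix $r_2$ and are concurrent there (the paper simply takes your $K_0=\id$ and $K_1=g'$, supported on the $r_1=1$ slice), the diagonal conjugator $\sigma_1=\mathrm{diag}(h^{-1},h^{-1})$, the realization of the relative permutation $d$ by $\tau_1,\tau_2$ acting only inside the $r_1=1$ slice --- so that each is simultaneously controlled on $r_1$ and concurrent on $r_3$, respectively $r_4$, exactly the paper's ``second observation'' --- and the absorption of the leftover diagonal into $\sigma_2$. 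Your parity bookkeeping also matches: concurrent blocks are even, so $d$ remains even and the two-block cycle-pattern realization applies; citing that result as a black box is legitimate, since it is the separately stated and proved \prop{new2sum}, and upgrading from cycle pattern to the exact controlled permutation via the conjugation freedom in $h$ is precisely how the paper proceeds.

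The one genuine gap is the step you explicitly defer: that the concurrent freedom in the special block always suffices to drive $d$ to a cycle type free of 3/5-cycles. The paper proves this as \lem{new2rearrange}, and it is a short greedy merging argument rather than the case-heavy analysis you anticipate. While some 3- or 5-cycle $\mathscr C_1$ survives, oddness of its length guarantees a point $u\in\mathscr C_1$ whose flip $v=u^{\oplus r_2}$ along the concurrent dimension of the slice lies in a different cycle $\mathscr C_2$; one picks a concurrent pair $(t,s)$ avoiding the set $T$ consisting of $\mathscr C_1$ together with the points of $\mathscr C_2$ within cyclic distance $5$ of $v$ (so $|T|\leq 16$, and the counting that such a pair exists is exactly where the hypothesis $n-1\geq 5$, i.e.\ $n\geq 6$, bites), and multiplies by the concurrent double transposition $(u,t)(v,s)$. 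A four-case check on whether $t,s$ lie in $\mathscr C_2$ shows that $\mathscr C_1$ is always merged away and any cycle that splits leaves both pieces of length at least $6$, so the number of cycles of length at most $5$ strictly decreases and the process terminates. Without this lemma or a substitute, the ``crux'' you flag remains open; supplying an argument of this form completes your proposal into a full proof equivalent to the paper's.
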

    
    The key to the proof of \prop{new2} is the following proposition, which states two $n$-bit CCRBFs can formulate the cycle pattern of any even $n$-bit permutation free of 3/5-cycle. We believe this proposition has some individual interest.
    
    \begin{proposition}\label{prop:new2sum}
        For $n\geq 4$, distinct $r_1,r_2\in[n]$ and $\sigma\in A_{\zoton}$ free of 3/5-cycle, there exist $\pi\in SC_{\zoton}^{(r_1)},\tau\in SC_{\zoton}^{(r_2)}$ such that $\pi\tau$ and $\sigma$ have the same cycle pattern, which is equivalent to that $h\sigma h^{-1}=\pi\tau$ holds for some $h\in S_{\zoton}$.
    \end{proposition}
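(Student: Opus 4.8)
The plan is to first turn the algebraic statement into a concrete routing problem. Fix coordinates so that a point $\bm x\in\zoton$ is written $(\alpha,\beta,\bm z)$, where $\alpha=\bm x_{r_1}$, $\beta=\bm x_{r_2}$, and $\bm z\in\{0,1\}^{n-2}$ collects the remaining bits. Because $\pi\in SC_{\zoton}^{(r_1)}$ fixes bit $r_1$ and acts identically on both $r_1$-slices, it is determined by a single permutation $f\in S_{\{0,1\}^{n-1}}$ of the pairs $(\beta,\bm z)$; likewise $\tau\in SC_{\zoton}^{(r_2)}$ is determined by a permutation $g\in S_{\{0,1\}^{n-1}}$ of the pairs $(\alpha,\bm z)$, and every such $f,g$ is admissible. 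Writing out $w:=\pi\tau$ (which applies $\tau$ first) gives a ``two--head'' rule: from $(\alpha,\beta,\bm z)$, first apply $g$ to $(\alpha,\bm z)$ to obtain $(\alpha',\bm z_1)$, then apply $f$ to $(\beta,\bm z_1)$ to obtain $(\beta',\bm z_2)$, and output $(\alpha',\beta',\bm z_2)$; the two channels interact only through the shared memory $\bm z$. Since in $S_{\zoton}$ two permutations have the same cycle pattern exactly when they are conjugate, the final clause of the statement is automatic, and it suffices to construct $f,g$ so that $w$ has the prescribed multiset of cycle lengths.

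Next I would set up a gadget calculus. If $Z'\subseteq\{0,1\}^{n-2}$ and both $f,g$ are chosen to permute only the rows/columns whose $\bm z$-part lies in $Z'$, then every intermediate $\bm z_1,\bm z_2$ stays in $Z'$, so the block of all points with $\bm z\in Z'$ is $w$-invariant. Such blocks are disjoint unions of the four-point ``quads'' $\{(\alpha,\beta,\bm z):\alpha,\beta\in\{0,1\}\}$, they compose as direct sums, and each carries an \emph{even} permutation (the restrictions of $\pi,\tau$ remain doubled, hence even) on a number of points divisible by $4$. A direct check shows that a single quad can produce only $1$- and $2$-cycles, so longer cycles must be threaded through several quads by letting $f,g$ move $\bm z$; for example two quads already realize a pair of $4$-cycles, and more generally a ``spiral'' through $k$ quads yields cycles whose length grows with $k$.

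With this calculus the proof becomes a packing argument followed by a realizability lemma. Given $\sigma\in A_{\zoton}$ free of $3/5$-cycles, I would partition its cycles into groups, each an even permutation whose total length is divisible by $4$, and assign each group to a disjoint family of quads. Such a partition exists because $\sigma$ is even and $2^n\equiv 0\pmod 4$ (so both the number of even-length and of odd-length cycles are even, and $\sigma$'s fixed points can be used to adjust residues modulo $4$); moreover we may take the groups to be \emph{indecomposable}, hence containing only boundedly many cycles. The core lemma is then that every such indecomposable even group whose cycle lengths avoid $3$ and $5$ is realizable as $w$ on a bounded quad-family: concretely, a single odd-length cycle $\ell\notin\{3,5\}$ padded by fixed points, a pair of even-length cycles (each avoiding $3,5$) padded by fixed points, and the small combined shapes such as $\{7,9\}$ that cannot be split further. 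Direct-summing these gadgets over disjoint quad-families then produces the full target pattern.

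The main obstacle is exactly this realizability lemma, and inside it the breaking of the doubling symmetry. The two-head dynamics naturally emit cycles in equal-length pairs (as the two-quad example already shows), so realizing an odd multiplicity of a given length, or a single prescribed cycle accompanied by only short fillers, forces symmetry-breaking gadgets and a case analysis organized by $\ell \bmod 4$ and by the small values of $\ell$. Lengths $3$ and $5$ are precisely the values for which no such gadget exists --- the same phenomenon behind the companion impossibility that two blocks cannot compose a single $3$- or $5$-cycle --- which is why the hypothesis excludes them. The residual bookkeeping, namely checking that $\sigma$'s cycles can always be grouped into even permutations of size divisible by $4$ that are themselves free of $3/5$-cycles, is routine but is where most of the casework resides.
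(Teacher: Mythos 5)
Your reduction and bookkeeping match the paper's proof skeleton almost exactly: restricting $\pi,\tau$ to a union of ``quads'' $\{0,1\}^2\times T$ gives invariant blocks that compose as direct sums; the block size is divisible by $4$ and the restricted permutation is even; and your grouping argument (number of even-length cycles is even because $\sigma$ is even, number of odd-length cycles is even because the lengths sum to $2^n$, and residue-$2$ packets come in even number since $2^n\equiv 0\pmod 4$) is sound and is precisely Stage I (Pairing) of the paper's proof of \prop{new2sum}. The observation that a single quad yields only $1$- and $2$-cycles, and that the conjugacy clause is automatic, are also correct.

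The genuine gap is that your ``core realizability lemma'' is the entire mathematical content of the proposition, and you leave it unproved. The paper's proof consists precisely of two explicit algorithms: \textsc{RPack}, which for a pair $(a,b)$ with $a+b\equiv 0\pmod 4$ exhibits $\pi\in SC_{\zoton}^{(r_1)}$ and $\tau\in SC_{\zoton}^{(r_2)}$ on a rectangle support via four separately drawn constructions ($a=b$; $a$ even; $a=1,b\geq 7$; $a,b$ odd and at least $5$), and \textsc{TPack}, which handles two packets each of residue $2\pmod 4$ by splitting the rectangle into two trapezoid-shaped supports with correspondingly modified diagrams. You correctly diagnose that symmetry-breaking gadgets are needed --- e.g., realizing a single odd cycle padded by a fixed point forces $\tau$ to contain $3$-cycles rather than doubled transpositions, as in \textsc{RPack}'s $a=1$ case --- but you exhibit none of them, and you offer no induction or counting argument in their place, so the existence claim for every $\ell\notin\{3,5\}$ and every residue class is asserted rather than established. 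Moreover, your guiding heuristic that ``the two-head dynamics naturally emit cycles in equal-length pairs'' is not a theorem (the $a$-even case of \textsc{RPack} produces a single $a,b$-cycle with $a\neq b$ from one connected gadget), so it cannot substitute for the case analysis. As it stands, the proposal is a correct plan whose crux --- the eight-odd explicit constructions that make the plan work --- is missing.
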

    
    The proof of \prop{new1} is in \sec{colored graph} and the proof of \prop{new2} and \prop{new2sum} are in \sec{new2}.

    \section{Transforming even \texorpdfstring{$n$}{Lg}-bit RBF to controlled RBF}\label{sec:colored graph}
    
    In this section, we give proof of \prop{new1}. 
    That is, we transform an even $n$-bit RBF $\sigma$ to an even CRBF using $3$ CCRBFs. 
    $\sigma$ may involve $2^n$ elements and have a complicated pattern. However, to transform $\sigma$ to a controlled RBF, which keeps one bit invariant, the key point is whether the $i$-th bit of $\sigma(\bm x)$ equals the $i$-th bit of $\bm x$. 
    So we simplify the representation of a RBF by constructing a black-white cuboid, where the color indicates whether $\sigma(\bm x)_i=\bm x_i$. Then proving \prop{new1} is equivalent to transforming the colored cuboid to white. An explicit example of the whole process of \prop{new1} can be seen in \sec{example}. 
    
    Recall that $n$-bit RBF is in fact a permutation on $\{0,1\}^n$.
    Specifically, we visualize the permutation on a $2\times 2\times 2^{n-2}$ $3$-d cuboid. In \sec{visualizing}, we give the construction for the black-white $3$-d cuboid corresponding to $\sigma$.  After that, in \sec{proof1}, we give a constructive proof to transform the colored cuboid to a white cuboid.
    
    \begin{proof}[Proof sketch of \prop{new1}]
        First we choose arbitrary two different $r_1,r_2\in [n]$ and construct a black-white cuboid. Then we transform the colored cuboid to a canonical form by $SC_{\zoton}^{(r_2)}$ using \lem{canonical}. We also prove in most cases, by \lem{goodcase}, the canonical form can be transformed to a white cuboid by $SC_{\zoton}^{(r_2)}$, $SC_{\zoton}^{(r_1)}$, $SC_{\zoton}^{(r_2)}$. 
        Finally, if the canonical form falls into a bad case, we prove for any $r_3\in[n]\backslash\{r_1,r_2\}$, by checking the new canonical form based on $r_1,r_3$, this case can be tackled with $SC_{\zoton}^{(r_3)}$, $SC_{\zoton}^{(r_1)}$, $SC_{\zoton}^{(r_3)}$ using \lem{badcase}. 
    \end{proof}

    \subsection{Visualizing a permutation on a \texorpdfstring{$3$}{Lg}-d cuboid}\label{sec:visualizing}
   
    Given permutation $\sigma\in S_{\zoton}$, in this section we construct a $3$-d black-white cuboid for $\sigma$ and discuss the effect of transformation, that is the new colored cuboid for $\sigma\tau$, $\tau\in S_{\zoton}$.
    
    Recall that $\sigma$ is a permutation over $2^n$ elements. Fixing $r_1,r_2\in [n]$ and compressing the other $(n-2)$ dimensions, we get a $3$-d cuboid. For example, if $n=4,r_1=1,r_2=2$,  then we compress the remaining two dimensions into one by letting the coordinates to be $00,01,10,11$. 
    We visualize $\sigma$ in \fig{sigmapattern}, where
    \begin{align*}
        \sigma:=&(1001,1100,0101)(1110,0110,0111,\\
        &1111)(1010,0010,0011,1011).
    \end{align*}

    As an example, $1100$ is labelled on $(1,1,00)$, where $00$ represents the third coordinate. The arrows in the figure stand for permutation $\sigma$. In this case, $\sigma(1100)=0101$, so we draw an arrow from $1100$ to $0101$.
    
    \begin{figure}[ht]
        \centering
        \scalebox{0.6}{\begin{tikzpicture}[>=stealth]
    \Large
    \draw[dashed] (0,0,0) -- (6,0,0);
    \draw[->] (6,0,0) -- (7,0,0) node [right] {\huge$\substack{\it other\\ \it dims}$};
    \draw[dashed] (0,0,0) -- (0,2,0);
    \draw[->] (0,2,0) -- (0,3,0) node [above] {$r_1$};
    \draw[->] (0,0,2) -- (0,0,3) node [below left] {$r_2$};

    \draw[dashed] (0,2,2) -- (0,0,2) -- (6,0,2) (6,0,0) -- (6,2,0) -- (0,2,0)
               (0,2,2) -- (6,2,2) -- (6,2,0) (6,2,2) -- (6,0,2);
    \foreach \x in {0,2,4,6} {
        \draw[dashed] (\x,2,0) -- (\x,2,2);
        \draw[dashed] (\x,0,2) -- (\x,2,2) (\x,0,0) -- (\x,2,0);
    }
    \foreach \x in {0,2,4,6} {
        \draw[dashed] (\x,0,0) -- (\x,0,2);
    }

    \large
    \node at (0.5,0.2) {$0000$}; \node at (2.5,0.2) {$0001$};
    \node at (4.5,0.2) {$0010$}; \node at (6.5,0.2) {$0011$};
    \node at (-0.45,2.3) {$1000$}; \node at (2,2.3) {$1001$};
    \node at (4,2.3) {$1010$}; \node at (6,2.3) {$1011$};
    \node at (-1.3,1.1) {$1100$}; \node at (-1+2,1.5) {$1101$};
    \node at (-1+4,1.5) {$1110$}; \node at (-1+6,1.5) {$1111$};
    \node at (-1.3,-0.7) {$0100$}; \node at (1.2,-1.1) {$0101$};
    \node at (1.2+2,-1.1) {$0110$}; \node at (1.2+4,-1.1) {$0111$};

    \draw[-Latex,shorten >= 5pt, shorten <= 5pt] (2,0,2) to (2,2,0);
    \draw[-Latex,shorten >= 5pt, shorten <= 5pt] (2,2,0) to (0,2,2);
    \draw[-Latex,shorten >= 5pt, shorten <= 5pt] (0,2,2) to (2,0,2);
    \foreach \z in {0,2}{
        \draw[-Latex,shorten >= 5pt, shorten <= 5pt] (4,0,\z) to [bend right=10] (6,0,\z);
        \draw[-Latex,shorten >= 5pt, shorten <= 5pt] (6,0,\z) to [bend right=15] (6,2,\z);
        \draw[-Latex,shorten >= 5pt, shorten <= 5pt] (6,2,\z) to [bend right=10] (4,2,\z);
        \draw[-Latex,shorten >= 5pt, shorten <= 5pt] (4,2,\z) to [bend right=15] (4,0,\z);
    }

\end{tikzpicture}}
        \caption{Visualize $\sigma$ on a 3-d cuboid}\label{fig:sigmapattern}
    \end{figure}
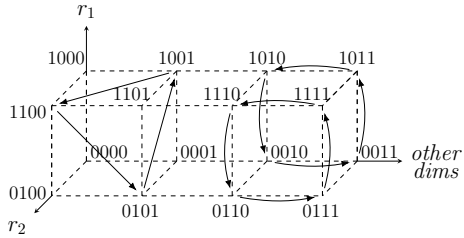
    
    The graph reflects both pattern and structure of the permutation. If we exert a CCRBF 
    \begin{align*}
        \tau=&(1010,1011,0011,0010)(1110,\\
         &1111,0111,0110)\in SC_{\zoto{4}}^{(2)}
    \end{align*}
    on $\sigma$, it will have the same effect on the front and back face of the cuboid, eliminating the two $4$-cycles. That is, the $3$-d cuboid corresponding to $\sigma\tau$ will only have a $3$-cycle.

    Back to \prop{new1}, here we aim to eliminate cycles which have overlap with both top and bottom face.
    To further simplify the notation, we transform the cuboid with arrow pattern into a cuboid with black-white colored nodes. That is, we paint coordinate $\bm x\in\{0,1\}^n$ black if $\sigma(\bm x)_{r_1}\not=\bm x_{r_1}$ as shown in \fig{sigmacolor}. Intuitively, the black node means that $\sigma(\bm x)$ is in a wrong face.
    
    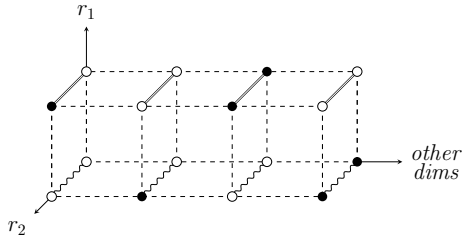
\begin{figure}[ht]
        \centering
        \scalebox{0.6}{\begin{tikzpicture}[>=stealth]
    \Large
    \draw[dashed] (0,0,0) -- (6,0,0);
    \draw[->] (6,0,0) -- (7,0,0) node [right] {\huge$\substack{\it other\\ \it dims}$};
    \draw[dashed] (0,0,0) -- (0,2,0);
    \draw[->] (0,2,0) -- (0,3,0) node [above] {$r_1$};
    \draw[->] (0,0,2) -- (0,0,3) node [below left] {$r_2$};

    \draw[dashed] (0,2,2) -- (0,0,2) -- (6,0,2) (6,0,0) -- (6,2,0) -- (0,2,0)
               (0,2,2) -- (6,2,2) -- (6,2,0) (6,2,2) -- (6,0,2);
    \foreach \x in {0,2,4,6} {
        \draw[double] (\x,2,0) -- (\x,2,2);
        \draw[dashed] (\x,0,2) -- (\x,2,2) (\x,0,0) -- (\x,2,0);
    }
    \foreach \x in {0,2,4,6} {
        \draw[zig] (\x,0,0) -- (\x,0,2);
    }
    \foreach \x/\y/\z in {0/0/0,2/0/0,4/0/0,0/2/0,2/2/0,6/2/0,0/0/2,4/0/2,2/2/2,6/2/2}{
        \whitedot{\x}{\y}{\z};
    }
    \foreach \x/\y/\z in {6/0/0,4/2/0,0/2/2,2/0/2,6/0/2,4/2/2}{
        \blackdot{\x}{\y}{\z};
    }
\end{tikzpicture}}
        \caption{Visualize $\sigma$ on a colored cuboid} \label{fig:sigmacolor}
    \end{figure}

    Now we consider the cuboid of $\sigma\pi$ with some permutation $\pi$. For example, if $\pi$ pushes $\bm x$ to the opposite face, the color of $\bm x$ in cuboid for $\sigma\pi$ will be the opposite of original $\bm \pi(x)$'s in cuboid for $\sigma$. 
    That is, assuming $\pi(\bm x)=\bm x'$ and $\bm x_{r_1}\neq \bm x'_{r_1}$, if $\sigma(\bm x')_{r_1}\neq \bm x'_{r_1}$, then $\sigma(\bm x')_{r_1}=\bm x_{r_1}$ (i.e., $\sigma(\pi(\bm x))_{r_1}=\bm x_{r_1}$), vice versa. An example is in \fig{sigmabeforepi} and \fig{sigmapi} for $\pi=(1100,0101)(1000,0001)\in SC_{\zoto{4}}^{(2)}$.

    \begin{figure}[ht]
        \centering
        \scalebox{0.6}{\begin{tikzpicture}[>=stealth]
    \Large
    \draw[dashed] (0,0,0) -- (6,0,0);
    \draw[->] (6,0,0) -- (7,0,0) node [right] {\huge$\substack{\it other\\ \it dims}$};
    \draw[dashed] (0,0,0) -- (0,2,0);
    \draw[->] (0,2,0) -- (0,3,0) node [above] {$r_1$};
    \draw[->] (0,0,2) -- (0,0,3) node [below left] {$r_2$};

    \draw[dashed] (0,2,2) -- (0,0,2) -- (6,0,2) (6,0,0) -- (6,2,0) -- (0,2,0)
               (0,2,2) -- (6,2,2) -- (6,2,0) (6,2,2) -- (6,0,2);
    \foreach \x in {0,2,4,6} {
        \draw[double] (\x,2,0) -- (\x,2,2);
        \draw[dashed] (\x,0,2) -- (\x,2,2) (\x,0,0) -- (\x,2,0);
    }
    \foreach \x in {0,2,4,6} {
        \draw[zig] (\x,0,0) -- (\x,0,2);
    }
    \foreach \x/\y/\z in {0/0/0,2/0/0,4/0/0,0/2/0,2/2/0,6/2/0,0/0/2,4/0/2,2/2/2,6/2/2}{
        \whitedot{\x}{\y}{\z};
    }
    \foreach \x/\y/\z in {6/0/0,4/2/0,0/2/2,2/0/2,6/0/2,4/2/2}{
        \blackdot{\x}{\y}{\z};
    }

    \draw[Latex-Latex,shorten >= 5pt, shorten <= 5pt] (2,0,0) to (0,2,0);
    \draw[Latex-Latex,shorten >= 5pt, shorten <= 5pt] (2,0,2) to (0,2,2);

\end{tikzpicture}}
        \caption{Colored cuboid for $\sigma$. Arrows refer to $\pi$.}\label{fig:sigmabeforepi}
        \scalebox{0.6}{\begin{tikzpicture}[>=stealth]
    \Large
    \draw[dashed] (0,0,0) -- (6,0,0);
    \draw[->] (6,0,0) -- (7,0,0) node [right] {\huge$\substack{\it other\\ \it dims}$};
    \draw[dashed] (0,0,0) -- (0,2,0);
    \draw[->] (0,2,0) -- (0,3,0) node [above] {$r_1$};
    \draw[->] (0,0,2) -- (0,0,3) node [below left] {$r_2$};

    \draw[dashed] (0,2,2) -- (0,0,2) -- (6,0,2) (6,0,0) -- (6,2,0) -- (0,2,0)
               (0,2,2) -- (6,2,2) -- (6,2,0) (6,2,2) -- (6,0,2);
    \foreach \x in {0,2,4,6} {
        \draw[double] (\x,2,0) -- (\x,2,2);
        \draw[dashed] (\x,0,2) -- (\x,2,2) (\x,0,0) -- (\x,2,0);
    }
    \foreach \x in {0,2,4,6} {
        \draw[zig] (\x,0,0) -- (\x,0,2);
    }
    \foreach \x/\y/\z in {0/0/0,2/0/2,4/0/0,0/2/2,2/2/0,6/2/0,0/0/2,4/0/2,2/2/2,6/2/2}{
        \whitedot{\x}{\y}{\z};
    }
    \foreach \x/\y/\z in {6/0/0,4/2/0,0/2/0,2/0/0,6/0/2,4/2/2}{
        \blackdot{\x}{\y}{\z};
    }
\end{tikzpicture}}
        \caption{Colored cuboid for $\sigma\pi$}\label{fig:sigmapi}
    \end{figure}

    Using colored cuboid, for some $\pi'$, the cuboid for $\sigma\pi'$ is white if and only if $\sigma\pi'\in S_{\zoton}^{(r_1)}$. To prove \prop{new1}, it suffices to show that we can transform any black-white cuboid into a white cuboid, using CCRBFs. 

    For simplicity, as shown in \fig{sigmapi}, we use a double line to connect $\bm x$ and $\bm x^{\oplus r_2}$ for all $\bm x$ with $\bm x_{r_1}=1$; and zigzag line to connect $\bm x$ and $\bm x^{\oplus r_2}$ for all $\bm x$ with $\bm x_{r_1}=0$. Let $a_1,a_2,a_3,a_4$ be the number of \tikz[thick]{\draw[double] (0,1,0) -- (0,1,1); \whitedot{0}{1}{0}; \blackdot{0}{1}{1};},
\tikz[thick]{\draw[double] (0,1,0) -- (0,1,1); \blackdot{0}{1}{0}; \whitedot{0}{1}{1};},
\tikz[thick]{\draw[double] (0,1,0) -- (0,1,1); \blackdot{0}{1}{0}; \blackdot{0}{1}{1};},
\tikz[thick]{\draw[double] (0,1,0) -- (0,1,1); \whitedot{0}{1}{0}; \whitedot{0}{1}{1};}
 and $b_1,b_2,b_3,b_4$ be the number of \tikz{\draw[zig] (0,1,0) -- (0,1,1); \whitedot{0}{1}{0}; \blackdot{0}{1}{1};},
\tikz{\draw[zig] (0,1,0) -- (0,1,1); \blackdot{0}{1}{0}; \whitedot{0}{1}{1};},
\tikz{\draw[zig] (0,1,0) -- (0,1,1); \blackdot{0}{1}{0}; \blackdot{0}{1}{1};},
\tikz{\draw[zig] (0,1,0) -- (0,1,1); \whitedot{0}{1}{0}; \whitedot{0}{1}{1};}
 respectively.

    \subsection{Transforming \texorpdfstring{$\sigma$}{Lg} to controlled  permutation}\label{sec:proof1}

    In this section, we transform the given permutation to CRBF. Following previous section, we construct a colored cuboid for $\sigma\in A_{\zoton}$ and calculate corresponding $a_i$'s, $ b_i$'s. According to $a_i$'s, $ b_i$'s, we transform $\sigma$ to $A_{\zoton}^{(r_1)}$ using \lem{goodcase} or \lem{badcase}. We also show the tightness of 3 steps by \lem{new1tight}. 

    Firstly we prove \lem{goodcase} to show most cases are solvable by $SC_{\zoton}^{(r_2)}$, $SC_{\zoton}^{(r_1)}$, $SC_{\zoton}^{(r_2)}$. Since the number of black nodes in lower and upper faces is the same, it is easy to see $a_3+a_4+b_3+b_4$ is even. 
    
    \begin{lemma}\label{lem:goodcase}
        There exist $\sigma_1\in SC_{\zoton}^{(r_1)}$ and $\pi_1,\pi_2\in SC_{\zoton}^{(r_2)}$ such that $\sigma\pi_1\sigma_1\pi_2\in A_{\zoton}^{(r_1)}$ if
        \begin{enumerate}
            \item[1.] $a_3+a_4+b_3+b_4>2$ holds or, 
            \item[2.] $a_3+a_4+b_3+b_4=2$ and $\min\{b_1+a_2,a_1+b_2\}>0$ hold or, 
            \item[3.] $a_3+a_4+b_3+b_4=0$ and $b_1+a_2$ is even (equivalently $a_1+b_2$ is even) hold. 
        \end{enumerate}
    \end{lemma}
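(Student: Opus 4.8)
The plan is to argue entirely with the black-white cuboid of \sec{visualizing} and to reduce the existence of a three-block whitening $\sigma\pi_1\sigma_1\pi_2\in A_{\zoton}^{(r_1)}$ to a single combinatorial feasibility question, which the three listed conditions should characterize exactly. First I would record how each block acts. The block $\pi_2\in SC_{\zoton}^{(r_2)}$ permutes the $2^{n-1}$ lines and flips \emph{both} endpoints of a line precisely when that line is moved between the top ($r_1=1$) and bottom ($r_1=0$) faces; hence the cuboid of $\sigma\pi_1\sigma_1\pi_2$ is all white if and only if, just before $\pi_2$, every line is \emph{monochromatic} (a $(W,W)$ line stays white, a $(B,B)$ line flips to white, a bichromatic line can never be whitened by $\pi_2$ alone). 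The equality of black top-node and black bottom-node counts, valid for every permutation-cuboid, then guarantees the resulting white lines redistribute $2^{n-2}$ per face. The middle block $\sigma_1\in SC_{\zoton}^{(r_1)}$ acts by the \emph{same} permutation on both faces, so it re-pairs the $2^{n-1}$ shared positions into lines freely but identically on top and bottom; thus after $\sigma_1$ one can make every line monochromatic iff the four ``vertical-pair'' types --- the colour-pairs (top, bottom) sitting at a common position --- all occur an even number of times. Since $\pi_1\in SC_{\zoton}^{(r_2)}$ may place any (possibly flipped) line in any slot and stack any top line over any bottom line, the whole lemma reduces to: \emph{can $\pi_1$ produce a stacking whose four vertical-pair multiplicities are all even?}

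To attack this I would encode each line by the vector in $\mathbb F_2^2$ recording blackness at its two $r_2$-endpoints, so $(W,B),(B,W),(B,B),(W,W)$ become $(0,1),(1,0),(1,1),(0,0)$. All four vertical-pair counts are even iff the number of black top-nodes is even and $n_{BB}:=\#\{\text{positions black over black}\}$ is even (the black bottom-node count is then automatically even, as $a_1+a_2+b_1+b_2$ is even). The first requirement is easily arranged by choosing which lines sit on top; the real content is controlling $n_{BB}\equiv\sum_{\text{stacks}}u(L)\cdot u(L')\pmod 2$. Swapping the bottom lines of two stacks changes this sum by $(u_1+u_2)\cdot(v_1+v_2)$, so the parity of $n_{BB}$ can be toggled exactly when the available vectors furnish a top difference and a bottom difference that are not $\mathbb F_2$-orthogonal.

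This is where the three cases arise. When $a_3+a_4+b_3+b_4=0$ every line is bichromatic, so the only vectors are $(0,1),(1,0)$, every nonzero difference equals $(1,1)$, and $(1,1)\cdot(1,1)=0$: the parity of $n_{BB}$ is rigid, and a short count shows it is forced to be $\equiv b_1+a_2\pmod 2$, giving all-even multiplicities exactly under Condition~3. A monochromatic line contributes a vector $(1,1)$ or $(0,0)$, which supplies a non-orthogonal difference and unlocks the toggle; two monochromatic lines suffice provided bichromatic lines of both classes are present to pair against them, which is precisely Condition~2's hypothesis $\min\{b_1+a_2,\,a_1+b_2\}>0$, whereas four or more monochromatic lines ($a_3+a_4+b_3+b_4>2$) leave enough slack to toggle unconditionally, which is Condition~1. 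In each feasible case I would then exhibit the explicit $\pi_1$ (the stacking realising even vertical-pair counts), the explicit $\sigma_1$ (the matching that pairs equal vertical pairs into lines), and the finishing $\pi_2$.

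I expect the main obstacle to be exactly this rigidity of the parity of $n_{BB}$: because the two $r_2$-slots of every stack are governed by one and the same line-matching, that parity cannot be adjusted freely, and breaking the rigidity demands a carefully budgeted supply of monochromatic lines. Pinning down the minimal budget is what produces the sharp thresholds of Conditions~1--3, and verifying that the face-assignment keeping the black top-node count even can always be realised simultaneously with the parity-toggling swap is the delicate bookkeeping step. The configurations left uncovered (two monochromatic lines all of one class, or all-bichromatic with $b_1+a_2$ odd) are genuinely unwhitenable for the chosen $r_1,r_2$, and are exactly what \lem{badcase} must subsequently dispatch by switching to a third coordinate.
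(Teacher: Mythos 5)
Your proposal is correct in substance but takes a genuinely different route from the paper. The paper's proof is constructive: it first normalizes the cuboid into a canonical form of A-, B-, C-cards by the explicit swapping algorithm of \lem{canonical} (all inside $SC_{\zoton}^{(r_2)}$), then disposes of each residual $(\alpha,\beta,\gamma)$ configuration matching Conditions 1--3 by explicitly pictured three-step permutations $\tau_1\in SC_{\zoton}^{(r_2)},\tau_2\in SC_{\zoton}^{(r_1)},\tau_3\in SC_{\zoton}^{(r_2)}$, and finally sets $\pi_1=\pi'\tau_1,\sigma_1=\tau_2,\pi_2=\tau_3$. You instead characterize exactly what each block can do --- $\pi_2$ whitens iff every $r_2$-line is monochromatic beforehand (the slot-count condition being automatic from black-top $=$ black-bottom); $\sigma_1$, acting by one common $(n-1)$-bit permutation on both faces, can achieve monochromaticity iff all four vertical-pair multiplicities are even; $\pi_1$ freely redistributes lines with a colour flip on face change --- and then solve an $\mathbb F_2$ feasibility problem for the parities of the black-top count and $n_{BB}$. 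I checked your key computations and they are right: in the all-bichromatic case $n_{BB}$ is rigidly $\equiv b_1+a_2\pmod 2$ (your two bichromatic classes are indeed invariant under face moves), the toggle increment $(u_1+u_2)\cdot(v_1+v_2)$ is correct, and with exactly two monochromatic lines the black-top parity forces both onto the same face, after which $\min\{b_1+a_2,a_1+b_2\}>0$ is precisely what allows you to fix the parity of $n_{BB}$ by choosing which bichromatic lines sit under them. What each approach buys: the paper's card machinery is immediately algorithmic and is reused verbatim for \lem{badcase} and the even-block variant (\prop{new1even}), while your framework explains \emph{why} the thresholds in Conditions 1--3 are what they are and delivers the converse statement (\lem{cannot}) almost for free from the same invariants --- arguably more simply than the paper's mod-$4$ argument, since the exact balance black-top $=$ black-bottom rules out the phantom configurations.

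Three small repairs before this becomes a full proof. First, ``the first requirement is easily arranged'' understates the interaction you later flag: with two monochromatic lines the even black-top count is a binding constraint (both on one face), and in Condition 1 with four or more monochromatic lines the fix is sometimes not a literal bottom-swap toggle but a choice of stacking (e.g., placing monochromatic lines over one another, which contributes $0$ to $n_{BB}$ mod $2$); these cases must be enumerated, exactly as the paper enumerates its six card subcases. Second, the evenness of $n_{WB}$ follows from black-bottom $=$ black-top (an exact equality for any permutation's cuboid), not merely from $a_1+a_2+b_1+b_2$ being even as you state. Third, in your closing remark the uncovered configuration should read ``two monochromatic lines with all bichromatic lines of one class'': the realizable $\min\{b_1+a_2,a_1+b_2\}=0$ configurations in fact carry one monochromatic line of each class, and it is the bichromatic stock that is one-sided. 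Finally, note that evenness of $\sigma\pi_1\sigma_1\pi_2$ is automatic since every CCRBF is even, so whitening indeed lands the product in $A_{\zoton}^{(r_1)}$.
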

  
    To give specific constructions, we first transform the colored cuboid to a \textit{canonical form} by \lem{canonical}. Then we classify them into different cases and solve case by case. 
    
    A \textit{canonical form} is a colored cuboid only containing $3$ kinds of matching pairs (``cards'') along the compressed dimensions, which are
        \tikz[thick,scale=0.7]{
            \draw[dashed] (0,2,0) -- (0,1,0) (0,2,1) -- (0,1,1);
            \draw[double] (0,2,0) -- (0,2,1); \whitedot{0}{2}{0}; \blackdot{0}{2}{1}; 
            \draw[zig] (0,1,0) -- (0,1,1); \blackdot{0}{1}{0}; \whitedot{0}{1}{1}; 
        },
        \tikz[thick,scale=0.7]{
            \draw[dashed] (0,2,0) -- (0,1,0) (0,2,1) -- (0,1,1);
            \draw[double] (0,2,0) -- (0,2,1); \whitedot{0}{2}{0}; \blackdot{0}{2}{1}; 
            \draw[zig] (0,1,0) -- (0,1,1); \whitedot{0}{1}{0}; \blackdot{0}{1}{1}; 
        },
        \tikz[thick,scale=0.7]{
            \draw[dashed] (0,2,0) -- (0,1,0) (0,2,1) -- (0,1,1);
            \draw[double] (0,2,0) -- (0,2,1); \whitedot{0}{2}{0}; \whitedot{0}{2}{1}; 
            \draw[zig] (0,1,0) -- (0,1,1); \whitedot{0}{1}{0}; \whitedot{0}{1}{1}; 
        }. 
    We call them A-card, B-card, and C-card; and the numbers of these three kinds are $\alpha,\beta,\gamma$ respectively.
    
    If $a_2+a_3\leq b_2+b_3$, we can use \lem{canonical} to transform the colored cuboid to a canonical form.
    
    \begin{lemma}\label{lem:canonical}
        If $a_2+a_3\leq b_2+b_3$, there exists $\pi\in SC_{\zoton}^{(r_2)}$ such that the colored cuboid for $\sigma\pi$ is of canonical form.
    \end{lemma}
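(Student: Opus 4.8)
The plan is to pass to the ``collapsed'' picture in which an element of $SC_{\zoton}^{(r_2)}$ becomes transparent, and then reduce the lemma to a short counting identity. First I would record the basic dictionary: a $\pi\in SC_{\zoton}^{(r_2)}$ is exactly an arbitrary permutation $\hat\pi$ of the $2^{n-1}$ matching pairs $\{\bm x,\bm x^{\oplus r_2}\}$ (indexed by the bits $[n]\setminus\{r_2\}$), acting identically on the two copies and fixing bit $r_2$. Applying such a $\pi$ on the right relocates each matching pair and, whenever the pair crosses between the top face ($\bm x_{r_1}=1$) and the bottom face ($\bm x_{r_1}=0$), flips the colors of both of its nodes. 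At the level of the four color types this flip acts by WB$\leftrightarrow$BW and BB$\leftrightarrow$WW, so every pair keeps its membership in one of the two flip-orbits $\{$WB,BW$\}$ and $\{$BB,WW$\}$.

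Grouping pairs by orbit and current face, I would define four flip-invariant classes: Type I (a top WB or a bottom BW, count $a_1+b_2$), Type II (top BW or bottom WB, count $a_2+b_1$), Type III (top BB or bottom WW, count $a_3+b_4$), and Type IV (top WW or bottom BB, count $a_4+b_3$). The point of this grouping is that each class can be placed, by choosing whether or not to cross faces, as exactly one admissible top color together with one admissible bottom color. Because $\sigma$ is a permutation, the number of black nodes on the two faces coincides, giving $a_1+a_2+2a_3=b_1+b_2+2b_3$; combined with $a_1+a_2+a_3+a_4=b_1+b_2+b_3+b_4=2^{n-2}$ this yields $a_3+b_4=a_4+b_3$, i.e. $|\mathrm{III}|=|\mathrm{IV}|$.

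Next I would read off the target counts forced by the canonical form. Since the canonical form contains no top BW, no top BB and no bottom BB, every Type II pair must land as bottom WB, every Type III as bottom WW, and every Type IV as top WW; this pins down the number of B-cards $\beta=a_2+b_1$ and of C-cards $\gamma=a_3+b_4=a_4+b_3$, and leaves the Type I pairs to fill the remaining slots, forcing the number of A-cards to be $\alpha=(b_2+b_3)-(a_2+a_3)$. Using the permutation identity above I would rewrite $a_1+b_2-a_2-b_1=2\bigl[(b_2+b_3)-(a_2+a_3)\bigr]$, which shows at once that this quantity is even and that $\alpha\ge 0$ is \emph{equivalent} to the hypothesis $a_2+a_3\le b_2+b_3$; a one-line sum then gives $\alpha+\beta+\gamma=2^{n-2}$.

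Finally I would assemble $\pi$ explicitly: send all Type II and Type III pairs to the bottom face, all Type IV pairs to the top face, and split the Type I pairs so that $\alpha+\beta$ of them sit on top (as WB) and $\alpha$ on the bottom (as BW). The face counts are $\alpha+\beta+\gamma=2^{n-2}$ on each side, and the flip rule makes every crossing pair acquire exactly the color its target slot demands, so a bijection of pairs realizing this assignment exists; pairing top WW with bottom WW, and top WB with bottom BW or bottom WB as prescribed, this bijection is a permutation $\hat\pi$, hence an element $\pi\in SC_{\zoton}^{(r_2)}$ for which the cuboid of $\sigma\pi$ is in canonical form. The main obstacle is entirely bookkeeping: getting the flip rule and the two permutation invariants exactly right so that the hypothesis $a_2+a_3\le b_2+b_3$ emerges precisely as the nonnegativity of $\alpha$. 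Once the counts are matched, existence of the realizing bijection is automatic.
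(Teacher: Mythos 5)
Your proposal is correct, and it reaches the lemma by a genuinely more direct route than the paper's. The shared core is the dictionary you set up at the start: $\pi\in SC_{\zoton}^{(r_2)}$ is an arbitrary permutation $\hat\pi$ of the $2^{n-1}$ matching pairs, preserving the $r_2$-coordinate inside each pair, and flipping both colors of a pair exactly when it crosses between the faces. From there the two arguments diverge. The paper's proof of \lem{canonical} is algorithmic: \alg{canonical} runs three rounds of greedy swaps (BB-top against BB-bottom, then mixed pairs against the surviving BB's, then top BW against bottom WB) until $a_2'=a_3'=b_3'=0$ and $a_4'=b_4'$, justifying feasibility at each line by the black-count balance, and only afterwards records the card counts $\alpha=\frac12(a_1-a_2+b_2-b_1)$, $\beta=b_1+a_2$, $\gamma=\frac12(a_3+a_4+b_3+b_4)$ as invariants of the procedure. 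You instead compute these counts \emph{a priori} from the four flip-orbit invariants $a_1+b_2$, $a_2+b_1$, $a_3+b_4$, $a_4+b_3$ together with the two permutation identities $a_1+a_2+2a_3=b_1+b_2+2b_3$ and $a_1+a_2+a_3+a_4=b_1+b_2+b_3+b_4=2^{n-2}$, check $\alpha+\beta+\gamma=2^{n-2}$, and exhibit $\hat\pi$ in one shot; your $\alpha=(b_2+b_3)-(a_2+a_3)$ agrees with the paper's formula via the identity $a_1+b_2-a_2-b_1=2\bigl[(b_2+b_3)-(a_2+a_3)\bigr]$, and your slot-counting (Type II $\to$ bottom WB, Type III $\to$ bottom WW, Type IV $\to$ top WW, Type I split $\alpha+\beta$ top / $\alpha$ bottom) matches exactly. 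Your version buys two things the paper leaves implicit: integrality of $\alpha$ is automatic, and the hypothesis $a_2+a_3\le b_2+b_3$ is revealed as precisely the condition $\alpha\ge0$, so the hypothesis is exactly tight rather than an artifact of a swap schedule. What the paper's formulation buys is a concrete sequence of transpositions (in line with its emphasis that the proofs compile into an efficient synthesis algorithm) and the intermediate normalization $a_2'=a_3'=b_3'=0$, $a_4'=b_4'$ that its exposition reuses. One cosmetic remark: since the cuboid for $\sigma\pi$ places at position $\bm x$ the (possibly flipped) color of $\pi(\bm x)$, your prescription ``send pair $p$ to slot $s$'' actually specifies $\hat\pi^{-1}$ rather than $\hat\pi$; as you only need existence of some $\pi\in SC_{\zoton}^{(r_2)}$, this is immaterial.
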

    \begin{proof}
        Recall that the color of a node $\bm x$ refers to whether $\sigma(\bm x)$ is in the correct face. So if coordinate $\bm x'$ is black and $\bm x'_{r_1}\neq\bm x_{r_1}$, then coordinate $\bm x$ will be white after swapping $\bm x$ and $\bm x'$, vice versa. 
        See \fig{sigmapi} as an example.
        
        We first apply $\tau\in SC_{\zoton}^{(r_2)}$ such that the cuboid for $\sigma\tau$ satisfies $a_4'=b_4'$, $a_2'=a_3'=b_3'=0$. Then we use $\tau'\in SC_{\zoton}^{(r_2)}$ to rearrange the nodes, such that the cuboid for $\sigma\tau\tau'$ is a canonical form.
        $\tau$ is achieved by the following algorithm.
        
        \begin{center}
\begin{algorithm}[ht]
    \footnotesize
    \caption{\small Canonical form (\textsc{Canonical})}\label{alg:canonical}
    \LinesNumbered
    \DontPrintSemicolon
    \KwIn{$\sigma\in S_{\zoton}$ and its colored cuboid}
    \KwOut{Canonical form of the cuboid}
    Swap 
    \tikz[thick]{\draw[double] (0,1,0) -- (0,1,1); \blackdot{0}{1}{0}; \blackdot{0}{1}{1};} with
    \tikz[thick]{\draw[zig] (0,1,0) -- (0,1,1); \blackdot{0}{1}{0}; \blackdot{0}{1}{1};} 
    until $a_3$ or $b_3$ reaches zero\;
    \If {$a_3=0$}{
        Swap \tikz[thick]{\draw[double] (0,1,0) -- (0,1,1); \blackdot{0}{1}{0}; \whitedot{0}{1}{1};} 
        (or
        \tikz[thick]{\draw[double] (0,1,0) -- (0,1,1); \whitedot{0}{1}{0}; \blackdot{0}{1}{1};}  
        )
        with
        \tikz[thick]{\draw[zig] (0,1,0) -- (0,1,1); \blackdot{0}{1}{0}; \blackdot{0}{1}{1};} 
        until $b_3$ reaches zero\; 
    }\Else{
        Swap 
        \tikz[thick]{\draw[double] (0,1,0) -- (0,1,1); \blackdot{0}{1}{0}; \blackdot{0}{1}{1};}
        with
        \tikz[thick]{\draw[zig] (0,1,0) -- (0,1,1); \blackdot{0}{1}{0}; \whitedot{0}{1}{1};} 
        (or
        \tikz[thick]{\draw[zig] (0,1,0) -- (0,1,1); \whitedot{0}{1}{0}; \blackdot{0}{1}{1};} 
        )
        until $a_3$ reaches zero\;
    }
    Swap 
    \tikz[thick]{\draw[double] (0,1,0) -- (0,1,1); \blackdot{0}{1}{0}; \whitedot{0}{1}{1};} with
    \tikz[thick]{\draw[zig] (0,1,0) -- (0,1,1); \blackdot{0}{1}{0}; \whitedot{0}{1}{1};}
    until $a_2$ reaches zero\;
\end{algorithm}
\end{center} 
        \vspace{-25pt}
        
        The correctness comes from the following observation. 
        Since the number of black nodes is the same in the top and bottom face, if there is a black node in one face, the opposite face has one as well. 
        Therefore, in line $3$ the number of
        \tikz[thick]{\draw[double] (0,1,0) -- (0,1,1); \blackdot{0}{1}{0}; \whitedot{0}{1}{1};} 
        and 
        \tikz[thick]{\draw[double] (0,1,0) -- (0,1,1); \whitedot{0}{1}{0}; \blackdot{0}{1}{1};}  
        is no fewer than 
        \tikz[thick]{\draw[zig] (0,1,0) -- (0,1,1); \blackdot{0}{1}{0}; \blackdot{0}{1}{1};};
        in line $6$ the number of
        \tikz[thick]{\draw[zig] (0,1,0) -- (0,1,1); \blackdot{0}{1}{0}; \whitedot{0}{1}{1};} 
        and
        \tikz[thick]{\draw[zig] (0,1,0) -- (0,1,1); \whitedot{0}{1}{0}; \blackdot{0}{1}{1};}  
        is no fewer than
        \tikz[thick]{\draw[double] (0,1,0) -- (0,1,1); \blackdot{0}{1}{0}; \blackdot{0}{1}{1};}.
        Since $a_2+a_3\leq b_2+b_3$, it can be verified when algorithm executes in line $8$, the number of
        \tikz[thick]{\draw[double] (0,1,0) -- (0,1,1); \blackdot{0}{1}{0}; \whitedot{0}{1}{1};}
        is no more than
        \tikz[thick]{\draw[zig] (0,1,0) -- (0,1,1); \blackdot{0}{1}{0}; \whitedot{0}{1}{1};} .
        After performing this algorithm, we have $a_2'=a_3'=b_3'=0$ and $a_4'=b_4'$. 
        
        Then we rearrange the nodes to form A-, B-, C-cards. Since $a_4'=b_4'$, by some permutation $\tau'\in SC_{\zoton}^{(r_2)}$, we can assure that the colored cuboid corresponding to  $\sigma\tau\tau'$ only has these three kind of cards.
        Thus let $\pi=\tau\tau'$, then the colored cuboid for $\sigma\pi$ is of canonical form.

        Since the number of 
          \tikz[thick]{\draw[double] (0,1,0) -- (0,1,1); \blackdot{0}{1}{0}; \whitedot{0}{1}{1};} and
                \tikz[thick]{\draw[zig] (0,1,0) -- (0,1,1); \whitedot{0}{1}{0}; \blackdot{0}{1}{1};} 
           is invariant in \alg{canonical}, as well as \tikz[thick]{\draw[double] (0,1,0) -- (0,1,1); \whitedot{0}{1}{0}; \blackdot{0}{1}{1};} and
                \tikz[thick]{\draw[zig] (0,1,0) -- (0,1,1); \blackdot{0}{1}{0}; \whitedot{0}{1}{1};},
        we have $\alpha=\frac12(a_1-a_2+b_2-b_1),\beta=b_1+a_2,\gamma=\frac12(a_3+a_4+b_3+b_4)$.
 
    \end{proof}
   
    Now we give the proof of \lem{goodcase}.

    \begin{proof}[Proof of \lem{goodcase}]\label{proof:goodcase}
    W.l.o.g, assume $a_2+a_3\leq b_2+b_3$. Using \lem{canonical}, we transform the colored cuboid to a canonical form with $\pi'\in SC_{\{0,1\}^n}^{(r_2)}$. Record the number of the 3 kind of cards, i.e., $\alpha,\beta,\gamma$. 
    
        \begin{figure}[ht]                  
            \centering
            \input{./pics/part1split00x.tex}  
            \input{./pics/part1split00x2.tex}
        \end{figure}

    First notice that if we pair two A-cards or two B-cards, the paired A-cards and B-cards can be transformed to C-cards by the following permutations where $\tau_1\in SC_{\zoton}^{(r_2)}$, $\tau_2\in SC_{\zoton}^{(r_1)}$, $\tau_3\in SC_{\zoton}^{(r_2)}$:

        This approach solves the $3^{\it rd}$ case directly
        and reduces the $1^{\it st}$ case to the following $3$ subcases.
        Since these card groups can be tackled in parallel, in final construction, $\pi_1=\pi'\tau_1$, $\sigma_1=\tau_2$, and $\pi_2=\tau_3$. 

        \begin{itemize}
            \item $\alpha=1,\beta=1,\gamma\geq 2$ : This graph shows how to tackle $1$ A-card and $1$ B-card with $2$ C-cards.
                \begin{center}
                    \scalebox{0.8}{\tikz{
    \Large
    \draw[dashed] (0,1,0) -- (0,0,0) (0,1,1) -- (0,0,1) (1,1,0) -- (1,0,0) (1,1,1) -- (1,0,1)
        (2,1,0) -- (2,0,0) (2,1,1) -- (2,0,1) (3,1,0) -- (3,0,0) (3,1,1) -- (3,0,1) 
        (0,1,0) -- (1,1,0) -- (2,1,0) -- (3,1,0) (0,1,1) -- (1,1,1) -- (2,1,1) -- (3,1,1) 
        (0,0,0) -- (1,0,0) -- (2,0,0) -- (3,0,0) (0,0,1) -- (1,0,1) -- (2,0,1) -- (3,0,1);
    \draw[double] (0,1,0) -- (0,1,1); \whitedot{0}{1}{0}; \blackdot{0}{1}{1}; 
    \draw[zig] (0,0,0) -- (0,0,1); \blackdot{0}{0}{0}; \whitedot{0}{0}{1};
    \draw[double] (1,1,0) -- (1,1,1); \whitedot{1}{1}{0}; \blackdot{1}{1}{1}; 
    \draw[zig] (1,0,0) -- (1,0,1); \whitedot{1}{0}{0}; \blackdot{1}{0}{1};
    \draw[double] (2,1,0) -- (2,1,1); \whitedot{2}{1}{0}; \whitedot{2}{1}{1}; 
    \draw[zig] (2,0,0) -- (2,0,1); \whitedot{2}{0}{0}; \whitedot{2}{0}{1};
    \draw[double] (3,1,0) -- (3,1,1); \whitedot{3}{1}{0}; \whitedot{3}{1}{1}; 
    \draw[zig] (3,0,0) -- (3,0,1); \whitedot{3}{0}{0}; \whitedot{3}{0}{1};

    \foreach \z in {0,1}{
        \draw[Latex-,shorten >= 5pt, shorten <= 5pt,very thin,yshift=0.7mm] (0,0,\z) to [bend left=10] (2,0,\z);
        \draw[Latex-,shorten >= 5pt, shorten <= 5pt,very thin,yshift=-0.7mm] (2,0,\z) to [bend left=10] (1,0,\z);
        \draw[Latex-,shorten >= 5pt, shorten <= 5pt,very thin,yshift=-0.7mm] (1,0,\z) to [bend left=10] (0,0,\z);
        \draw[Latex-Latex,shorten >= 5pt, shorten <= 5pt,very thin,yshift=0.7mm] (0,1,\z) to [bend left=10] (3,1,\z);
    }

    \node at (3.6,0.5,0) {$\xLongrightarrow{\tau_1}$};

    \draw[dashed] (4.6,1,0) -- (4.6,0,0) (4.6,1,1) -- (4.6,0,1) (5.6,1,0) -- (5.6,0,0) (5.6,1,1) -- (5.6,0,1)
        (6.6,1,0) -- (6.6,0,0) (6.6,1,1) -- (6.6,0,1) (7.6,1,0) -- (7.6,0,0) (7.6,1,1) -- (7.6,0,1) 
        (4.6,1,0) -- (5.6,1,0) -- (6.6,1,0) -- (7.6,1,0) (4.6,1,1) -- (5.6,1,1) -- (6.6,1,1) -- (7.6,1,1) 
        (4.6,0,0) -- (5.6,0,0) -- (6.6,0,0) -- (7.6,0,0) (4.6,0,1) -- (5.6,0,1) -- (6.6,0,1) -- (7.6,0,1);
    \draw[double] (4.6,1,0) -- (4.6,1,1); \whitedot{4.6}{1}{0}; \whitedot{4.6}{1}{1}; 
    \draw[zig] (4.6,0,0) -- (4.6,0,1); \whitedot{4.6}{0}{0}; \blackdot{4.6}{0}{1};
    \draw[double] (5.6,1,0) -- (5.6,1,1); \whitedot{5.6}{1}{0}; \blackdot{5.6}{1}{1}; 
    \draw[zig] (5.6,0,0) -- (5.6,0,1); \whitedot{5.6}{0}{0}; \whitedot{5.6}{0}{1};
    \draw[double] (6.6,1,0) -- (6.6,1,1); \whitedot{6.6}{1}{0}; \whitedot{6.6}{1}{1}; 
    \draw[zig] (6.6,0,0) -- (6.6,0,1); \blackdot{6.6}{0}{0}; \whitedot{6.6}{0}{1};
    \draw[double] (7.6,1,0) -- (7.6,1,1); \whitedot{7.6}{1}{0}; \blackdot{7.6}{1}{1}; 
    \draw[zig] (7.6,0,0) -- (7.6,0,1); \whitedot{7.6}{0}{0}; \whitedot{7.6}{0}{1};

    \foreach \y in {0,1}{
        \draw[Latex-Latex,shorten >= 5pt, shorten <= 5pt,very thin,yshift=0.7mm] (4.6,\y,0) to [bend left=10] (6.6,\y,0);
        \draw[Latex-Latex,shorten >= 5pt, shorten <= 5pt,very thin] (5.6,\y,1) to (7.6,\y,0);
    }
}}
\vspace{5pt}\\
\hspace{-5pt}\scalebox{0.8}{%
\tikz{
    \Large
    \node at (-1,0.5,0) {$\xLongrightarrow{\tau_2}$};

    \draw[dashed] (0,1,0) -- (0,0,0) (0,1,1) -- (0,0,1) (1,1,0) -- (1,0,0) (1,1,1) -- (1,0,1)
        (2,1,0) -- (2,0,0) (2,1,1) -- (2,0,1) (3,1,0) -- (3,0,0) (3,1,1) -- (3,0,1) 
        (0,1,0) -- (1,1,0) -- (2,1,0) -- (3,1,0) (0,1,1) -- (1,1,1) -- (2,1,1) -- (3,1,1) 
        (0,0,0) -- (1,0,0) -- (2,0,0) -- (3,0,0) (0,0,1) -- (1,0,1) -- (2,0,1) -- (3,0,1);
    \draw[double] (0,1,0) -- (0,1,1); \whitedot{0}{1}{0}; \whitedot{0}{1}{1}; 
    \draw[zig] (0,0,0) -- (0,0,1); \blackdot{0}{0}{0}; \blackdot{0}{0}{1};
    \draw[double] (1,1,0) -- (1,1,1); \whitedot{1}{1}{0}; \whitedot{1}{1}{1}; 
    \draw[zig] (1,0,0) -- (1,0,1); \whitedot{1}{0}{0}; \whitedot{1}{0}{1};
    \draw[double] (2,1,0) -- (2,1,1); \whitedot{2}{1}{0}; \whitedot{2}{1}{1}; 
    \draw[zig] (2,0,0) -- (2,0,1); \whitedot{2}{0}{0}; \whitedot{2}{0}{1};
    \draw[double] (3,1,0) -- (3,1,1); \blackdot{3}{1}{0}; \blackdot{3}{1}{1}; 
    \draw[zig] (3,0,0) -- (3,0,1); \whitedot{3}{0}{0}; \whitedot{3}{0}{1};

    \foreach \z in {0,1}{
        \draw[Latex-Latex,shorten >= 5pt, shorten <= 5pt,very thin] (0,0,\z) to (3,1,\z);
    }

    \node at (3.6,0.5,0) {$\xLongrightarrow{\tau_3}$};

    \draw[dashed] (4.6,1,0) -- (4.6,0,0) (4.6,1,1) -- (4.6,0,1) (5.6,1,0) -- (5.6,0,0) (5.6,1,1) -- (5.6,0,1)
        (6.6,1,0) -- (6.6,0,0) (6.6,1,1) -- (6.6,0,1) (7.6,1,0) -- (7.6,0,0) (7.6,1,1) -- (7.6,0,1) 
        (4.6,1,0) -- (5.6,1,0) -- (6.6,1,0) -- (7.6,1,0) (4.6,1,1) -- (5.6,1,1) -- (6.6,1,1) -- (7.6,1,1) 
        (4.6,0,0) -- (5.6,0,0) -- (6.6,0,0) -- (7.6,0,0) (4.6,0,1) -- (5.6,0,1) -- (6.6,0,1) -- (7.6,0,1);
    \draw[double] (4.6,1,0) -- (4.6,1,1); \whitedot{4.6}{1}{0}; \whitedot{4.6}{1}{1}; 
    \draw[zig] (4.6,0,0) -- (4.6,0,1); \whitedot{4.6}{0}{0}; \whitedot{4.6}{0}{1};
    \draw[double] (5.6,1,0) -- (5.6,1,1); \whitedot{5.6}{1}{0}; \whitedot{5.6}{1}{1}; 
    \draw[zig] (5.6,0,0) -- (5.6,0,1); \whitedot{5.6}{0}{0}; \whitedot{5.6}{0}{1};
    \draw[double] (6.6,1,0) -- (6.6,1,1); \whitedot{6.6}{1}{0}; \whitedot{6.6}{1}{1}; 
    \draw[zig] (6.6,0,0) -- (6.6,0,1); \whitedot{6.6}{0}{0}; \whitedot{6.6}{0}{1};
    \draw[double] (7.6,1,0) -- (7.6,1,1); \whitedot{7.6}{1}{0}; \whitedot{7.6}{1}{1}; 
    \draw[zig] (7.6,0,0) -- (7.6,0,1); \whitedot{7.6}{0}{0}; \whitedot{7.6}{0}{1};
}}

                \end{center}
            \item $\alpha=1,\beta=0,\gamma\geq 2$ : This graph shows how to tackle $1$ A-card with $2$ C-cards.
                \begin{center}
                    \scalebox{0.8}{\tikz{\Large
    \draw[dashed] (0,1,0) -- (0,0,0) (0,1,1) -- (0,0,1) (1,1,0) -- (1,0,0) (1,1,1) -- (1,0,1)
        (2,1,0) -- (2,0,0) (2,1,1) -- (2,0,1)
        (0,1,0) -- (1,1,0) -- (2,1,0) (0,1,1) -- (1,1,1) -- (2,1,1)
        (0,0,0) -- (1,0,0) -- (2,0,0) (0,0,1) -- (1,0,1) -- (2,0,1);
    \draw[double] (0,1,0) -- (0,1,1); \whitedot{0}{1}{0}; \blackdot{0}{1}{1}; 
    \draw[zig] (0,0,0) -- (0,0,1); \blackdot{0}{0}{0}; \whitedot{0}{0}{1};
    \draw[double] (1,1,0) -- (1,1,1); \whitedot{1}{1}{0}; \whitedot{1}{1}{1}; 
    \draw[zig] (1,0,0) -- (1,0,1); \whitedot{1}{0}{0}; \whitedot{1}{0}{1};
    \draw[double] (2,1,0) -- (2,1,1); \whitedot{2}{1}{0}; \whitedot{2}{1}{1}; 
    \draw[zig] (2,0,0) -- (2,0,1); \whitedot{2}{0}{0}; \whitedot{2}{0}{1};

    \foreach \z in {0,1}{
        \draw[Latex-Latex,shorten >= 5pt, shorten <= 5pt,very thin] (0,0,\z) to (2,1,\z);
        \draw[Latex-Latex,shorten >= 5pt, shorten <= 5pt,very thin] (1,1,\z) to (2,0,\z);
    }

    \node at (2.6,0.5,0) {$\xLongrightarrow{\tau_1}$};

    \draw[dashed] (3.7,1,0) -- (3.7,0,0) (3.7,1,1) -- (3.7,0,1) (4.7,1,0) -- (4.7,0,0) (4.7,1,1) -- (4.7,0,1)
        (5.7,1,0) -- (5.7,0,0) (5.7,1,1) -- (5.7,0,1) 
        (3.7,1,0) -- (4.7,1,0) -- (5.7,1,0) (3.7,1,1) -- (4.7,1,1) -- (5.7,1,1)
        (3.7,0,0) -- (4.7,0,0) -- (5.7,0,0) (3.7,0,1) -- (4.7,0,1) -- (5.7,0,1);
    \draw[double] (3.7,1,0) -- (3.7,1,1); \whitedot{3.7}{1}{0}; \blackdot{3.7}{1}{1}; 
    \draw[zig] (3.7,0,0) -- (3.7,0,1); \blackdot{3.7}{0}{0}; \blackdot{3.7}{0}{1};
    \draw[double] (4.7,1,0) -- (4.7,1,1); \blackdot{4.7}{1}{0}; \blackdot{4.7}{1}{1}; 
    \draw[zig] (4.7,0,0) -- (4.7,0,1); \whitedot{4.7}{0}{0}; \whitedot{4.7}{0}{1};
    \draw[double] (5.7,1,0) -- (5.7,1,1); \whitedot{5.7}{1}{0}; \blackdot{5.7}{1}{1}; 
    \draw[zig] (5.7,0,0) -- (5.7,0,1); \blackdot{5.7}{0}{0}; \blackdot{5.7}{0}{1};

    \foreach \y in {0,1}{
        \draw[Latex-Latex,shorten >= 5pt, shorten <= 5pt,very thin] (3.7,\y,0) to (5.7,\y,1);
    }
}}
\vspace{5pt}\\
\scalebox{0.8}{\tikz{\Large
    \node at (-1.1,0.5,0) {$\xLongrightarrow{\tau_2}$};

    \draw[dashed] (0,1,0) -- (0,0,0) (0,1,1) -- (0,0,1) (1,1,0) -- (1,0,0) (1,1,1) -- (1,0,1)
        (2,1,0) -- (2,0,0) (2,1,1) -- (2,0,1) 
        (0,1,0) -- (1,1,0) -- (2,1,0) (0,1,1) -- (1,1,1) -- (2,1,1)
        (0,0,0) -- (1,0,0) -- (2,0,0) (0,0,1) -- (1,0,1) -- (2,0,1);
    \draw[double] (0,1,0) -- (0,1,1); \blackdot{0}{1}{0}; \blackdot{0}{1}{1}; 
    \draw[zig] (0,0,0) -- (0,0,1); \blackdot{0}{0}{0}; \blackdot{0}{0}{1};
    \draw[double] (1,1,0) -- (1,1,1); \blackdot{1}{1}{0}; \blackdot{1}{1}{1}; 
    \draw[zig] (1,0,0) -- (1,0,1); \whitedot{1}{0}{0}; \whitedot{1}{0}{1};
    \draw[double] (2,1,0) -- (2,1,1); \whitedot{2}{1}{0}; \whitedot{2}{1}{1}; 
    \draw[zig] (2,0,0) -- (2,0,1); \blackdot{2}{0}{0}; \blackdot{2}{0}{1};

    \foreach \z in {0,1}{
        \draw[Latex-Latex,shorten >= 5pt, shorten <= 5pt,very thin,yshift=-0.3mm] (0,0,\z) to [bend left] (0,1,\z);
        \draw[Latex-Latex,shorten >= 5pt, shorten <= 5pt,very thin] (1,1,\z) to (2,0,\z);
    }

    \node at (2.6,0.5,0) {$\xLongrightarrow{\tau_3}$};

    \draw[dashed] (3.7,1,0) -- (3.7,0,0) (3.7,1,1) -- (3.7,0,1) (4.7,1,0) -- (4.7,0,0) (4.7,1,1) -- (4.7,0,1)
        (5.7,1,0) -- (5.7,0,0) (5.7,1,1) -- (5.7,0,1)
        (3.7,1,0) -- (4.7,1,0) -- (5.7,1,0) (3.7,1,1) -- (4.7,1,1) -- (5.7,1,1) 
        (3.7,0,0) -- (4.7,0,0) -- (5.7,0,0) (3.7,0,1) -- (4.7,0,1) -- (5.7,0,1);
    \draw[double] (3.7,1,0) -- (3.7,1,1); \whitedot{3.7}{1}{0}; \whitedot{3.7}{1}{1}; 
    \draw[zig] (3.7,0,0) -- (3.7,0,1); \whitedot{3.7}{0}{0}; \whitedot{3.7}{0}{1};
    \draw[double] (4.7,1,0) -- (4.7,1,1); \whitedot{4.7}{1}{0}; \whitedot{4.7}{1}{1}; 
    \draw[zig] (4.7,0,0) -- (4.7,0,1); \whitedot{4.7}{0}{0}; \whitedot{4.7}{0}{1};
    \draw[double] (5.7,1,0) -- (5.7,1,1); \whitedot{5.7}{1}{0}; \whitedot{5.7}{1}{1}; 
    \draw[zig] (5.7,0,0) -- (5.7,0,1); \whitedot{5.7}{0}{0}; \whitedot{5.7}{0}{1};
}}

                \end{center}
            \item $\alpha=0,\beta=1,\gamma\geq 2$ : This graph shows how to tackle $1$ B-card with $2$ C-card.
                \begin{center}                    
                    \scalebox{0.8}{\tikz{\Large
    \draw[dashed] (0,1,0) -- (0,0,0) (0,1,1) -- (0,0,1) (1,1,0) -- (1,0,0) (1,1,1) -- (1,0,1)
        (2,1,0) -- (2,0,0) (2,1,1) -- (2,0,1)
        (0,1,0) -- (1,1,0) -- (2,1,0) (0,1,1) -- (1,1,1) -- (2,1,1)
        (0,0,0) -- (1,0,0) -- (2,0,0) (0,0,1) -- (1,0,1) -- (2,0,1);
    \draw[double] (0,1,0) -- (0,1,1); \whitedot{0}{1}{0}; \blackdot{0}{1}{1}; 
    \draw[zig] (0,0,0) -- (0,0,1); \whitedot{0}{0}{0}; \blackdot{0}{0}{1};
    \draw[double] (1,1,0) -- (1,1,1); \whitedot{1}{1}{0}; \whitedot{1}{1}{1}; 
    \draw[zig] (1,0,0) -- (1,0,1); \whitedot{1}{0}{0}; \whitedot{1}{0}{1};
    \draw[double] (2,1,0) -- (2,1,1); \whitedot{2}{1}{0}; \whitedot{2}{1}{1}; 
    \draw[zig] (2,0,0) -- (2,0,1); \whitedot{2}{0}{0}; \whitedot{2}{0}{1};

    \foreach \z in {0,1}{
        \draw[Latex-Latex,shorten >= 5pt, shorten <= 5pt,very thin] (0,0,\z) to (2,1,\z);
        \draw[Latex-Latex,shorten >= 5pt, shorten <= 5pt,very thin] (1,1,\z) to (2,0,\z);
    }

    \node at (2.6,0.5,0) {$\xLongrightarrow{\tau_1}$};

    \draw[dashed] (3.7,1,0) -- (3.7,0,0) (3.7,1,1) -- (3.7,0,1) (4.7,1,0) -- (4.7,0,0) (4.7,1,1) -- (4.7,0,1)
        (5.7,1,0) -- (5.7,0,0) (5.7,1,1) -- (5.7,0,1) 
        (3.7,1,0) -- (4.7,1,0) -- (5.7,1,0) (3.7,1,1) -- (4.7,1,1) -- (5.7,1,1)
        (3.7,0,0) -- (4.7,0,0) -- (5.7,0,0) (3.7,0,1) -- (4.7,0,1) -- (5.7,0,1);
    \draw[double] (3.7,1,0) -- (3.7,1,1); \whitedot{3.7}{1}{0}; \blackdot{3.7}{1}{1}; 
    \draw[zig] (3.7,0,0) -- (3.7,0,1); \blackdot{3.7}{0}{0}; \blackdot{3.7}{0}{1};
    \draw[double] (4.7,1,0) -- (4.7,1,1); \blackdot{4.7}{1}{0}; \blackdot{4.7}{1}{1}; 
    \draw[zig] (4.7,0,0) -- (4.7,0,1); \whitedot{4.7}{0}{0}; \whitedot{4.7}{0}{1};
    \draw[double] (5.7,1,0) -- (5.7,1,1); \blackdot{5.7}{1}{0}; \whitedot{5.7}{1}{1}; 
    \draw[zig] (5.7,0,0) -- (5.7,0,1); \blackdot{5.7}{0}{0}; \blackdot{5.7}{0}{1};

    \foreach \y in {0,1}{
        \draw[Latex-Latex,shorten >= 5pt, shorten <= 5pt,very thin,yshift=0.7mm] (3.7,\y,0) to [bend left=10] (5.7,\y,0);
    }
}}
\vspace{5pt}\\
\scalebox{0.8}{\tikz{\Large
    \node at (-1.1,0.5,0) {$\xLongrightarrow{\tau_2}$};

    \draw[dashed] (0,1,0) -- (0,0,0) (0,1,1) -- (0,0,1) (1,1,0) -- (1,0,0) (1,1,1) -- (1,0,1)
        (2,1,0) -- (2,0,0) (2,1,1) -- (2,0,1) 
        (0,1,0) -- (1,1,0) -- (2,1,0) (0,1,1) -- (1,1,1) -- (2,1,1)
        (0,0,0) -- (1,0,0) -- (2,0,0) (0,0,1) -- (1,0,1) -- (2,0,1);
    \draw[double] (0,1,0) -- (0,1,1); \blackdot{0}{1}{0}; \blackdot{0}{1}{1}; 
    \draw[zig] (0,0,0) -- (0,0,1); \blackdot{0}{0}{0}; \blackdot{0}{0}{1};
    \draw[double] (1,1,0) -- (1,1,1); \blackdot{1}{1}{0}; \blackdot{1}{1}{1}; 
    \draw[zig] (1,0,0) -- (1,0,1); \whitedot{1}{0}{0}; \whitedot{1}{0}{1};
    \draw[double] (2,1,0) -- (2,1,1); \whitedot{2}{1}{0}; \whitedot{2}{1}{1}; 
    \draw[zig] (2,0,0) -- (2,0,1); \blackdot{2}{0}{0}; \blackdot{2}{0}{1};

    \foreach \z in {0,1}{
        \draw[Latex-Latex,shorten >= 5pt, shorten <= 5pt,very thin,yshift=-0.3mm] (0,0,\z) to [bend left] (0,1,\z);
        \draw[Latex-Latex,shorten >= 5pt, shorten <= 5pt,very thin] (1,1,\z) to (2,0,\z);
    }

    \node at (2.6,0.5,0) {$\xLongrightarrow{\tau_3}$};

    \draw[dashed] (3.7,1,0) -- (3.7,0,0) (3.7,1,1) -- (3.7,0,1) (4.7,1,0) -- (4.7,0,0) (4.7,1,1) -- (4.7,0,1)
        (5.7,1,0) -- (5.7,0,0) (5.7,1,1) -- (5.7,0,1)
        (3.7,1,0) -- (4.7,1,0) -- (5.7,1,0) (3.7,1,1) -- (4.7,1,1) -- (5.7,1,1) 
        (3.7,0,0) -- (4.7,0,0) -- (5.7,0,0) (3.7,0,1) -- (4.7,0,1) -- (5.7,0,1);
    \draw[double] (3.7,1,0) -- (3.7,1,1); \whitedot{3.7}{1}{0}; \whitedot{3.7}{1}{1}; 
    \draw[zig] (3.7,0,0) -- (3.7,0,1); \whitedot{3.7}{0}{0}; \whitedot{3.7}{0}{1};
    \draw[double] (4.7,1,0) -- (4.7,1,1); \whitedot{4.7}{1}{0}; \whitedot{4.7}{1}{1}; 
    \draw[zig] (4.7,0,0) -- (4.7,0,1); \whitedot{4.7}{0}{0}; \whitedot{4.7}{0}{1};
    \draw[double] (5.7,1,0) -- (5.7,1,1); \whitedot{5.7}{1}{0}; \whitedot{5.7}{1}{1}; 
    \draw[zig] (5.7,0,0) -- (5.7,0,1); \whitedot{5.7}{0}{0}; \whitedot{5.7}{0}{1};
}}

                \end{center}
        \end{itemize}
        
               For the  $2^{\it rd}$ case, we reduce it to the following.
        
        \begin{itemize} 
            \item $\alpha=2,\beta=1,\gamma\geq 1$: This graph shows how to tackle $2$ A-cards and $1$ B-card with $1$ C-card. 
                \begin{center}                    
                    \scalebox{0.8}{\tikz{\Large
    \draw[dashed] (0,1,0) -- (0,0,0) (0,1,1) -- (0,0,1) (1,1,0) -- (1,0,0) (1,1,1) -- (1,0,1)
        (2,1,0) -- (2,0,0) (2,1,1) -- (2,0,1) (3,1,0) -- (3,0,0) (3,1,1) -- (3,0,1) 
        (0,1,0) -- (1,1,0) -- (2,1,0) -- (3,1,0) (0,1,1) -- (1,1,1) -- (2,1,1) -- (3,1,1) 
        (0,0,0) -- (1,0,0) -- (2,0,0) -- (3,0,0) (0,0,1) -- (1,0,1) -- (2,0,1) -- (3,0,1);
    \draw[double] (0,1,0) -- (0,1,1); \whitedot{0}{1}{0}; \blackdot{0}{1}{1}; 
    \draw[zig] (0,0,0) -- (0,0,1); \blackdot{0}{0}{0}; \whitedot{0}{0}{1};
    \draw[double] (1,1,0) -- (1,1,1); \whitedot{1}{1}{0}; \blackdot{1}{1}{1}; 
    \draw[zig] (1,0,0) -- (1,0,1); \blackdot{1}{0}{0}; \whitedot{1}{0}{1};
    \draw[double] (2,1,0) -- (2,1,1); \whitedot{2}{1}{0}; \blackdot{2}{1}{1}; 
    \draw[zig] (2,0,0) -- (2,0,1); \whitedot{2}{0}{0}; \blackdot{2}{0}{1};
    \draw[double] (3,1,0) -- (3,1,1); \whitedot{3}{1}{0}; \whitedot{3}{1}{1}; 
    \draw[zig] (3,0,0) -- (3,0,1); \whitedot{3}{0}{0}; \whitedot{3}{0}{1};

    \foreach \z in {0,1}{
        \draw[Latex-,shorten >= 5pt, shorten <= 5pt,very thin,yshift=0.7mm] (0,1,\z) to [bend left=10] (3,1,\z);
        \draw[Latex-,shorten >= 5pt, shorten <= 5pt,very thin,yshift=0.6mm] (3,1,\z) to [bend left] (3,0,\z);
        \draw[Latex-,shorten >= 5pt, shorten <= 5pt,very thin,yshift=0.7mm] (3,0,\z) to [bend right=10] (2,0,\z);
        \draw[Latex-,shorten >= 5pt, shorten <= 5pt,very thin] (2,0,\z) to (0,1,\z);
    }

    \node at (3.65,0.5,0) {$\xLongrightarrow{\tau_1}$};

    \draw[dashed] (4.6,1,0) -- (4.6,0,0) (4.6,1,1) -- (4.6,0,1) (5.6,1,0) -- (5.6,0,0) (5.6,1,1) -- (5.6,0,1)
        (6.6,1,0) -- (6.6,0,0) (6.6,1,1) -- (6.6,0,1) (7.6,1,0) -- (7.6,0,0) (7.6,1,1) -- (7.6,0,1) 
        (4.6,1,0) -- (5.6,1,0) -- (6.6,1,0) -- (7.6,1,0) (4.6,1,1) -- (5.6,1,1) -- (6.6,1,1) -- (7.6,1,1) 
        (4.6,0,0) -- (5.6,0,0) -- (6.6,0,0) -- (7.6,0,0) (4.6,0,1) -- (5.6,0,1) -- (6.6,0,1) -- (7.6,0,1);
    \draw[double] (4.6,1,0) -- (4.6,1,1); \blackdot{4.6}{1}{0}; \whitedot{4.6}{1}{1}; 
    \draw[zig] (4.6,0,0) -- (4.6,0,1); \blackdot{4.6}{0}{0}; \whitedot{4.6}{0}{1};
    \draw[double] (5.6,1,0) -- (5.6,1,1); \whitedot{5.6}{1}{0}; \blackdot{5.6}{1}{1}; 
    \draw[zig] (5.6,0,0) -- (5.6,0,1); \blackdot{5.6}{0}{0}; \whitedot{5.6}{0}{1};
    \draw[double] (6.6,1,0) -- (6.6,1,1); \whitedot{6.6}{1}{0}; \blackdot{6.6}{1}{1}; 
    \draw[zig] (6.6,0,0) -- (6.6,0,1); \whitedot{6.6}{0}{0}; \whitedot{6.6}{0}{1};
    \draw[double] (7.6,1,0) -- (7.6,1,1); \whitedot{7.6}{1}{0}; \blackdot{7.6}{1}{1}; 
    \draw[zig] (7.6,0,0) -- (7.6,0,1); \blackdot{7.6}{0}{0}; \blackdot{7.6}{0}{1};

    \foreach \y in {0,1}{
        \draw[Latex-Latex,shorten >= 5pt, shorten <= 5pt,very thin,yshift=-0.7mm] (4.6,\y,1) to [bend right=10] (5.6,\y,1);
        \draw[Latex-Latex,shorten >= 5pt, shorten <= 5pt,very thin] (4.6,\y,0) to (6.6,\y,1);
        \draw[Latex-,shorten >= 5pt, shorten <= 5pt,very thin,yshift=0.7mm] (6.6,\y,0) to [bend right=10] (5.6,\y,0);
        \draw[Latex-,shorten >= 5pt, shorten <= 5pt,very thin] (5.6,\y,0) to (7.6,\y,1);
        \draw[Latex-,shorten >= 5pt, shorten <= 5pt,very thin] (7.6,\y,1) to (6.6,\y,0);
    }
}}
\vspace{5pt}\\
\hspace{-5pt}\scalebox{0.8}{\tikz{\Large
    \node at (-1,0.5,0) {$\xLongrightarrow{\tau_2}$};

    \draw[dashed] (0,1,0) -- (0,0,0) (0,1,1) -- (0,0,1) (1,1,0) -- (1,0,0) (1,1,1) -- (1,0,1)
        (2,1,0) -- (2,0,0) (2,1,1) -- (2,0,1) (3,1,0) -- (3,0,0) (3,1,1) -- (3,0,1) 
        (0,1,0) -- (1,1,0) -- (2,1,0) -- (3,1,0) (0,1,1) -- (1,1,1) -- (2,1,1) -- (3,1,1) 
        (0,0,0) -- (1,0,0) -- (2,0,0) -- (3,0,0) (0,0,1) -- (1,0,1) -- (2,0,1) -- (3,0,1);
    \draw[double] (0,1,0) -- (0,1,1); \blackdot{0}{1}{0}; \blackdot{0}{1}{1}; 
    \draw[zig] (0,0,0) -- (0,0,1); \whitedot{0}{0}{0}; \whitedot{0}{0}{1};
    \draw[double] (1,1,0) -- (1,1,1); \whitedot{1}{1}{0}; \whitedot{1}{1}{1}; 
    \draw[zig] (1,0,0) -- (1,0,1); \whitedot{1}{0}{0}; \whitedot{1}{0}{1};
    \draw[double] (2,1,0) -- (2,1,1); \blackdot{2}{1}{0}; \blackdot{2}{1}{1}; 
    \draw[zig] (2,0,0) -- (2,0,1); \blackdot{2}{0}{0}; \blackdot{2}{0}{1};
    \draw[double] (3,1,0) -- (3,1,1); \whitedot{3}{1}{0}; \whitedot{3}{1}{1}; 
    \draw[zig] (3,0,0) -- (3,0,1); \blackdot{3}{0}{0}; \blackdot{3}{0}{1};

    \foreach \z in {0,1}{
        \draw[Latex-Latex,shorten >= 5pt, shorten <= 5pt,very thin] (0,1,\z) to (2,0,\z);
        \draw[Latex-Latex,shorten >= 5pt, shorten <= 5pt,very thin] (2,1,\z) to (3,0,\z);
    }

    \node at (3.6,0.5,0) {$\xLongrightarrow{\tau_3}$};

    \draw[dashed] (4.6,1,0) -- (4.6,0,0) (4.6,1,1) -- (4.6,0,1) (5.6,1,0) -- (5.6,0,0) (5.6,1,1) -- (5.6,0,1)
        (6.6,1,0) -- (6.6,0,0) (6.6,1,1) -- (6.6,0,1) (7.6,1,0) -- (7.6,0,0) (7.6,1,1) -- (7.6,0,1) 
        (4.6,1,0) -- (5.6,1,0) -- (6.6,1,0) -- (7.6,1,0) (4.6,1,1) -- (5.6,1,1) -- (6.6,1,1) -- (7.6,1,1) 
        (4.6,0,0) -- (5.6,0,0) -- (6.6,0,0) -- (7.6,0,0) (4.6,0,1) -- (5.6,0,1) -- (6.6,0,1) -- (7.6,0,1);
    \draw[double] (4.6,1,0) -- (4.6,1,1); \whitedot{4.6}{1}{0}; \whitedot{4.6}{1}{1}; 
    \draw[zig] (4.6,0,0) -- (4.6,0,1); \whitedot{4.6}{0}{0}; \whitedot{4.6}{0}{1};
    \draw[double] (5.6,1,0) -- (5.6,1,1); \whitedot{5.6}{1}{0}; \whitedot{5.6}{1}{1}; 
    \draw[zig] (5.6,0,0) -- (5.6,0,1); \whitedot{5.6}{0}{0}; \whitedot{5.6}{0}{1};
    \draw[double] (6.6,1,0) -- (6.6,1,1); \whitedot{6.6}{1}{0}; \whitedot{6.6}{1}{1}; 
    \draw[zig] (6.6,0,0) -- (6.6,0,1); \whitedot{6.6}{0}{0}; \whitedot{6.6}{0}{1};
    \draw[double] (7.6,1,0) -- (7.6,1,1); \whitedot{7.6}{1}{0}; \whitedot{7.6}{1}{1}; 
    \draw[zig] (7.6,0,0) -- (7.6,0,1); \whitedot{7.6}{0}{0}; \whitedot{7.6}{0}{1};
}} 
                \end{center}
            \item $\alpha=0,\beta=3,\gamma\geq 1$ : This graph shows how to tackle $3$ B-cards with $1$ C-card. 
                \begin{center}
                    \scalebox{0.8}{\tikz{\Large
    \draw[dashed] (0,1,0) -- (0,0,0) (0,1,1) -- (0,0,1) (1,1,0) -- (1,0,0) (1,1,1) -- (1,0,1)
        (2,1,0) -- (2,0,0) (2,1,1) -- (2,0,1) (3,1,0) -- (3,0,0) (3,1,1) -- (3,0,1) 
        (0,1,0) -- (1,1,0) -- (2,1,0) -- (3,1,0) (0,1,1) -- (1,1,1) -- (2,1,1) -- (3,1,1) 
        (0,0,0) -- (1,0,0) -- (2,0,0) -- (3,0,0) (0,0,1) -- (1,0,1) -- (2,0,1) -- (3,0,1);
    \draw[double] (0,1,0) -- (0,1,1); \whitedot{0}{1}{0}; \blackdot{0}{1}{1}; 
    \draw[zig] (0,0,0) -- (0,0,1); \whitedot{0}{0}{0}; \blackdot{0}{0}{1};
    \draw[double] (1,1,0) -- (1,1,1); \whitedot{1}{1}{0}; \blackdot{1}{1}{1}; 
    \draw[zig] (1,0,0) -- (1,0,1); \whitedot{1}{0}{0}; \blackdot{1}{0}{1};
    \draw[double] (2,1,0) -- (2,1,1); \whitedot{2}{1}{0}; \blackdot{2}{1}{1}; 
    \draw[zig] (2,0,0) -- (2,0,1); \whitedot{2}{0}{0}; \blackdot{2}{0}{1};
    \draw[double] (3,1,0) -- (3,1,1); \whitedot{3}{1}{0}; \whitedot{3}{1}{1}; 
    \draw[zig] (3,0,0) -- (3,0,1); \whitedot{3}{0}{0}; \whitedot{3}{0}{1};

    \foreach \z in {0,1}{
        \draw[Latex-Latex,shorten >= 5pt, shorten <= 5pt,very thin,yshift=0.3mm] (0,1,\z) to [bend left] (0,0,\z);
        \draw[Latex-,shorten >= 5pt, shorten <= 5pt,very thin,yshift=0.7mm] (1,1,\z) to [bend left=10] (3,1,\z);
        \draw[Latex-,shorten >= 5pt, shorten <= 5pt,very thin,yshift=0.3mm] (3,1,\z) to [bend left] (3,0,\z);
        \draw[Latex-,shorten >= 5pt, shorten <= 5pt,very thin,yshift=0.7mm] (3,0,\z) to [bend right=10] (2,0,\z);
        \draw[Latex-,shorten >= 5pt, shorten <= 5pt,very thin] (2,0,\z) to (1,1,\z);
    }

    \node at (3.65,0.5,0) {$\xLongrightarrow{\tau_1}$};

    \draw[dashed] (4.6,1,0) -- (4.6,0,0) (4.6,1,1) -- (4.6,0,1) (5.6,1,0) -- (5.6,0,0) (5.6,1,1) -- (5.6,0,1)
        (6.6,1,0) -- (6.6,0,0) (6.6,1,1) -- (6.6,0,1) (7.6,1,0) -- (7.6,0,0) (7.6,1,1) -- (7.6,0,1) 
        (4.6,1,0) -- (5.6,1,0) -- (6.6,1,0) -- (7.6,1,0) (4.6,1,1) -- (5.6,1,1) -- (6.6,1,1) -- (7.6,1,1) 
        (4.6,0,0) -- (5.6,0,0) -- (6.6,0,0) -- (7.6,0,0) (4.6,0,1) -- (5.6,0,1) -- (6.6,0,1) -- (7.6,0,1);
    \draw[double] (4.6,1,0) -- (4.6,1,1); \blackdot{4.6}{1}{0}; \whitedot{4.6}{1}{1}; 
    \draw[zig] (4.6,0,0) -- (4.6,0,1); \blackdot{4.6}{0}{0}; \whitedot{4.6}{0}{1};
    \draw[double] (5.6,1,0) -- (5.6,1,1); \blackdot{5.6}{1}{0}; \whitedot{5.6}{1}{1}; 
    \draw[zig] (5.6,0,0) -- (5.6,0,1); \whitedot{5.6}{0}{0}; \blackdot{5.6}{0}{1};
    \draw[double] (6.6,1,0) -- (6.6,1,1); \whitedot{6.6}{1}{0}; \blackdot{6.6}{1}{1}; 
    \draw[zig] (6.6,0,0) -- (6.6,0,1); \whitedot{6.6}{0}{0}; \whitedot{6.6}{0}{1};
    \draw[double] (7.6,1,0) -- (7.6,1,1); \whitedot{7.6}{1}{0}; \blackdot{7.6}{1}{1}; 
    \draw[zig] (7.6,0,0) -- (7.6,0,1); \blackdot{7.6}{0}{0}; \blackdot{7.6}{0}{1};

    \foreach \y in {0,1}{
        \draw[Latex-,shorten >= 5pt, shorten <= 5pt,very thin,yshift=0.7mm] (4.6,\y,0) to [bend left=10] (6.6,\y,0);
        \draw[Latex-,shorten >= 5pt, shorten <= 5pt,very thin,yshift=-0.6mm] (6.6,\y,0) to [bend left=10] (5.6,\y,0);
        \draw[Latex-,shorten >= 5pt, shorten <= 5pt,very thin,yshift=-0.6mm] (5.6,\y,0) to [bend left=10] (4.6,\y,0);
        \draw[Latex-,shorten >= 5pt, shorten <= 5pt,very thin,yshift=0.7mm] (7.6,\y,1) to [bend right=10] (6.6,\y,1);
        \draw[Latex-,shorten >= 5pt, shorten <= 5pt,very thin,yshift=0.7mm] (6.6,\y,1) to [bend right=10] (4.6,\y,1);
        \draw[Latex-,shorten >= 5pt, shorten <= 5pt,very thin,yshift=-0.7mm] (4.6,\y,1) to [bend right=10] (5.6,\y,1);
        \draw[Latex-,shorten >= 5pt, shorten <= 5pt,very thin,yshift=-0.7mm] (5.6,\y,1) to [bend right=10] (7.6,\y,1);
    }
}}
\vspace{5pt}\\
\hspace{-5pt}\scalebox{0.8}{\tikz{\Large
    \node at (-1,0.5,0) {$\xLongrightarrow{\tau_2}$};

    \draw[dashed] (0,1,0) -- (0,0,0) (0,1,1) -- (0,0,1) (1,1,0) -- (1,0,0) (1,1,1) -- (1,0,1)
        (2,1,0) -- (2,0,0) (2,1,1) -- (2,0,1) (3,1,0) -- (3,0,0) (3,1,1) -- (3,0,1) 
        (0,1,0) -- (1,1,0) -- (2,1,0) -- (3,1,0) (0,1,1) -- (1,1,1) -- (2,1,1) -- (3,1,1) 
        (0,0,0) -- (1,0,0) -- (2,0,0) -- (3,0,0) (0,0,1) -- (1,0,1) -- (2,0,1) -- (3,0,1);
    \draw[double] (0,1,0) -- (0,1,1); \blackdot{0}{1}{0}; \blackdot{0}{1}{1}; 
    \draw[zig] (0,0,0) -- (0,0,1); \whitedot{0}{0}{0}; \whitedot{0}{0}{1};
    \draw[double] (1,1,0) -- (1,1,1); \whitedot{1}{1}{0}; \whitedot{1}{1}{1}; 
    \draw[zig] (1,0,0) -- (1,0,1); \whitedot{1}{0}{0}; \whitedot{1}{0}{1};
    \draw[double] (2,1,0) -- (2,1,1); \blackdot{2}{1}{0}; \blackdot{2}{1}{1}; 
    \draw[zig] (2,0,0) -- (2,0,1); \blackdot{2}{0}{0}; \blackdot{2}{0}{1};
    \draw[double] (3,1,0) -- (3,1,1); \whitedot{3}{1}{0}; \whitedot{3}{1}{1}; 
    \draw[zig] (3,0,0) -- (3,0,1); \blackdot{3}{0}{0}; \blackdot{3}{0}{1};

    \foreach \z in {0,1}{
        \draw[Latex-Latex,shorten >= 5pt, shorten <= 5pt,very thin] (0,1,\z) to (2,0,\z);
        \draw[Latex-Latex,shorten >= 5pt, shorten <= 5pt,very thin] (2,1,\z) to (3,0,\z);
    }

    \node at (3.6,0.5,0) {$\xLongrightarrow{\tau_3}$};

    \draw[dashed] (4.6,1,0) -- (4.6,0,0) (4.6,1,1) -- (4.6,0,1) (5.6,1,0) -- (5.6,0,0) (5.6,1,1) -- (5.6,0,1)
        (6.6,1,0) -- (6.6,0,0) (6.6,1,1) -- (6.6,0,1) (7.6,1,0) -- (7.6,0,0) (7.6,1,1) -- (7.6,0,1) 
        (4.6,1,0) -- (5.6,1,0) -- (6.6,1,0) -- (7.6,1,0) (4.6,1,1) -- (5.6,1,1) -- (6.6,1,1) -- (7.6,1,1) 
        (4.6,0,0) -- (5.6,0,0) -- (6.6,0,0) -- (7.6,0,0) (4.6,0,1) -- (5.6,0,1) -- (6.6,0,1) -- (7.6,0,1);
    \draw[double] (4.6,1,0) -- (4.6,1,1); \whitedot{4.6}{1}{0}; \whitedot{4.6}{1}{1}; 
    \draw[zig] (4.6,0,0) -- (4.6,0,1); \whitedot{4.6}{0}{0}; \whitedot{4.6}{0}{1};
    \draw[double] (5.6,1,0) -- (5.6,1,1); \whitedot{5.6}{1}{0}; \whitedot{5.6}{1}{1}; 
    \draw[zig] (5.6,0,0) -- (5.6,0,1); \whitedot{5.6}{0}{0}; \whitedot{5.6}{0}{1};
    \draw[double] (6.6,1,0) -- (6.6,1,1); \whitedot{6.6}{1}{0}; \whitedot{6.6}{1}{1}; 
    \draw[zig] (6.6,0,0) -- (6.6,0,1); \whitedot{6.6}{0}{0}; \whitedot{6.6}{0}{1};
    \draw[double] (7.6,1,0) -- (7.6,1,1); \whitedot{7.6}{1}{0}; \whitedot{7.6}{1}{1}; 
    \draw[zig] (7.6,0,0) -- (7.6,0,1); \whitedot{7.6}{0}{0}; \whitedot{7.6}{0}{1};
}} 
                \end{center}
            \item $\alpha=1,\beta=2,\gamma\geq 1$ : This graph shows how to tackle $1$ A-card and $2$ B-cards with $1$ C-card. 
                \begin{center}
                    \scalebox{0.8}{\tikz{\Large
    \draw[dashed] (0,1,0) -- (0,0,0) (0,1,1) -- (0,0,1) (1,1,0) -- (1,0,0) (1,1,1) -- (1,0,1)
        (2,1,0) -- (2,0,0) (2,1,1) -- (2,0,1) (3,1,0) -- (3,0,0) (3,1,1) -- (3,0,1) 
        (0,1,0) -- (1,1,0) -- (2,1,0) -- (3,1,0) (0,1,1) -- (1,1,1) -- (2,1,1) -- (3,1,1) 
        (0,0,0) -- (1,0,0) -- (2,0,0) -- (3,0,0) (0,0,1) -- (1,0,1) -- (2,0,1) -- (3,0,1);
    \draw[double] (0,1,0) -- (0,1,1); \whitedot{0}{1}{0}; \blackdot{0}{1}{1}; 
    \draw[zig] (0,0,0) -- (0,0,1); \blackdot{0}{0}{0}; \whitedot{0}{0}{1};
    \draw[double] (1,1,0) -- (1,1,1); \whitedot{1}{1}{0}; \blackdot{1}{1}{1}; 
    \draw[zig] (1,0,0) -- (1,0,1); \whitedot{1}{0}{0}; \blackdot{1}{0}{1};
    \draw[double] (2,1,0) -- (2,1,1); \whitedot{2}{1}{0}; \blackdot{2}{1}{1}; 
    \draw[zig] (2,0,0) -- (2,0,1); \whitedot{2}{0}{0}; \blackdot{2}{0}{1};
    \draw[double] (3,1,0) -- (3,1,1); \whitedot{3}{1}{0}; \whitedot{3}{1}{1}; 
    \draw[zig] (3,0,0) -- (3,0,1); \whitedot{3}{0}{0}; \whitedot{3}{0}{1};

    \foreach \z in {0,1}{
        \draw[-Latex,shorten >= 5pt, shorten <= 5pt,very thin,yshift=-0.3mm] (1,1,\z) to [bend right] (1,0,\z);
        \draw[-Latex,shorten >= 5pt, shorten <= 5pt,very thin,yshift=0.7mm] (1,0,\z) to [bend left=10] (3,0,\z);
        \draw[-Latex,shorten >= 5pt, shorten <= 5pt,very thin,yshift=0.3mm] (3,0,\z) to [bend right] (3,1,\z);
        \draw[-Latex,shorten >= 5pt, shorten <= 5pt,very thin,yshift=0.7mm] (3,1,\z) to [bend right=10] (1,1,\z);
    }

    \node at (3.65,0.5,0) {$\xLongrightarrow{\tau_1}$};

    \draw[dashed] (4.6,1,0) -- (4.6,0,0) (4.6,1,1) -- (4.6,0,1) (5.6,1,0) -- (5.6,0,0) (5.6,1,1) -- (5.6,0,1)
        (6.6,1,0) -- (6.6,0,0) (6.6,1,1) -- (6.6,0,1) (7.6,1,0) -- (7.6,0,0) (7.6,1,1) -- (7.6,0,1) 
        (4.6,1,0) -- (5.6,1,0) -- (6.6,1,0) -- (7.6,1,0) (4.6,1,1) -- (5.6,1,1) -- (6.6,1,1) -- (7.6,1,1) 
        (4.6,0,0) -- (5.6,0,0) -- (6.6,0,0) -- (7.6,0,0) (4.6,0,1) -- (5.6,0,1) -- (6.6,0,1) -- (7.6,0,1);
    \draw[double] (4.6,1,0) -- (4.6,1,1); \whitedot{4.6}{1}{0}; \blackdot{4.6}{1}{1}; 
    \draw[zig] (4.6,0,0) -- (4.6,0,1); \blackdot{4.6}{0}{0}; \whitedot{4.6}{0}{1};
    \draw[double] (5.6,1,0) -- (5.6,1,1); \blackdot{5.6}{1}{0}; \whitedot{5.6}{1}{1}; 
    \draw[zig] (5.6,0,0) -- (5.6,0,1); \whitedot{5.6}{0}{0}; \whitedot{5.6}{0}{1};
    \draw[double] (6.6,1,0) -- (6.6,1,1); \whitedot{6.6}{1}{0}; \blackdot{6.6}{1}{1}; 
    \draw[zig] (6.6,0,0) -- (6.6,0,1); \whitedot{6.6}{0}{0}; \blackdot{6.6}{0}{1};
    \draw[double] (7.6,1,0) -- (7.6,1,1); \whitedot{7.6}{1}{0}; \blackdot{7.6}{1}{1}; 
    \draw[zig] (7.6,0,0) -- (7.6,0,1); \blackdot{7.6}{0}{0}; \blackdot{7.6}{0}{1};

    \foreach \y in {0,1}{
        \draw[Latex-,shorten >= 5pt, shorten <= 5pt,very thin] (4.6,\y,0) to (7.6,\y,1);
        \draw[Latex-,shorten >= 5pt, shorten <= 5pt,very thin] (7.6,\y,1) to (6.6,\y,0);
        \draw[Latex-,shorten >= 5pt, shorten <= 5pt,very thin,yshift=0.7mm] (6.6,\y,0) to [bend right=10] (5.6,\y,0);
        \draw[Latex-,shorten >= 5pt, shorten <= 5pt,very thin,yshift=0.7mm] (5.6,\y,0) to [bend right=10] (4.6,\y,0);
    }
}}
\vspace{5pt}\\
\hspace{-5pt}\scalebox{0.8}{\tikz{\Large
    \node at (-1,0.5,0) {$\xLongrightarrow{\tau_2}$};

    \draw[dashed] (0,1,0) -- (0,0,0) (0,1,1) -- (0,0,1) (1,1,0) -- (1,0,0) (1,1,1) -- (1,0,1)
        (2,1,0) -- (2,0,0) (2,1,1) -- (2,0,1) (3,1,0) -- (3,0,0) (3,1,1) -- (3,0,1) 
        (0,1,0) -- (1,1,0) -- (2,1,0) -- (3,1,0) (0,1,1) -- (1,1,1) -- (2,1,1) -- (3,1,1) 
        (0,0,0) -- (1,0,0) -- (2,0,0) -- (3,0,0) (0,0,1) -- (1,0,1) -- (2,0,1) -- (3,0,1);
    \draw[double] (0,1,0) -- (0,1,1); \blackdot{0}{1}{0}; \blackdot{0}{1}{1}; 
    \draw[zig] (0,0,0) -- (0,0,1); \whitedot{0}{0}{0}; \whitedot{0}{0}{1};
    \draw[double] (1,1,0) -- (1,1,1); \whitedot{1}{1}{0}; \whitedot{1}{1}{1}; 
    \draw[zig] (1,0,0) -- (1,0,1); \whitedot{1}{0}{0}; \whitedot{1}{0}{1};
    \draw[double] (2,1,0) -- (2,1,1); \blackdot{2}{1}{0}; \blackdot{2}{1}{1}; 
    \draw[zig] (2,0,0) -- (2,0,1); \blackdot{2}{0}{0}; \blackdot{2}{0}{1};
    \draw[double] (3,1,0) -- (3,1,1); \whitedot{3}{1}{0}; \whitedot{3}{1}{1}; 
    \draw[zig] (3,0,0) -- (3,0,1); \blackdot{3}{0}{0}; \blackdot{3}{0}{1};

    \foreach \z in {0,1}{
        \draw[Latex-Latex,shorten >= 5pt, shorten <= 5pt,very thin] (0,1,\z) to (2,0,\z);
        \draw[Latex-Latex,shorten >= 5pt, shorten <= 5pt,very thin] (2,1,\z) to (3,0,\z);
    }

    \node at (3.6,0.5,0) {$\xLongrightarrow{\tau_3}$};

    \draw[dashed] (4.6,1,0) -- (4.6,0,0) (4.6,1,1) -- (4.6,0,1) (5.6,1,0) -- (5.6,0,0) (5.6,1,1) -- (5.6,0,1)
        (6.6,1,0) -- (6.6,0,0) (6.6,1,1) -- (6.6,0,1) (7.6,1,0) -- (7.6,0,0) (7.6,1,1) -- (7.6,0,1) 
        (4.6,1,0) -- (5.6,1,0) -- (6.6,1,0) -- (7.6,1,0) (4.6,1,1) -- (5.6,1,1) -- (6.6,1,1) -- (7.6,1,1) 
        (4.6,0,0) -- (5.6,0,0) -- (6.6,0,0) -- (7.6,0,0) (4.6,0,1) -- (5.6,0,1) -- (6.6,0,1) -- (7.6,0,1);
    \draw[double] (4.6,1,0) -- (4.6,1,1); \whitedot{4.6}{1}{0}; \whitedot{4.6}{1}{1}; 
    \draw[zig] (4.6,0,0) -- (4.6,0,1); \whitedot{4.6}{0}{0}; \whitedot{4.6}{0}{1};
    \draw[double] (5.6,1,0) -- (5.6,1,1); \whitedot{5.6}{1}{0}; \whitedot{5.6}{1}{1}; 
    \draw[zig] (5.6,0,0) -- (5.6,0,1); \whitedot{5.6}{0}{0}; \whitedot{5.6}{0}{1};
    \draw[double] (6.6,1,0) -- (6.6,1,1); \whitedot{6.6}{1}{0}; \whitedot{6.6}{1}{1}; 
    \draw[zig] (6.6,0,0) -- (6.6,0,1); \whitedot{6.6}{0}{0}; \whitedot{6.6}{0}{1};
    \draw[double] (7.6,1,0) -- (7.6,1,1); \whitedot{7.6}{1}{0}; \whitedot{7.6}{1}{1}; 
    \draw[zig] (7.6,0,0) -- (7.6,0,1); \whitedot{7.6}{0}{0}; \whitedot{7.6}{0}{1};
}} 
                \end{center}
        \end{itemize}
    \end{proof}

    For the other cases, which can not be solved by \lem{goodcase}, can in turn be dealt with \lem{badcase}. 
    
    \begin{lemma}\label{lem:badcase}
        For any $r_3\in [n]\backslash\{r_1,r_2\}$, there exist $\sigma_1\in SC_{\zoton}^{(r_1)}$ and $\pi_1,\pi_2\in SC_{\zoton}^{(r_3)}$ such that $\sigma\pi_1\sigma_1\pi_2\in A_{\zoton}^{(r_1)}$ if
        \begin{enumerate}
            \item $a_3+a_4+b_3+b_4=2$ and $\min\{b_1+a_2,a_1+b_2\}=0$ hold or
            \item $a_3+a_4+b_3+b_4=0$ and $b_1+a_2$ is odd (equivalently $a_1+b_2$ is odd) hold.
        \end{enumerate}
    \end{lemma}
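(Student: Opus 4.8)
The starting point is that the color of a node---black exactly when $\sigma(\bm x)_{r_1}\neq \bm x_{r_1}$---depends only on $r_1$, not on the second pairing dimension. Hence the set $B$ of black nodes, split as $B_{\mathrm{top}}\sqcup B_{\mathrm{bot}}$ over the two $r_1$-faces with $|B_{\mathrm{top}}|=|B_{\mathrm{bot}}|$, is the \emph{same} whether we pair along $r_2$ or along $r_3$; only the grouping of nodes into dominoes, and hence into cards, changes. My plan is therefore to re-color with respect to $(r_1,r_3)$, recompute the card counts $\alpha',\beta',\gamma'$, show that the resulting configuration satisfies one of the three hypotheses of \lem{goodcase}, and then invoke \lem{goodcase} with $r_3$ in place of $r_2$ to produce the required $\pi_1,\pi_2\in SC_{\zoton}^{(r_3)}$ and $\sigma_1\in SC_{\zoton}^{(r_1)}$. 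Equivalently, it suffices to show that two distinct second dimensions can never both be bad; this is precisely what yields the ``$n-2$ good choices'' in \prop{new1}.

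First I would translate the two bad hypotheses into structural statements about $B_{\mathrm{top}},B_{\mathrm{bot}}$. Since $\gamma=\tfrac12(a_3+a_4+b_3+b_4)$ counts monochromatic $r_2$-dominoes, the second bad case $\gamma=0$ forces every $r_2$-domino to be bichromatic, i.e. each face is a perfect $r_2$-transversal (exactly one black node per $r_2$-pair). A short mod-$2$ computation, using $|B_{\mathrm{top}}|=|B_{\mathrm{bot}}|=2^{n-2}$, shows that in this regime $\beta=a_2+b_1\equiv\sum_{\bm y\in B_{\mathrm{top}}}\bm y_{r_2}+\sum_{\bm y\in B_{\mathrm{bot}}}\bm y_{r_2}\pmod 2$, so $\beta$ odd means that on at least one face the number of black nodes with $r_2$-bit $1$ is odd. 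In the first bad case $\gamma=1$ and $\beta=0$ after a harmless reorientation, so apart from the two exceptional monochromatic dominoes every mixed domino carries its black end on the same side; thus $B_{\mathrm{top}}$ and $B_{\mathrm{bot}}$ are, up to two nodes, the $r_2$-half-spaces $\{\bm y_{r_2}=1\}$ and $\{\bm y_{r_2}=0\}$.

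The crux is a parity/rigidity lemma for transversals which I would isolate and prove separately: if $T\subseteq\zoto{n-1}$ is a perfect transversal in two distinct directions $d\neq d'$, then, writing $T$ as the graph $\{\bm y_d=\phi(\bm y_{\mathrm{rest}})\}$ of a Boolean function $\phi$, being a $d'$-transversal forces $\phi$ to be affine in coordinate $d'$, whence $|\Supp(\phi)|=2^{n-3}$ and so $\sum_{\bm y\in T}\bm y_d=2^{n-3}$ is \emph{even} for $n\ge 4$. Contrapositively, a perfect $r_2$-transversal whose $r_2$-bit sum is odd fails to be an $r_3$-transversal for every $r_3\neq r_1,r_2$, hence contains a monochromatic $r_3$-domino and forces $\gamma'\ge 1$. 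Applying this to the odd face found above settles the second bad case: re-pairing by $r_3$ always creates monochromatic dominoes. For the first bad case the half-space structure is even more favourable, since $\{\bm y_{r_2}=1\}$ is a union of entire $r_3$-dominoes and therefore becomes almost entirely monochromatic under $r_3$-pairing, giving $\gamma'\ge 2$ directly, which is good case~$1$.

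The main obstacle is the borderline subcase $\gamma'=1$: here I must still exclude $\beta'=0$, i.e. rule out that re-pairing falls back into bad case~$1$. I expect to resolve this with the same transversal/half-space dichotomy, since $\beta'=0$ would force $B_{\mathrm{top}}$ (minus at most two nodes) to be simultaneously an $r_2$-transversal and an $r_3$-half-space, which is incompatible for $n\ge 4$ by a counting argument on the $r_2$-pairs. Once every $r_3$ is shown to yield a good configuration, \lem{goodcase} applies verbatim and returns the three-block decomposition $\sigma\pi_1\sigma_1\pi_2\in A_{\zoton}^{(r_1)}$ with $\pi_1,\pi_2\in SC_{\zoton}^{(r_3)}$ and $\sigma_1\in SC_{\zoton}^{(r_1)}$. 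The only genuinely delicate computations are the mod-$2$ identity for $\beta$ and the transversal-parity lemma; the remainder is bookkeeping over the few ways the two exceptional monochromatic dominoes can be distributed between the faces.
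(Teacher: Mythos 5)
Your proposal is correct and takes essentially the same approach as the paper: since the coloring depends only on $r_1$, you re-pair the same black sets along $r_3$, recompute the counts $\hat a_i,\hat b_i$, and show every bad configuration falls into a solvable case of \lem{goodcase} --- the near-half-space structure in bad case 1 yielding many monochromatic $r_3$-dominoes (case 1 of \lem{goodcase}), and a parity obstruction in bad case 2 forcing at least one. Your transversal-rigidity lemma is a clean repackaging of the paper's mod-$2$ count of black nodes on $\{\bm x\mid \bm x_{r_2}=0\}$ (both show that a perfect $r_2$-transversal with odd $r_2$-bit sum cannot also be an $r_3$-transversal), and your contradiction argument for the borderline subcase $\hat a_3+\hat a_4+\hat b_3+\hat b_4=2$ --- incompatibility of an approximate $r_3$-half-space with exact $r_2$-transversality, since only two deviation nodes are available against $2^{n-2}$ $r_2$-pairs per face --- is a sound substitute for the paper's direct exhibition of $\hat b_1,\hat b_2>0$.
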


    Fixing $r_1$, if for some $r_2$, the corresponding canonical form falls into \lem{badcase}. Then for any $r_3\in [n]\backslash\{r_1,r_2\}$, the canonical form corresponding with $r_1,r_3$ will fall into 3-step solvable cases, that is, it can be solved by \lem{goodcase} with $r_2'=r_3$. 

    Before the proof, we show how to switch dimensions. We visualize the permutation on a black-white $4$-d cuboid as two $3$-d cuboids. When $r_1,r_2$ are fixed, pick $r_3\in[n]\backslash\{r_1,r_2\}$ and compress all the other $(n-3)$ dimensions. As before, paint $\bm x$ black if $\sigma(\bm x)_{r_1}\neq \bm x_{r_1}$ for all $\bm x \in\{0,1\}^n$. An example of $n=4,r_1=1,r_2=2,r_3=4$ is \fig{4dcube1}. The left and right $3$-d cuboids corresponding to $r_3=0$ and $r_3=1$.
    
        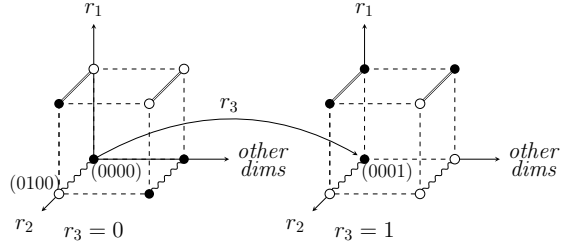
\begin{figure}[ht]
            \centering
            \scalebox{0.6}{\begin{tikzpicture}[>=stealth]
    \Large
    \draw[->] (0,0,0) -- (3,0,0) node [right] {\huge$\substack{\it other\\ \it dims}$};
    \draw[->] (0,0,0) -- (0,3,0) node [above] {$r_1$};
    \draw[->] (0,0,2) -- (0,0,3) node [below left] {$r_2$};
    
    {
    \large
    \node at (0.5,-0.3) {$(0000)$};
    \node at (6.5,-0.3) {$(0001)$};
    \node at (-1.3,-0.55) {$(0100)$};
    }
    \node at (0,-1.6) {$r_3=0$};
    \node at (6,-1.6) {$r_3=1$};
    
    \draw[->] (8,0,0) -- (9,0,0) node [right] {\huge$\substack{\it other\\ \it dims}$};
    \draw[->] (6,2,0) -- (6,3,0) node [above] {$r_1$};
    \draw[->] (6,0,2) -- (6,0,3) node [below left] {$r_2$};

    \foreach \x in {0,6} {
        \draw[dashed] (\x,2,2) -- (\x,0,2) -- (\x+2,0,2) 
                      (\x,0,0) -- (\x+2,0,0)
                      (\x+2,2,0) -- (\x,2,0) (\x,2,2) -- (\x+2,2,2);
    }
                  
    \foreach \x in {0,2,6,8} {
        \draw[double] (\x,2,0) -- (\x,2,2);
        \draw[dashed] (\x,0,2) -- (\x,2,2) (\x,0,0) -- (\x,2,0);
    }
    \foreach \x in {0,2,6,8} {
        \draw[zig] (\x,0,0) -- (\x,0,2);
    }

    \blackdot{0}{0}{0}; \blackdot{2}{0}{0}; \blackdot{6}{0}{0}; \whitedot{8}{0}{0};
    \whitedot{0}{0}{2}; \blackdot{2}{0}{2}; \whitedot{6}{0}{2}; \whitedot{8}{0}{2};
    \blackdot{0}{2}{2}; \whitedot{2}{2}{2}; \blackdot{6}{2}{2}; \whitedot{8}{2}{2};
    \whitedot{0}{2}{0}; \whitedot{2}{2}{0}; \blackdot{6}{2}{0}; \blackdot{8}{2}{0};
    \draw[->,,shorten >= 5pt] (0,0) to [bend left] node [above] {$r_3$} (6,0);
\end{tikzpicture}}
            \caption{$4$-d cuboid for $n=4,r_1=1,r_2=2,r_3=4$.}\label{fig:4dcube1}
        \end{figure}
        
    In \fig{4dcube1}, let $a_1,a_2,a_3,a_4$ be the number of \label{redef}  and $b_1,b_2,b_3,b_4$ be the number of  respectively.
    When we switching dimension $r_2$ and $r_3$, \fig{4dcube1} changes to \fig{4dcube2}.
    Similarly, in \fig{4dcube2}, denote $\hat a_1,\hat a_2,\hat a_3,\hat a_4$ to be the number of  and $\hat b_1,\hat b_2,\hat b_3,\hat b_4$ to be the number of  respectively.
    
        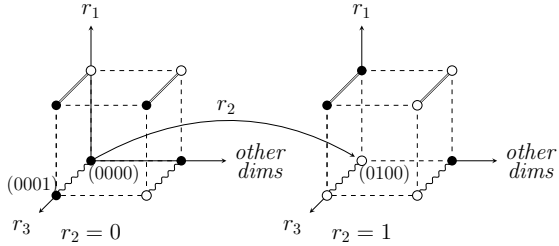
\begin{figure}[ht]
            \scalebox{0.6}{\begin{tikzpicture}[>=stealth]
    \Large
    \draw[->] (0,0,0) -- (3,0,0) node [right] {\huge$\substack{\it other\\ \it dims}$};
    \draw[->] (0,0,0) -- (0,3,0) node [above] {$r_1$};
    \draw[->] (0,0,2) -- (0,0,3) node [below left] {$r_3$};

    {
    \large
    \node at (0.5,-0.3) {$(0000)$};
    \node at (6.5,-0.3) {$(0100)$};
    \node at (-1.3,-0.55) {$(0001)$};
    }
    \node at (0,-1.6) {$r_2=0$};
    \node at (6,-1.6) {$r_2=1$};
    
    \draw[->] (8,0,0) -- (9,0,0) node [right] {\huge$\substack{\it other\\ \it dims}$};
    \draw[->] (6,2,0) -- (6,3,0) node [above] {$r_1$};
    \draw[->] (6,0,2) -- (6,0,3) node [below left] {$r_3$};
    
    \foreach \x in {0,6} {
        \draw[dashed] (\x,2,2) -- (\x,0,2) -- (\x+2,0,2) 
                      (\x,0,0) -- (\x+2,0,0)
                      (\x+2,2,0) -- (\x,2,0) (\x,2,2) -- (\x+2,2,2);
    }
                  
    \foreach \x in {0,2,6,8} {
        \draw[double] (\x,2,0) -- (\x,2,2);
        \draw[dashed] (\x,0,2) -- (\x,2,2) (\x,0,0) -- (\x,2,0);
    }
    \foreach \x in {0,2,6,8} {
        \draw[zig] (\x,0,0) -- (\x,0,2);
    }

    \blackdot{0}{0}{0}; \blackdot{2}{0}{0}; \whitedot{6}{0}{0}; \blackdot{8}{0}{0};
    \blackdot{0}{0}{2}; \whitedot{2}{0}{2}; \whitedot{6}{0}{2}; \whitedot{8}{0}{2};
    \blackdot{0}{2}{2}; \blackdot{2}{2}{2}; \blackdot{6}{2}{2}; \whitedot{8}{2}{2};
    \whitedot{0}{2}{0}; \whitedot{2}{2}{0}; \blackdot{6}{2}{0}; \whitedot{8}{2}{0};
    \draw[->,,shorten >= 5pt] (0,0) to [bend left] node [above] {$r_2$} (6,0);
\end{tikzpicture}}
            \caption{Switching from $r_2$ to $r_3$}\label{fig:4dcube2}
        \end{figure}
    
    \begin{proof}[Proof of \lem{badcase}]
    For the $1^{\it st}$ case in \lem{badcase},
        w.l.o.g, assume $b_1+a_2=0$. And we have the following 4 cases.
        \begin{itemize}
            \item $a_3+a_4=2$ : Thus all $\bm x\in\{0,1\}^n,\bm x_{r_1}=\bm x_{r_2}=0$ 
                are black; and all $\bm x\in\{0,1\}^n,\bm x_{r_1}=0,\bm x_{r_2}=1$ are white. 
                Therefore, $\hat b_3=\hat b_4=2^{n-3}$, which is 3-step solvable in the $1^{\it st}$ case of \lem{goodcase}.
            \item $b_3+b_4=2$ : Similar with case $a_3+a_4=2$.
            \item $b_3=1$ : Thus all $\bm x\in\{0,1\}^n,\bm x_{r_1}=\bm x_{r_2}=0$ are black;
                and all $\bm x\in\{0,1\}^n,\bm x_{r_1}=0,\bm x_{r_2}=1$ are white except one. 
                Therefore, $\hat b_3=2^{n-3},\hat b_4=2^{n-3}-1$, which is 3-step solvable in
                the $1^{\it st}$ case of \lem{goodcase}.
            \item $b_4=1$ : Similar with case $b_3=1$.
        \end{itemize}

        For the $2^{\it nd}$ case in \lem{badcase}, 
        since $a_3+a_4+b_3+b_4=0$, then for any $\bm x\in\{0,1\}^n$ the color of $\bm x$ is different 
        from the color of $\bm x^{\oplus r_2}$. Define
        \begin{align*}
            u_b=&\left|\left\{\text{black }\bm x\in\{0,1\}^n\middle|\bm x_{r_1}=1,\bm x_{r_2}=\bm x_{r_3}=0\right\}\right|\\
            u_w=&\left|\left\{\text{white }\bm x\in\{0,1\}^n\middle|\bm x_{r_1}=1,\bm x_{r_2}=\bm x_{r_3}=0\right\}\right|\\
            l_b=&\left|\left\{\text{black }\bm x\in\{0,1\}^n\middle|\bm x_{r_1}=0,\bm x_{r_2}=\bm x_{r_3}=0\right\}\right|\\
            l_w=&\left|\left\{\text{white }\bm x\in\{0,1\}^n\middle|\bm x_{r_1}=0,\bm x_{r_2}=\bm x_{r_3}=0\right\}\right|
        \end{align*}
        and
        \begin{align*}
            u_b'=&\left|\left\{\text{black }\bm x\in\{0,1\}^n\middle|\bm x_{r_2}=0,\bm x_{r_1}=\bm x_{r_3}=1\right\}\right|\\
            u_w'=&\left|\left\{\text{white }\bm x\in\{0,1\}^n\middle|\bm x_{r_2}=0,\bm x_{r_1}=\bm x_{r_3}=1\right\}\right|\\
            l_b'=&\left|\left\{\text{black }\bm x\in\{0,1\}^n\middle|\bm x_{r_1}=\bm x_{r_2}=0,\bm x_{r_3}=1\right\}\right|\\
            l_w'=&\left|\left\{\text{white }\bm x\in\{0,1\}^n\middle|\bm x_{r_1}=\bm x_{r_2}=0,\bm x_{r_3}=1\right\}\right|.
        \end{align*}
        By assumption, $a_1+b_2=u_w+u'_w+l_b+l'_b$, $b_1+a_2=u_b+u'_b+l_w+l'_w$. 
        And $u_w+u_b=u'_w+u'_b=l_w+l_b=l'_w+l'_b=2^{n-3}$. 
        Thus 
        \begin{align*}
        u_b+u'_b+l_b+l'_b=&(u_b+u'_b+l_w+l'_w)+(l_w+l_b)\\
        &+(l'_w+l'_b)-2(l_w+l'_w)
        \end{align*}
        is odd.
        On the other hand, 
        $\left|\{\bm x\ |\ \bm x_{r_2}=0,\bm x_{r_3}=0\}\right|=\left|\{\bm x\ |\ \bm x_{r_2}=0,\bm x_{r_3}=1\}\right|=2^{n-2}$ is even. Therefore there exists $\bm x\in\{0,1\}^n,\bm x_{r_2}=0$ such that the color of $\bm x$ is the same with the color of $\bm x^{\oplus r_3}$. Thus, $\hat a_3+\hat a_4+\hat b_3+\hat b_4>0$.
        \begin{itemize}
            \item $\hat a_3+\hat a_4+\hat b_3+\hat b_4>2$ : It is 3-step solvable in the $1^{\it st}$ case of \lem{goodcase}.
           
            \item $\hat a_3+\hat a_4+\hat b_3+\hat b_4=2$ : Thus there exists $\bm x\in\{0,1\}^n,\bm x_{r_1}=0$,
                such that $\bm x$ is white; then $\bm x^{\oplus r_3}$ and $\bm x^{\oplus r_2}$ are all black; 
                and $\bm x^{\oplus r_2,r_3}$ is white. 
                Thus when $r_2$ is swapped 
                with $r_3$, $\bm x$ with $\bm x^{\oplus r_3}$ and $\bm x^{\oplus r_2}$ with $\bm x^{\oplus r_2,r_3}$
                form \tikz[thick]{\draw[zig] (0,1,0) -- (0,1,1); \whitedot{0}{1}{0}; \blackdot{0}{1}{1};} and
                    \tikz[thick]{\draw[zig] (0,1,0) -- (0,1,1); \blackdot{0}{1}{0}; \whitedot{0}{1}{1};}. Therefore
                    $\hat b_1,\hat b_2>0$, which is 3-step solvable in the $2^{\it nd}$ case of \lem{goodcase}.
        \end{itemize}
    \end{proof}
    
    For completeness, in \lem{cannot}, we show that cases in \lem{badcase} can not be solved in the order $r_2,r_1,r_2$. 
    The proof is deferred into the appendix.

    \begin{lemma}\label{lem:cannot}
        For any $\sigma_1\in SC_{\zoton}^{(r_1)},\pi_1,\pi_2\in SC_{\zoton}^{(r_2)}$, $\sigma\pi_1\sigma_1\pi_2\notin A_{\zoton}^{(r_1)}$ if
        \begin{enumerate}
            \item $a_3+a_4+b_3+b_4=2$ and $\min\{b_1+a_2,a_1+b_2\}=0$ hold or 
            \item $a_3+a_4+b_3+b_4=0$ and $b_1+a_2$ is odd (equivalently $a_1+b_2$ is odd) hold.
        \end{enumerate}
    \end{lemma}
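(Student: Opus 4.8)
The plan is to argue entirely inside the colored-cuboid picture of \sec{visualizing}, where $\sigma\pi_1\sigma_1\pi_2\in A_{\zoton}^{(r_1)}$ is equivalent to the cuboid of $\sigma\pi_1\sigma_1\pi_2$ being all white (evenness is automatic since every block lies in some $SC_{\zoton}^{(\cdot)}$). First I record how the two kinds of blocks act on colors. A block in $SC_{\zoton}^{(r_2)}$ keeps $r_2$ fixed and acts identically on the two $r_2$-slices, so it permutes the $r_2$-pairs (the double/zigzag pairs counted by $a_i,b_i$) among their positions and flips both colors of a pair exactly when that pair crosses between the two $r_1$-faces. Consequently it preserves the four class counts
\[
c_{\mathrm I}=a_1+b_2,\quad c_{\mathrm{II}}=a_2+b_1,\quad c_{\mathrm{III}}=a_3+b_4,\quad c_{\mathrm{IV}}=a_4+b_3,
\]
because a face-crossing flip interchanges $a_1\!\leftrightarrow\! b_2$, $a_2\!\leftrightarrow\! b_1$, $a_3\!\leftrightarrow\! b_4$, $a_4\!\leftrightarrow\! b_3$. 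A block in $SC_{\zoton}^{(r_1)}$ keeps $r_1$ fixed and acts identically on the two faces, hence it permutes the \emph{vertical pairs} joining the two faces at a fixed $(r_2,j)$ without ever flipping a color, so it preserves the multiset of vertical-pair patterns in $\{WW,WB,BW,BB\}$, with multiplicities $n_{WW},n_{WB},n_{BW},n_{BB}$ (note $n_{WB}=n_{BW}$ always, since the two faces carry equally many black nodes).

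Next I convert reachability into a single existence statement, reading the product from the right. The last block $\pi_2\in SC_{\zoton}^{(r_2)}$ can whiten the cuboid iff the cuboid of $\sigma\pi_1\sigma_1$ has all $r_2$-pairs monochromatic (the preimage of the all-$WW$ cuboid under a $SC_{\zoton}^{(r_2)}$-map consists of $WW$ and $BB$ pairs, and the move back is feasible because the two faces hold equally many $BB$ pairs). Such an all-monochromatic cuboid is obtained from the cuboid of $\sigma\pi_1$ by the block $\sigma_1\in SC_{\zoton}^{(r_1)}$ iff the vertical-pattern multiset of $\sigma\pi_1$ has every pattern with even multiplicity (one pairs equal-pattern vertical pairs into the two columns of each slot). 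Therefore $\sigma\pi_1\sigma_1\pi_2\in A_{\zoton}^{(r_1)}$ for some choice of blocks iff some $SC_{\zoton}^{(r_2)}$-image $\sigma\pi_1$ has all vertical multiplicities even, which (using $n_{WW}+n_{BB}+2n_{WB}=2^{n-1}$ and $n_{WB}=n_{BW}$) is equivalent to having both $n_{BB}$ even and the number $N$ of black lower-face nodes even. Proving \lem{cannot} thus reduces to showing that no $SC_{\zoton}^{(r_2)}$-image attains $n_{BB}$ and $N$ both even.

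Finally I compute these two parities over the $SC_{\zoton}^{(r_2)}$-images, each of which (by class-count invariance) is obtained by choosing for each class how many pairs sit on each face and then matching top $r_2$-pairs with bottom ones. Writing patterns as vectors in $\mathbb F_2^2$, one gets $n_{BB}\equiv\sum_{\text{slots}}\langle P,Q\rangle\pmod2$ for the matched top/bottom patterns $P,Q$, and $N\equiv c_{\mathrm I}^{\,b}+c_{\mathrm{II}}^{\,b}\pmod2$, where $c^{\,b}$ denotes the part of a class placed on the lower face. In the first bad case $a_3+a_4+b_3+b_4=0$ every pair is dichromatic and the count collapses to $n_{BB}\equiv c_{\mathrm I}\pmod2$, independent of the matching; since $c_{\mathrm I}=2^{n-1}-(a_2+b_1)$ with $a_2+b_1$ odd, $n_{BB}$ is forced odd, so all white is unreachable. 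In the second bad case $a_3+a_4+b_3+b_4=2$ with (WLOG) $a_2+b_1=0$, the identity $c_{\mathrm{III}}=c_{\mathrm{IV}}$ (which follows from the two faces having equal pair-counts and equal black-counts) forces exactly one class-$\mathrm{III}$ and one class-$\mathrm{IV}$ pair; a short enumeration of the four placements of these two monochromatic pairs shows that every placement making $N$ even forces the lone class-$\mathrm{III}$ (resp.\ class-$\mathrm{IV}$) pair to match a dichromatic class-$\mathrm I$ pair, making $n_{BB}$ odd, while every placement that could make $n_{BB}$ even makes $N$ odd. Hence all white is again unreachable.

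I expect the main obstacle to be the parity bookkeeping of the third step: one must check that the class counts really are invariant across \emph{all} of $SC_{\zoton}^{(r_2)}$ and carry out the $\mathbb F_2$ matching computation carefully, particularly the forced matchings in the $a_3+a_4+b_3+b_4=2$ case, where only two monochromatic pairs exist and the obstruction is not a single parity but the incompatibility between the $n_{BB}$-parity and $N$-parity constraints. This is presumably what the deferred appendix proof organizes.
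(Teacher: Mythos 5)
Your proof is correct, and it lives in the same colored-cuboid formalism, but its architecture differs from the paper's appendix proof in a way worth spelling out. The paper argues by pure monovariants: for the case $a_3+a_4+b_3+b_4=2$ it takes $\eta$, the number of dichromatic vertical pairs, asserts $\eta\equiv2\bmod4$ initially, that any $\pi\in SC_{\zoton}^{(r_2)}$ preserves $\eta\bmod 4$ and any $\sigma_1\in SC_{\zoton}^{(r_1)}$ preserves $\eta$, while the all-white target needs $\eta\equiv0\bmod 4$; for the case $a_3+a_4+b_3+b_4=0$ it runs the same argument with the parity of $\xi$, the number of vertical $BB$ pairs (your $n_{BB}$). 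You instead derive an exact criterion for solvability in the order $r_2,r_1,r_2$ --- some $SC_{\zoton}^{(r_2)}$-image must have $n_{BB}$ and $N$ both even --- and then kill it on the two bad orbits via the class-count invariants $c_{\mathrm I},\dots,c_{\mathrm{IV}}$ and the $\mathbb F_2$ matching computation; I checked your reductions (whitening by $\pi_2$ iff all $r_2$-pairs monochromatic, realizability by $\sigma_1$ iff all vertical multiplicities are even, $c_{\mathrm{IV}}=c_{\mathrm{II}}+c_{\mathrm{III}}$ from the equal-black-count constraint, and the four-placement enumeration) and they all go through. The two proofs are tightly linked: since the faces carry equally many blacks, $\eta=2N-2n_{BB}$, so the paper's invariant $\eta\bmod 4$ is exactly the parity of $N+n_{BB}$, which your four-case enumeration implicitly shows is constantly odd on that orbit --- so, contrary to your closing remark, the $a_3+a_4+b_3+b_4=2$ obstruction \emph{is} a single parity after all. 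What your longer route buys is real: the paper's blanket claim that $SC_{\zoton}^{(r_2)}$ preserves $\eta\bmod4$ (resp.\ the parity of $\xi$) is false for general cuboids --- e.g.\ with one class-$\mathrm I$, one class-$\mathrm{II}$, and equal numbers of class-$\mathrm{III}$/$\mathrm{IV}$ pairs, a same-face swap of two $r_2$-pairs changes $\eta$ from $0$ to $2$ --- and holds only on orbits with the class counts of the two bad cases, which is precisely what your class-count bookkeeping certifies; moreover your criterion simultaneously characterizes the solvable configurations, recovering the boundary with \lem{goodcase}. The paper's version is shorter once the (orbit-restricted) invariance is granted as ``easy to check''; yours supplies that missing justification.
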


    \lem{new1tight} shows that 3 steps is tight for transforming arbitrary permutation into a CRBF. The proof is put into the appendix.

    \begin{lemma}\label{lem:new1tight}
        For all even number $n\geq 4$, there exists $\sigma\in A_{\zoton}$ such that $\sigma\tau\pi\notin S_{\zoton}^{(r_3)}$ for any $r_1,r_2,r_3\in[n], \tau\in SC_{\zoton}^{(r_1)},\pi\in SC_{\zoton}^{(r_2)}$.
    \end{lemma}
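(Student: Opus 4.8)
The plan is to exhibit one explicit even permutation $\sigma$ and to show, through a parity invariant attached to each candidate target bit, that no product $\sigma\tau\pi$ with $\tau\in SC_{\zoton}^{(r_1)}$ and $\pi\in SC_{\zoton}^{(r_2)}$ can fix any bit. The starting observation is the algebraic description of blocks: $\mu\in SC_{\zoton}^{(r)}$ if and only if $\mu$ keeps bit $r$ fixed and commutes with the translation $T_r\colon\bm x\mapsto\bm x^{\oplus r}$. In particular every translation that does not move bit $r$ lies in $SC_{\zoton}^{(r)}$, so any affine $\sigma$ (for instance the global complement $\bm x\mapsto\bm x^{\oplus 1,2,\ldots,n}$) is already fixable with a single block; the hard instance must therefore be genuinely non-affine. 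This rules out the naive candidates and tells me the obstruction has to see the nonlinearity of $\sigma$.

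Next I would translate ``$\sigma\tau\pi\in S_{\zoton}^{(r_3)}$'' into the colored-cuboid language of \sec{visualizing}, now taking the target bit to be $r_3$. Right-multiplication by a block $\pi\in SC_{\zoton}^{(r_2)}$ acts on the $r_3$-coloring in a very restricted way: it pulls the colors back along the position permutation $\pi$ and then flips the colors exactly on the set $\{\bm x\colon \pi(\bm x)_{r_3}\neq\bm x_{r_3}\}$, which is a union of full $T_{r_2}$-pairs and hence of even size. The key structural point is that a single block can only recolor along one such coherent involutive set, so two blocks have bounded reach; this is the same flavor of parity obstruction that makes the conditions in \lem{cannot} rule out the ordering $r_2,r_1,r_2$. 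I would package this into an $\mathbb{F}_2$-valued invariant $\Phi_t(\cdot)$ of the $t$-coloring --- morally the parity of the count $b_1+a_2$ of \lem{goodcase} and \lem{cannot}, but defined intrinsically so that it does not depend on a chosen auxiliary direction --- and prove that right-multiplication by any block (on any bit) changes $\Phi_t$ in a controlled, predictable fashion, while every white, i.e.\ $r_3$-controlled, cuboid has $\Phi_{r_3}=0$.

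With this in hand the construction reduces to a finite design problem: choose an explicit even $\sigma$ whose $t$-coloring has $\Phi_t$ in the unreachable value for every bit $t$ and every admissible pair of block directions. I expect a small, highly symmetric even permutation to suffice, so that by symmetry one only verifies the invariant for a single representative target bit and then transports the conclusion to all $r_3$. The hypothesis that $n$ is even should enter precisely through a parity count on cell sizes of the form $2^{n-2},2^{n-3}$, exactly as it already does in the proof of \lem{badcase}. The degenerate coincidences $r_1=r_2$, $r_3=r_1$, and $r_3=r_2$ would be disposed of first as easy special cases: when $r_1=r_2$ the two blocks collapse into one block on $r_1$, and when $r_3\in\{r_1,r_2\}$ one of the blocks already preserves the target hyperplane and cannot recolor across it, so only the generic distinct-direction case is left for the invariant argument.

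The main obstacle, I expect, is exactly the simultaneity. A single invariant value is easy to make extremal for one target bit, but I must pin down one $\sigma$ that is stuck against \emph{every} target $r_3$ and against both orders in which the two blocks on $r_1,r_2$ can act, and I must verify that neither available block --- including the awkward case where the first block sits on the target bit itself --- can flip $\Phi$ out of its obstructing value. Establishing that $\Phi_t$ is genuinely preserved under an arbitrary block on an arbitrary bit, rather than only under the special ordering treated in \lem{cannot}, is the delicate step; once that robustness is in place, substituting the explicit $\sigma$ and checking the finitely many coincidence cases should complete the proof.
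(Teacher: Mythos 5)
Your plan never produces the two things the lemma actually needs, and one of its pillars is provably unattainable. First, the entire content of \lem{new1tight} is the construction of a single $\sigma$ that is stuck against every choice of $r_1,r_2,r_3$ simultaneously, and you leave this as ``I expect a small, highly symmetric even permutation to suffice.'' The paper's proof is built around an explicit recursive construction: $\sigma_3=(000,001)(101,111)(010,110)$, extended to $\sigma_n$ as a product of $n$ disjoint transpositions, exactly one of which flips each bit position. This design makes the simultaneity you worry about trivial: for \emph{every} target bit $r_3$, the $r_3$-coloring has exactly two black nodes and they form an $r_3$-pair, giving $a_2=b_2=1$ and all other counts zero. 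It also explains the even-$n$ hypothesis --- $\sigma_n$ is a product of $n$ transpositions and hence lies in $A_{\zoton}$ precisely when $n$ is even --- whereas your guess that evenness enters ``through a parity count on cell sizes $2^{n-2},2^{n-3}$'' is not where the hypothesis is used in this lemma at all.

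Second, your ``delicate step'' --- an $\mathbb F_2$-invariant $\Phi_t$ of the $t$-coloring that is \emph{genuinely preserved under an arbitrary block on an arbitrary bit} and vanishes on $t$-controlled permutations --- cannot exist. If it did, any $\sigma$ with $\Phi_{r_3}(\sigma)\neq 0$ could never be brought into $S_{\zoton}^{(r_3)}$ by \emph{any} finite product of blocks, contradicting \thm{7steps} (indeed every even $\sigma$ has block depth $7$). The obstruction is necessarily depth-two-specific, and the paper's argument is structured accordingly: it shows that after an \emph{arbitrary} first block $\tau\in SC_{\zoton}^{(r_1)}$ the counts satisfy $a_1+b_2=b_1+a_2=1$, and then treats the \emph{last} block asymmetrically --- a final $\pi\in SC_{\zoton}^{(r_2)}$ that whitens the cuboid acts identically on the $r_2=0$ and $r_2=1$ slices, so the coloring it must clean has to be $r_2$-symmetric, which is incompatible with the odd count $b_1+a_2=1$. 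This is a constraint on what the terminal block can absorb, not an invariant transported through products, and it is exactly the piece your proposal would fail to establish in the form you state it. (Your dispatch of the degenerate coincidences $r_1=r_2$ and $r_3\in\{r_1,r_2\}$ is fine and matches the paper's implicit ``w.l.o.g.\ distinct'' reduction, but it does not rescue the two gaps above.)
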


    \section{Transforming CRBF to identity}\label{sec:new2}
    
    In this section, we transform an even CRBF to $\id$ through $5$ CCRBFs, where the first block can be merged with the last block of \prop{new1}.

    Recall that given $\sigma\in S^{(1)}_{\zoto{n}}$, there exist $f,g\in S_{\zoto{n-1}}$ such that for all $\bm y\in \zoto{n-1}$, 
    $\sigma(0\bm y)=0f(\bm y), \sigma(1\bm y)=1g(\bm y)$. We represent $\sigma$ by $2^n\times 2^n$ matrix and $f,g$ by $2^{n-1}\times 2^{n-1}$ matrix.
    For example, if $\tau=(00,01)(10,11)\in SC_{\zoto{2}}^{(1)}$, the basis is $00,01,10,11$, then 
    \begin{align*}
        \tau=\begin{bmatrix} 0&1 & 0&0\\1&0&0&0\\0&0& 0&1\\0&0&1&0\end{bmatrix}.
    \end{align*}
    where  $f_{\tau},g_{\tau}$ are
    \begin{align*}
        f_{\tau}=g_{\tau}=\begin{bmatrix} 0&1\\1&0\end{bmatrix}=(0,1)\in S_{\zoto{1}}.
    \end{align*}

    The proof in this section is based on the following two observations.   
    The first observation is that, for any $h\in S_{\zoto{n-1}}$,
    \begin{align*}
        &\sigma\!=\!\begin{bmatrix} f & 0\\0 & g\end{bmatrix}\!
            =\!\begin{bmatrix} fh^{-1} & 0\\0&fh^{-1}\end{bmatrix}\!
        \begin{bmatrix}\id & 0\\0 & hf^{-1}gh^{-1}\end{bmatrix}\!\begin{bmatrix}h & 0\\0 & h\end{bmatrix}\!.
    \end{align*}

    The second observation is that, for $q\in S_{\zoto{n-2}}$, the following $\pi\in S_{\zoto{n}}^{(1)}$ is actually in $SC_{\zoto{n}}^{(2)}$ 
    \begin{align*}
        \pi=\begin{bmatrix} \id&0 & 0&0\\0&\id&0&0\\0&0& q&0\\0&0&0&q\end{bmatrix}.
    \end{align*}
     

    Notice that $hf^{-1}gh^{-1}$ shares same cycle pattern with $f^{-1}g$.
    If we aim to prove $\sigma$ can be decomposed to identity in $4$ steps, it suffices to show there exist $\sigma_1\in SC_{\zoto{n-1}}^{(j)},\sigma_2\in SC_{\zoto{n-1}}^{(k)}$ such that $\sigma_1\sigma_2$ has same cycle pattern with $f^{-1}g\in S_{\zoto{n-1}}$. 
    
    However, \lem{35free} indicates $\sigma_1\sigma_2$ can not formulate a single 3/5-cycle. 
    In contrast, we show that $\sigma_1\sigma_2$ can indeed achieve any cycle pattern free of 3/5-cycle by \prop{new2sum}. 
    To reduce 3/5-cycles, we develop a cycle elimination algorithm as \lem{new2rearrange}, which can be absorbed into the last block of \prop{new1}. 
    
    \begin{lemma}\label{lem:new2rearrange}
        For $n\geq 5$, $r_1\in[n]$ and $\sigma\in A_{\zoton}$, there exists $\pi\in SC_{\zoton}^{(r_1)}$ such that $\sigma\pi$ is free of 3/5-cycles.
    \end{lemma}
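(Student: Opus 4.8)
The plan is to exploit the large freedom in choosing $\pi$ to merge every $3$-cycle and $5$-cycle of $\sigma$ into a longer cycle. Without loss of generality take $r_1=1$. A block $\pi\in SC_{\zoton}^{(1)}$ is determined by a single permutation $q\in S_{\zoto{n-1}}$ acting simultaneously on both faces, namely $\pi(b\bm y)=b\,q(\bm y)$ for $b\in\{0,1\}$. Writing $q$ as a product of transpositions, each transposition $(\bm y\ \bm z)$ occurring in $q$ contributes to $\pi$ the \emph{paired} double transposition $(0\bm y\ 0\bm z)(1\bm y\ 1\bm z)$. Hence forming $\sigma\pi$ amounts to right-multiplying $\sigma$ by a sequence of such paired double transpositions, and I would build $q$ up one transposition at a time, tracking the cycle structure of the running product $\sigma\hat t_1\cdots\hat t_k$.

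Recall that right-multiplying a permutation by a single transposition $(\bm u\ \bm v)$ merges the two cycles through $\bm u$ and $\bm v$ when they are distinct, and splits one cycle when $\bm u,\bm v$ lie in the same cycle; merging cycles of lengths $a,b$ gives a cycle of length $a+b$. Since a $3$-cycle merged with any cycle of length $\ge 1$ other than a $2$-cycle, and a $5$-cycle merged with any cycle of length $\ge 1$, always yields a cycle whose length avoids $\{3,5\}$, the goal reduces to absorbing every short cycle into a sufficiently long \emph{reservoir} cycle. Concretely I would maintain a reservoir cycle $B$ of length $\ge 4$: absorbing a $3$- or $5$-cycle keeps $|B|\ge 4$ and never turns $B$ itself into a $3$- or $5$-cycle, so the invariant is preserved. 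When $\sigma$ initially has only short cycles together with harmless small cycles (lengths $1,2$), the reservoir is created by first merging two short cycles into a cycle of length $6$, $8$, or $10$, and by the symmetry of the two faces we may always operate on whichever face carries a representative of the cycle being processed.

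The essential difficulty is that a paired move forces \emph{two} simultaneous transpositions: a useful transposition $(0\bm y\ 0\bm z)$ that performs the intended merge, and a partner transposition $(1\bm y\ 1\bm z)$ over which we have only indirect control, because the column indices $\bm y,\bm z$ fix the partner points $1\bm y,1\bm z$ as well. The partner could split the reservoir into a short piece, or fuse two cycles into a $3$- or $5$-cycle, undoing our progress. The crux of the proof is therefore to show that the column pair $(\bm y,\bm z)$ can always be chosen so that \emph{both} transpositions are harmless. Here I would use the abundance of columns: for $n\ge 5$ there are $2^{n-1}\ge 16$ columns, we may freely choose which face-point of the short cycle plays $\bm y$ (up to $5$ candidates) and which face-point of the long reservoir plays $\bm z$ (many candidates). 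A counting/pigeonhole argument over these choices produces a pair for which $1\bm y$ and $1\bm z$ lie in distinct cycles of the updated permutation and whose merge is itself safe; in the residual situations where a split is unavoidable, the split position in the long reservoir can be steered away from lengths $3$ and $5$.

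The global procedure then absorbs one short cycle per paired move, strictly decreasing the number of $3$- and $5$-cycles while preserving the reservoir invariant, until none remain; the accumulated product of paired transpositions is exactly $q$, yielding the desired $\pi\in SC_{\zoton}^{(r_1)}$ with $\sigma\pi$ free of $3/5$-cycles. Parity causes no obstruction, since $\sigma$ is even and every paired move is a product of two transpositions, hence even. I expect the partner-control step — proving, using only the room afforded by $n\ge 5$, that a safe column pair always exists — to be the main obstacle and the part demanding the most careful case analysis.
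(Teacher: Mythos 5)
Your proposal has the right ingredients---building $\pi$ from concurrent paired transpositions, recognizing that the forced partner transposition is the crux, and invoking a counting argument enabled by $n\geq 5$---but the crux is exactly what you leave unproven, and the reservoir scheme you build around it breaks at concrete points. First, every transposition of a permutation in $SC_{\zoton}^{(r_1)}$ acts within a single face, so your merging transposition requires the short cycle and the reservoir to have points on a common face; nothing in your invariant prevents the reservoir (say, a $4$-cycle) from lying entirely in the face opposite the one carrying the short cycle, in which case no admissible pair $(\bm y,\bm z)$ exists at all. Second, your progress measure---strictly decreasing the number of $3$- and $5$-cycles per move---is not maintainable: the forced partner transposition can merge, for instance, a fixed point with a $2$-cycle and create a brand-new $3$-cycle, so a move can have zero net effect on your count even when the intended absorption succeeds. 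Finally, your pigeonhole claim that some choice among the at most $5$ candidates for $\bm y$ and the reservoir's face-matching candidates for $\bm z$ is always safe is asserted, not proved, and for $n=5$ the available room is tight, so this cannot be waved through.

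The paper's proof avoids the reservoir entirely via one structural observation you miss: a $3$- or $5$-cycle has odd length, hence cannot be invariant under the fixed-point-free involution $\bm x\mapsto\bm x^{\oplus r_1}$, so the short cycle $\mathscr C_1$ always contains a point $\bm u$ whose partner $\bm v=\bm u^{\oplus r_1}$ lies in a different cycle $\mathscr C_2$. The round is $\pi_i=(\bm u,\bm t)(\bm v,\bm s)$, where $(\bm t,\bm s)$ is a \emph{fresh} concurrent pair chosen outside the forbidden set $T=\mathscr C_1\cup\{\bm w\in\mathscr C_2\mid \mdist^{\sigma_{i-1}}(\bm v,\bm w)\leq 5\}$ of size at most $5+11=16$; since there are $2^{n-1}\geq 16$ concurrent pairs and $\bm u,\bm v$ block the same pair, such $(\bm t,\bm s)$ exists whenever $n\geq 5$. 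The transposition $(\bm u,\bm t)$ always merges $\mathscr C_1$ into an arbitrary other cycle (no designated reservoir, hence no face-mismatch issue), and $(\bm v,\bm s)$ either merges two cycles or splits $\mathscr C_2$ into two pieces, each of length at least $6$, precisely because $\bm s$ was forced to be far from $\bm v$ along $\mathscr C_2$. Termination is driven by the potential $\zeta=|S_{i,1}|+\cdots+|S_{i,5}|$, the number of cycles of length at most $5$, which strictly decreases every round and---unlike your measure---tolerates transiently created short cycles (a $3$-cycle merged with a $2$-cycle yields a $5$-cycle, yet $\zeta$ still drops). Repairing your proposal essentially amounts to importing these three devices: the odd-cycle involution anchor, the explicit forbidden-set count, and the $\zeta$ potential.
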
 

    \begin{proof}
        This $\pi$ is constructed in several rounds. In round-$i$, $\pi_i\in SC_{\zoton}^{(r_1)}$ is performed. 
        Let $S_{i,c}$ be the set of $c$-cycles in $\sigma_{i-1}$ ($\sigma_0=\sigma$ and $\sigma_t=\sigma\pi_1\pi_2\cdots\pi_t$).
        
        Denote $\zeta_i=|S_{i,1}|+|S_{i,2}|+|S_{i,3}|+|S_{i,4}|+|S_{i,5}|$. If $S_{i-1,3}\cup S_{i-1,5}\neq\emptyset$, pick an arbitrary cycle $\mathscr C_1$ from it. Since $\mathscr C_1$ is 
        an odd cycle, there exists $\bm u\in\mathscr C_1$ such that $\bm v:=\bm u^{\oplus r_1}\notin\mathscr C_1$.
        Let $\mathscr C_2$ be the cycle where $\bm v$ belongs. Define
        $$
             T=\mathscr C_1\cup\big\{\bm w\in\mathscr C_2\mid\mdist^{\sigma_{i-1}}(\bm v,\bm w)\leq 5\big\}.
        $$
        Note that $|T|\leq5+11$. Since $n\geq5$ and $2^{n-1}>|T|-1$, there must exist $t\notin T$ such that $\bm u_{r_1}=\bm t_{r_1}$ and $\bm s:=\bm t^{\oplus r_1}\notin T$.
        Then, let $\pi_i=(\bm u,\bm t)(\bm v,\bm s)\in SC_{\zoton}^{(r_1)}$. We will prove $\zeta_{\sigma_i}<\zeta_{\sigma_{i-1}}$, by checking the following cases.
        \begin{itemize}
            \item $\bm t,\bm s\notin\mathscr C_2$: 
                Swapping $\bm u,\bm t$ merges $\mathscr C_1$ with another cycle
                And similarly when swapping $\bm v,\bm s$.
            \item $\bm t\notin\mathscr C_2,\bm s\in\mathscr C_2$: 
                Swapping $\bm u,\bm t$ merges $\mathscr C_1$ with another cycle.
                Then swapping $\bm v,\bm s$ splits new $\mathscr C_2$ into two cycles; and the length of neither is smaller than $6$, which will not increase the number of short cycles.
            \item $\bm t\in\mathscr C_2,\bm s\notin\mathscr C_2$: 
                Swapping $\bm u,\bm t$ merges $\mathscr C_1$ with $\mathscr C_2$.
                Then swapping $\bm v,\bm s$ merges new $\mathscr C_2$ with another cycle.
            \item $\bm t,\bm s\in\mathscr C_2$:
                Swapping $\bm u,\bm t$ merges $\mathscr C_1$ with $\mathscr C_2$.
                Then swapping $\bm v,\bm s$ splits new $\mathscr C_2$ into two cycles; 
                and the length of neither is smaller than $6$, which will not increase the number of short cycles.
        \end{itemize}
        Repeat until $S_{i,3}\cup S_{i,5}=\emptyset$. Suppose this process has $k$ rounds, then the desired permutation $\pi$ is $\pi_1\pi_2\cdots\pi_k$.
    \end{proof}


Given $r_1,r_2\in[n]$, for any $\bm x\in\zoton$, define $\bm x_{\text{out}}$ as the binary string of $\bm x$ throwing away the $r_1$- and $r_2$-th bit;
then for any $S\subseteq\zoton$ and $a,b\in\{0,1\}$, define 
$$
    S_{ab}=\{\bm x_{\text{out}}\mid
    \bm x\in S,\bm x_{r_1}=a,\bm x_{r_2}=b\}.
$$
Now we present two algorithms (\textsc{RPack} and \textsc{TPack}) to generate desired cycle patterns. 
\textsc{RPack} in \alg{rpack} performs two inplace concurrent permutations to obtain $a,b$-cycle. 
For example, Let $r_1=1,r_2=2$ and $a=4,b=6$, 
\begin{align*}
S=&\{0000,0001,0010,0100, 0101,0110,\\
&    1000,1001,1010,1100,1101,1110\}.
\end{align*}
As in \fig{rpack}, \textsc{RPack}$(r_1,r_2,a,b,S)$ returns
\begin{align*}
    \tau=&(1100,0100)(1000,0000),\\
    \pi=&(1100,1101,1110,1010)(1000,1001)\\
    &(0100,0101,0110,0010)(0000,0001).
\end{align*}

\begin{figure}[ht]
    \centering
    \scalebox{0.7}{\begin{tikzpicture}[
    scale=0.95,
    dot/.style={circle,draw=black,inner sep=0pt,minimum size=6pt}
    ]
    \Large
    \foreach \p in {0,1}
        \foreach \t/\x in {00/0,01/1,10/2}{
            \foreach \z in {0}
                \node[dot, label={[below,xshift=2mm,yshift=-1mm]\scriptsize \p\z\t}] (\p\x\z) at (\x,1.2*\z+\p*3) {};
            \foreach \z in {1}
                \node[dot, label={[above,xshift=2mm,yshift=-1mm]\scriptsize \p\z\t}] (\p\x\z) at (\x,1.2*\z+\p*3) {};
        }

    \foreach \p in {0,1}{
        \draw[-Latex,shorten >= 1pt] (\p01) -- (\p11);
        \draw[-Latex,shorten >= 1pt] (\p11) -- (\p21);
        \draw[-Latex,shorten >= 1pt] (\p21) -- (\p20);
        \draw[-Latex,shorten >= 1pt] (\p20) -- (\p01);
        \draw[Latex-Latex,shorten >= 1pt] (\p00) -- (\p10);

    }
    \draw[Latex-Latex,shorten >= 3pt,shorten <= 3pt,dashed] (000) to [bend left] (100);
    \draw[Latex-Latex,shorten >= 3pt,shorten <= 3pt,dashed] (001) to [bend left] (101);
\end{tikzpicture}}
    \caption{An example of \alg{rpack}}\label{fig:rpack}
\end{figure}

\begin{center}
    \begin{algorithm}[ht]
    \footnotesize
    \caption{\small$a,b$-cycle in rectangles (\textsc{RPack})}\label{alg:rpack}
    \DontPrintSemicolon
    \KwIn{$r_1,r_2,a,b,S$ ($0<a\leq b$)}
    \KwOut{$\pi\in SC_{\zoton}^{(r_1)},\tau\in SC_{\zoton}^{(r_2)}$}
    \tcc{$\pi\tau$ is $a,b$-cycle, $\Supp(\pi),\Supp(\tau)\subseteq S$}
    \If{$(|S|\not\equiv0\mod 4)~{\tt or}~(|S|\neq a+b)$}{
        \Return Error
        \tcc*{Invalid pattern}
    }
    \If{${\tt not}~(S_{00}=S_{01}=S_{10}=S_{11})$}{
        \Return Error
        \tcc*{Invalid support}
    }
    $k\gets\lfloor a/2\rfloor,l\gets\lfloor b/2\rfloor$\;
    \Switch{$a,b$}{
        \tcc{Fall into the first satisfied}
        \lCase{$a=b$}{Top left case}
        \lCase{$a$ is even}{Top right case}
        \lCase{$a=1,b\geq7$}{Bottom left case}
        \lCase{$a$ is odd, $a,b\geq5$}{Bottom right case}
        \lOther{\Return Error}
    }
    $\pi\gets$ solid arrows, $\tau\gets$ dashed arrows\;
    \Return $\pi,\tau$\;
        \tcc{For the meaning of following figures, see \fig{sigmapattern} and Example x}
    \resizebox{2.5cm}{2.5cm}{\begin{tikzpicture}[
    scale=0.8,
    dot/.style={circle,draw=black,inner sep=0pt,minimum size=6pt}
    ]
    \large
    \foreach \p in {0,1}{
        \foreach \x in {0,1,2,3,4}
            \foreach \z in {0,1}
                \node[dot] (\p\x\z) at (\x,1.2*\z+\p*2.5) {};

        \draw[-Latex,shorten >= 1pt] (\p00) -- (\p10);
        \draw[-Latex,shorten >= 1pt] (\p10) -- (\p20);
        \draw[Latex-,shorten <= 1pt] (\p40) -- (\p30);
        \draw[-Latex,shorten >= 1pt] (\p40) -- (\p41);
        \draw[-Latex,shorten >= 1pt] (\p41) -- (\p31);
        \draw[Latex-,shorten <= 1pt] (\p11) -- (\p21);
        \draw[-Latex,shorten >= 1pt] (\p11) -- (\p01);
        \draw[-Latex,shorten >= 1pt] (\p01) -- (\p00); 

        \foreach \x in {2.25,2.5,2.75}{
            \node at (\x,0+\p*2.5) {\Large$\cdot$};
            \node at (\x,1.2+\p*2.5) {\Large$\cdot$};
        }
    }
    \draw [decorate,decoration={brace,amplitude=10pt,mirror,raise=12pt},yshift=7pt]
        (-0.2,0) -- (4+0.2,0) node [black,midway,yshift=-28pt] {$(k+l)/2$};
\end{tikzpicture}}
    \qquad
    \resizebox{3cm}{3cm}{\begin{tikzpicture}[
    scale=0.95,
    dot/.style={circle,draw=black,inner sep=0pt,minimum size=6pt}
    ]
    \Large
    \foreach \p in {0,1}{
        \foreach \x in {0,1,2,3,4,5,6}
            \foreach \z in {0,1}
                \node[dot] (\p\x\z) at (\x,1.2*\z+\p*3) {};

        \draw[-Latex,shorten >= 1pt] (\p01) -- (\p11);
        \draw[Latex-,shorten <= 1pt] (\p31) -- (\p21);
        \draw[-Latex,shorten >= 1pt] (\p31) -- (\p41);
        \draw[Latex-,shorten <= 1pt] (\p61) -- (\p51);
        \draw[-Latex,shorten >= 1pt] (\p61) -- (\p60);
        \draw[Latex-,shorten <= 1pt] (\p40) -- (\p50);
        \draw[-Latex,shorten >= 1pt] (\p00) -- (\p10);
        \draw[Latex-,shorten <= 1pt] (\p30) -- (\p20);
        \draw[-Latex,shorten >= 1pt] (\p30) to [bend left] (\p00);
        \draw[-Latex,shorten >= 1pt] (\p40) -- (\p01);

        \foreach \x in {1.25,1.5,1.75}{
            \node at (\x,0+\p*3) {\Large$\cdot$};
            \node at (\x,1.2+\p*3) {\Large$\cdot$};
        }
        \foreach \x in {4.25,4.5,4.75}{
            \node at (\x,1.2+\p*3) {\Large$\cdot$};
        }
        \foreach \x in {5.25,5.5,5.75}{
            \node at (\x,0+\p*3) {\Large$\cdot$};
        }
    }
    \draw[Latex-Latex,shorten >= 3pt,shorten <= 3pt,dashed] (000) to [bend left] (100);
    \draw[Latex-Latex,shorten >= 3pt,shorten <= 3pt,dashed] (001) to [bend left] (101);
    \draw [decorate,decoration={brace,amplitude=10pt,mirror,raise=8pt},yshift=0pt]
        (6+0.2,1.2*1+1*3) -- (0-0.2,1.2*1+1*3) node [black,midway,yshift=28pt] {$(k+l)/2$};
    \draw [decorate,decoration={brace,amplitude=10pt,mirror,raise=12pt},yshift=0pt]
        (-0.2,0) -- (3+0.2,0) node [black,midway,yshift=-30pt] {$k$};
\end{tikzpicture}}\;

    \resizebox{2.5cm}{2.5cm}{\begin{tikzpicture}[
    scale=0.8,
    dot/.style={circle,draw=black,inner sep=0pt,minimum size=6pt}
    ]
    \large
    \foreach \p in {0,1}{
        \foreach \x in {0,1,2,3,4}
            \foreach \z in {0,1}
                \node[dot] (\p\x\z) at (\x,1.2*\z+\p*2.5) {};

        \draw[-Latex,shorten >= 1pt] (\p00) -- (\p10);
        \draw[-Latex,shorten >= 1pt] (\p10) -- (\p20);
        \draw[Latex-,shorten <= 1pt] (\p40) -- (\p30);
        \draw[-Latex,shorten >= 1pt] (\p40) -- (\p41);
        \draw[-Latex,shorten >= 1pt] (\p41) -- (\p31);
        \draw[Latex-,shorten <= 1pt] (\p11) -- (\p21);
        \draw[-Latex,shorten >= 1pt] (\p11) -- (\p01);
        \draw[-Latex,shorten >= 1pt] (\p01) -- (\p00); 

        \foreach \x in {2.25,2.5,2.75}{
            \node at (\x,0+\p*2.5) {\Large$\cdot$};
            \node at (\x,1.2+\p*2.5) {\Large$\cdot$};
        }
    }
    \draw[-Latex,shorten >= 3pt,dashed] (000) to [bend left] (100);
    \draw[-Latex,shorten >= 3pt,dashed] (100) to [bend right] (110);
    \draw[-Latex,shorten >= 3pt,dashed] (110) to (000);
    \draw[-Latex,shorten >= 3pt,dashed] (001) to [bend left] (101);
    \draw[-Latex,shorten >= 3pt,dashed] (101) to [bend right] (111);
    \draw[-Latex,shorten >= 3pt,dashed] (111) to (001);
    \draw [decorate,decoration={brace,amplitude=10pt,mirror,raise=12pt},yshift=7pt]
        (-0.2,0) -- (4+0.2,0) node [black,midway,yshift=-28pt] {$(l+1)/2$};
\end{tikzpicture}}
    \qquad
    \resizebox{3cm}{3cm}{\begin{tikzpicture}[
    -Latex,shorten >= 1pt,scale=0.95,
    dot/.style={circle,draw=black,inner sep=0pt,minimum size=6pt}
    ]
    \Large
    \foreach \p in {0,1}{
        \foreach \x in {0,1,2,3,4,5,6}
            \foreach \z in {0,1}
                \node[dot] (\p\x\z) at (\x,1.2*\z+\p*3) {};

        \draw (\p00) -- (\p01);
        \draw (\p01) -- (\p10);
        \draw (\p10) -- (\p20);
        \foreach \x in {2.25,2.5,2.75,5.25,5.5,5.75}{
            \node at (\x,\p*3) {\Large$\cdot$};
            \node at (\x,1.2+\p*3) {\Large$\cdot$};
        }
        \draw (\p30) -- (\p40);
        \draw (\p40) -- (\p31);
        \draw (\p21) -- (\p11);
        \draw (\p11) to [bend left=20] (\p41);
        \draw (\p41) -- (\p51);
        \draw (\p61) -- (\p60);
        \draw (\p50) to [bend left=15] (\p00);
    }
    \draw[Latex-Latex,shorten <= 3pt,shorten >= 3pt,dashed] (000) to [bend left=5] (110);
    \draw[Latex-Latex,shorten <= 3pt,shorten >= 3pt,dashed] (001) to [bend left=5] (111);
    \draw [-,decorate,decoration={brace,amplitude=5pt,raise=12pt},yshift=2pt]
        (2-0.2,1.2*1+1*3) -- (4+0.2,1.2*1+1*3) node [black,midway,yshift=28pt] {$(k-l+1)/2$};
    \draw [-,decorate,decoration={brace,amplitude=5pt,mirror,raise=12pt},yshift=0pt]
        (5-0.2,0) -- (6+0.2,0) node [black,midway,yshift=-28pt] {$l-2$};

\end{tikzpicture}}\;
    \end{algorithm}
\end{center}

The aim of \textsc{TPack} in \alg{tpack} is to obtain $a,b,c,d$-cycle. It first divides the general rectangle shaped $S$ into two trapezoid shaped $X_0,X_1$, then performs two inplace concurrent permutations on $X_0,X_1$ to obtain $a,b$-cycle and $c,d$-cycle respectively. 
Since $a,b$-cycle and $c,d$-cycle are generated separately on $X_0,X_1$, these two parts can be performed simultaneously, thus can be combined together.

\begin{center}
    \begin{algorithm}[ht]
    \footnotesize
    \caption{\small$a,b,c,d$-cycle in trapezoids (\textsc{TPack})}\label{alg:tpack}
    \DontPrintSemicolon
    \KwIn{$r_1,r_2,a,b,c,d,S$ ($0<a\leq b,0<c\leq d$)}
    \KwOut{$\pi\in SC_{\zoton}^{(r_1)},\tau\in SC_{\zoton}^{(r_2)}$}
    \tcc{$\pi\tau$ is $a,b,c,d$-cycle, $\Supp(\pi),\Supp(\tau)\subseteq S\!\!\!$}
    \If{$(|S|\not\equiv0\mod 4)~{\tt or}~(|S|\neq a+b+c+d)$}{
        \Return Error
        \tcc*{Invalid pattern}
    }
    \If{${\tt not}~(S_{00}=S_{01}=S_{10}=S_{11})$}{
        \Return Error
        \tcc*{Invalid support}
    }
    \If{$a+b\not\equiv 2\mod 4$}{
        \Return Error
        \tcc*{Invalid pattern}
    }
    Pick $T\subseteq S_{00},|T|=\lfloor(a+b)/4\rfloor$ and $\bm t\in S_{00}\backslash T$\;
    $X_0\gets\{\bm x\in S\mid (\bm x_{\text{out}}\in T_0)\lor(\bm x_{\text{out}}=\bm t\land \bm x_{r_2}=1)\}$\;
    $X_1\gets S\backslash X_0$\;
    $\pi\gets\id,\tau\gets\id$\;
    \ForEach{$(u,v,i)\in\{(a,b,0),(c,d,1)\}$}{
        \tcc{$\Supp(\pi_i),\Supp(\tau_i)\subseteq X_i$}
        \lIf{$u=v=1$}{Skip the following}
        $k\gets\lfloor u/2\rfloor,l\gets\lfloor v/2\rfloor$\;
        \Switch{$u,v$}{
            \tcc{Fall into the first satisfied}
            \lCase{$u=v$}{Top left case}
            \lCase{$u$ is even}{Top right case}
            \lCase{$u=1,v\geq7$}{Bottom left case}
            \lCase{$u$ is odd, $u,v\geq5$}{Bottom right case}
            \lOther{\Return Error}
        }
        $\pi_i\gets$ solid arrows, $\tau_i\gets$ dashed arrows\;
        $\pi\gets\pi\pi_i,\tau\gets\tau\tau_i$\;
    }
    \Return $\pi,\tau$\;
    \tcc{For the meaning of following figures, see \fig{sigmapattern} and Example x}
    \resizebox{2.5cm}{2.5cm}{\begin{tikzpicture}[
    scale=0.8,
    dot/.style={circle,draw=black,inner sep=0pt,minimum size=6pt}
    ]
    \large
    \foreach \p in {0,1}{
        \foreach \x in {0,1,2,3,4}
            \foreach \z in {0,1}
                \node[dot] (\p\x\z) at (\x+6,1.2*\z+\p*2.5) {};
        \node[dot] (\p50) at (5+6,\p*2.5) {};

        \draw[-Latex,shorten >= 1pt] (\p00) -- (\p10);
        \draw[-Latex,shorten >= 1pt] (\p10) -- (\p20);
        \draw[Latex-,shorten <= 1pt] (\p40) -- (\p30);
        \draw[-Latex,shorten >= 1pt] (\p40) -- (\p50);
        \draw[-Latex,shorten >= 1pt] (\p50) -- (\p41);
        \draw[-Latex,shorten >= 1pt] (\p41) -- (\p31);
        \draw[Latex-,shorten <= 1pt] (\p11) -- (\p21);
        \draw[-Latex,shorten >= 1pt] (\p11) -- (\p01);
        \draw[-Latex,shorten >= 1pt] (\p01) -- (\p00); 

        \foreach \x in {2.25,2.5,2.75}{
            \node at (\x+6,0+\p*2.5) {\Large$\cdot$};
            \node at (\x+6,1.2+\p*2.5) {\Large$\cdot$};
        }
    }
    \draw [decorate,decoration={brace,amplitude=10pt,mirror,raise=12pt},yshift=7pt]
        (6-0.2,0) -- (5+6+0.2,0) node [black,midway,yshift=-28pt] {$(k+l)/2$};
\end{tikzpicture}}
    \qquad
    \resizebox{3cm}{3cm}{\begin{tikzpicture}[
    scale=0.95,
    dot/.style={circle,draw=black,inner sep=0pt,minimum size=6pt}
    ]
    \Large
    \foreach \p in {0,1}{
        \foreach \x in {0,1,2,3,4,5,6}
            \foreach \z in {0,1}
                \node[dot] (\p\x\z) at (\x+7,1.2*\z+\p*3) {};
        \node[dot] (\p80) at (16-2,\p*3) {};

        \draw[-Latex,shorten >= 1pt] (\p01) -- (\p11);
        \draw[Latex-,shorten <= 1pt] (\p31) -- (\p21);
        \draw[-Latex,shorten >= 1pt] (\p31) -- (\p41);
        \draw[Latex-,shorten <= 1pt] (\p61) -- (\p51);
        \draw[-Latex,shorten >= 1pt] (\p61) -- (\p80);
        \draw[-Latex,shorten >= 1pt] (\p80) -- (\p60);
        \draw[Latex-,shorten <= 1pt] (\p40) -- (\p50);
        \draw[-Latex,shorten >= 1pt] (\p00) -- (\p10);
        \draw[Latex-,shorten <= 1pt] (\p30) -- (\p20);
        \draw[-Latex,shorten >= 1pt] (\p30) to [bend left] (\p00);
        \draw[-Latex,shorten >= 1pt] (\p40) -- (\p01);

        \foreach \x in {1.25,1.5,1.75}{
            \node at (\x+7,0+\p*3) {\Large$\cdot$};
            \node at (\x+7,1.2+\p*3) {\Large$\cdot$};
        }
        \foreach \x in {4.25,4.5,4.75}{
            \node at (\x+7,1.2+\p*3) {\Large$\cdot$};
        }
        \foreach \x in {5.25,5.5,5.75}{
            \node at (\x+7,0+\p*3) {\Large$\cdot$};
        }
    }
    \draw[Latex-Latex,shorten >= 3pt,shorten <= 3pt,dashed] (000) to [bend left] (100);
    \draw[Latex-Latex,shorten >= 3pt,shorten <= 3pt,dashed] (001) to [bend left] (101);
    \draw [decorate,decoration={brace,amplitude=10pt,mirror,raise=8pt},yshift=0pt]
        (6+7+0.2,1.2*1+1*3) -- (0+7-0.2,1.2*1+1*3) node [black,midway,yshift=28pt] {$(k+l-1)/2$};
    \draw [decorate,decoration={brace,amplitude=10pt,mirror,raise=12pt},yshift=0pt]
        (7-0.2,0) -- (3+7+0.2,0) node [black,midway,yshift=-30pt] {$k$};
\end{tikzpicture}}\;

    \resizebox{2.5cm}{2.5cm}{\begin{tikzpicture}[
    scale=0.8,
    dot/.style={circle,draw=black,inner sep=0pt,minimum size=6pt}
    ]
    \large
    \foreach \p in {0,1}{
        \foreach \x in {0,1,2,3,4}
            \foreach \z in {0,1}
                \node[dot] (\p\x\z) at (\x+6,1.2*\z+\p*2.5) {};
        \node[dot] (\p50) at (5+6,\p*2.5) {};

        \draw[-Latex,shorten >= 1pt] (\p00) -- (\p10);
        \draw[-Latex,shorten >= 1pt] (\p10) -- (\p20);
        \draw[Latex-,shorten <= 1pt] (\p40) -- (\p30);
        \draw[-Latex,shorten >= 1pt] (\p40) -- (\p50);
        \draw[-Latex,shorten >= 1pt] (\p50) -- (\p41);
        \draw[-Latex,shorten >= 1pt] (\p41) -- (\p31);
        \draw[Latex-,shorten <= 1pt] (\p11) -- (\p21);
        \draw[-Latex,shorten >= 1pt] (\p11) -- (\p01);
        \draw[-Latex,shorten >= 1pt] (\p01) -- (\p00); 

        \foreach \x in {2.25,2.5,2.75}{
            \node at (\x+6,0+\p*2.5) {\Large$\cdot$};
            \node at (\x+6,1.2+\p*2.5) {\Large$\cdot$};
        }
    }
    \draw[-Latex,shorten >= 3pt,dashed] (000) to [bend left] (100);
    \draw[-Latex,shorten >= 3pt,dashed] (100) to [bend right] (110);
    \draw[-Latex,shorten >= 3pt,dashed] (110) to (000);
    \draw[-Latex,shorten >= 3pt,dashed] (001) to [bend left] (101);
    \draw[-Latex,shorten >= 3pt,dashed] (101) to [bend right] (111);
    \draw[-Latex,shorten >= 3pt,dashed] (111) to (001);
    \draw [decorate,decoration={brace,amplitude=10pt,mirror,raise=12pt},yshift=7pt]
        (6-0.2,0) -- (5+6+0.2,0) node [black,midway,yshift=-28pt] {$(l+2)/2$};
\end{tikzpicture}}
    \qquad
    \resizebox{3cm}{3cm}{\begin{tikzpicture}[
    -Latex,shorten >= 1pt,scale=0.95,
    dot/.style={circle,draw=black,inner sep=0pt,minimum size=6pt}
    ]
    \Large
    \foreach \p in {0,1}{
        \foreach \x in {0,1,2,3,4,5,6}
            \foreach \z in {0,1}
                \node[dot] (\p\x\z) at (\x+7,1.2*\z+\p*3) {};
        \node[dot] (\p70) at (14,\p*3) {}; 

        \draw (\p00) -- (\p01);
        \draw (\p01) -- (\p10);
        \draw (\p10) -- (\p20);
        \foreach \x in {2.25,2.5,2.75,5.25,5.5,5.75}{
            \node at (\x+7,\p*3) {\Large$\cdot$};
            \node at (\x+7,1.2+\p*3) {\Large$\cdot$};
        }

        \draw (\p30) -- (\p40);
        \draw (\p40) -- (\p41);
        \draw (\p21) -- (\p11);
        \draw (\p11) to [bend left=20] (\p51);
        \draw (\p41) -- (\p31);
        \draw (\p61) -- (\p70);
        \draw (\p70) -- (\p60);
        \draw (\p50) to [bend left=15] (\p00);
    }
    \draw[Latex-Latex,shorten <= 3pt,shorten >= 3pt,dashed] (000) to [bend left=5] (110);
    \draw[Latex-Latex,shorten <= 3pt,shorten >= 3pt,dashed] (001) to [bend left=5] (111);
    \draw [-,decorate,decoration={brace,amplitude=5pt,raise=12pt},yshift=2pt]
        (2+7-0.2,1.2*1+1*3) -- (4+7+0.2,1.2*1+1*3) node [black,midway,yshift=28pt] {$(k-l)/2$};
    \draw [-,decorate,decoration={brace,amplitude=5pt,mirror,raise=12pt},yshift=0pt]
        (5+7-0.2,0) -- (7+7+0.2,0) node [black,midway,yshift=-28pt] {$l-1$};
\end{tikzpicture}}\;
    \end{algorithm}
\end{center}

    Now we give the proof of \prop{new2sum}, which states two CCRBFs can compose most of the patterns.
    
    \begin{proof}[Proof of \prop{new2sum}]\label{ProofP3}
    W.l.o.g, assume $r_1=1,r_2=2$.
    Let $c_k$ be the number of $k$-cycles in $\sigma$ and $c_1$ is the number of fix-points.
    
    Now, we initialize $\pi=\tau=\id,T=\{0,1\}^{n-2}$ and construct them in two stages.

    \noindent\textbf{Stage I (Pairing).}\quad
    Initialize the set of pairs as $P=\emptyset$.
    \begin{itemize}
        \item Pick $i$ with $c_i>0$ and update $c_i\gets c_i-1$.
        \item Pick $j$ with $c_j>0,i+j\equiv0\mod2$ and update $c_j\gets c_j-1$.
        \item Swap $i,j$ if $i>j$. Then add $(i,j)$ to $P$.
    \end{itemize}
    Repeat the procedure until $c_i=0$ for any $i$.
    
    Since $\sigma$ is even, we have $\sum_ic_{2i}\equiv0\mod2$.
    Meanwhile, $\sum_ic_{2i-1}\equiv\sum_kkc_k\equiv2^n\equiv0\mod 2$.
    Thus as long as the first step succeeds, the second step will not fail.
    
    \noindent\textbf{Stage II (Construct).}\quad
    Now we construct $\pi,\tau$.
    \begin{itemize}
        \item Pick $(a,b)\in P$ and remove it from $P$.
        \item If $a+b\equiv0\mod4$, select $S\subseteq T,|S|=(a+b)/4$. Let
        $$
        \pi',\tau'\gets\textsc{RPack}\left(r_1,r_2,a,b,\{0,1\}^2\times T\right).
        $$
        \item If $a+b\equiv2\mod4$, pick $(c,d)\in P,c+d\equiv2\mod4$ and remove it from $P$.
        Select $S\subseteq T,|S|=(a+b+c+d)/4$. Let
        $$
        \pi',\tau'\gets\textsc{TPack}\left(r_1,r_2,a,b,c,d,\{0,1\}^2\times T\right).
        $$
        \item Update $T\gets T\backslash S,\pi\gets\pi\pi',\tau\gets\tau\tau'$.
    \end{itemize}
    Repeat the procedure until $P=\emptyset$.
    
    Since $\sum_{(a,b)\in P}a+b=2^n$ and $n\geq4$, if there is $a+b\equiv2\mod4$ then there must be another pair $c+d\equiv2\mod4$. Also, $\sigma$ is free of 3/5-cycle, thus \textsc{RPack} and \textsc{TPack} will not err.
    
    Since $\pi',\tau'$'s are inplace and separate, $\pi,\tau$ is the desired permutation.
    \end{proof}

    Combining these result, finally we are able to prove \prop{new2}. 

    \begin{proof}[Proof of \prop{new2}]
        W.l.o.g, we assume $r_1=1, r_2=2$.
        Since $\sigma\in A_{\zoton}^{(r_1)}$, there exist $f,g\in S_{\zoto{n-1}}$ such that 
        \begin{align*}
            &\sigma=\begin{bmatrix} f & 0\\0 & g\end{bmatrix}.
        \end{align*}
        Let $\pi_1=\begin{bmatrix}\id&0\\0&g'\end{bmatrix}$, we have
        \begin{align*}
            \sigma\pi_1
                &=\begin{bmatrix}f & 0\\0 & g\end{bmatrix}\begin{bmatrix} \id & 0\\0 & g'\end{bmatrix}\\
                    &=\begin{bmatrix} fh^{-1} & 0\\0&fh^{-1}\end{bmatrix}
            \begin{bmatrix}\id & 0\\0 & hf^{-1}gg'h^{-1}\end{bmatrix}\begin{bmatrix}h & 0\\0 & h\end{bmatrix},
        \end{align*}
        where $f,g',g,h\in S_{\zoto{n-1}}$ and $g',h$ shall be determined later.
        
        Since $f^{-1}g$ is even, by \lem{new2rearrange}, there exists $g'\in SC_{\zoto{n-1}}^{(r_2)}$ such that
        $f^{-1}gg'$ is free of 3/5-cycle.
        Then by \prop{new2sum}, there exist $\rho_1\in SC_{\zoto{n-1}}^{(r_4)},\rho_2\in SC_{\zoto{n-1}}^{(r_3)}$ such that
        $\rho_1\rho_2$ has the same cycle pattern as $f^{-1}gg'$. This condition is equal to that there exists $h\in S_{\zoto{n-1}}$ such that  $hf^{-1}gg'h^{-1}=\rho_1\rho_2$. Therefore
        $$
            \sigma\pi_1=\begin{bmatrix}fh^{-1}&0\\0&fh^{-1}\end{bmatrix}
            \begin{bmatrix}\id&0\\0&\rho_1\end{bmatrix}
            \begin{bmatrix}\id&0\\0&\rho_2\end{bmatrix}\begin{bmatrix}h&0\\0&h\end{bmatrix}.
        $$
        Then setting 
        \begin{gather*}
            \pi_1=\begin{bmatrix}\id&0\\0&g'\end{bmatrix},
            \sigma_1=\begin{bmatrix}h^{-1}&0\\0&h^{-1}\end{bmatrix},
            \tau_1= \begin{bmatrix}\id&0\\0&\rho_2^{-1}\end{bmatrix},\\
            \tau_2=\begin{bmatrix}\id&0\\0&\rho_1^{-1}\end{bmatrix},
            \sigma_2=\begin{bmatrix}hf^{-1}&0\\0&hf^{-1}\end{bmatrix}
        \end{gather*} 
        will do. 
    \end{proof}
    
   For completeness, we show in \lem{35free} that the restriction that the cycle pattern contains no 3/5-cycle is inevitable. The proof is put into the appendix.
    \begin{lemma}\label{lem:35free}
        For any $\sigma_1\in SC_{\zoton}^{(r_1)},\sigma_2\in SC_{\zoton}^{(r_2)}$, $\sigma_1\sigma_2$ can not be a permutation that is merely a $3$-cycle or a $5$-cycle.
    \end{lemma}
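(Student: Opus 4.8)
The plan is to assume for contradiction that $\pi:=\sigma_1\sigma_2$ is a single $k$-cycle with $k\in\{3,5\}$ and to extract a contradiction from the rigid ``concurrent'' structure of $\sigma_1,\sigma_2$. First I would record the two structural facts I will use repeatedly. Since $\sigma_1\in SC_{\zoton}^{(r_1)}$, it fixes the $r_1$-th bit and satisfies $\sigma_1(\bm x^{\oplus r_1})=\sigma_1(\bm x)^{\oplus r_1}$, so $\sigma_1$ acts on the remaining $n-1$ coordinates independently of the $r_1$-bit; symmetrically $\sigma_2$ fixes the $r_2$-th bit and is independent of the $r_2$-bit. Writing each $\bm x$ by its $(r_1,r_2)$-quadrant $(\bm x_{r_1},\bm x_{r_2})$ together with the outer string $\bm x_{\text{out}}$, and setting the intermediate points $\bm m_i:=\sigma_2(\bm a_i)=\sigma_1^{-1}(\bm a_{i+1})$ for the cycle $\bm a_1\to\cdots\to\bm a_k\to\bm a_1$, the $r_1$-bit of $\pi(\bm a_i)$ is decided by $\sigma_2$ and the $r_2$-bit by $\sigma_1$, which lets me track the quadrant trajectory of the cycle.

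The core of the argument is a dichotomy. First I would treat the \emph{generic} case in which no two cycle points differ in exactly one of the bits $r_1,r_2$ (i.e.\ no $\bm a_i^{\oplus r_1}$ or $\bm a_i^{\oplus r_2}$ lies on the cycle). Then every such flip-image is a fixed point of $\pi$, and I would feed these fixed-point equations $\sigma_2(\bm x)=\sigma_1^{-1}(\bm x)$ into the concurrency relations. A one-line computation shows that ``$\bm a_i^{\oplus r_2}$ fixed'' forces $\bm a_{i+1,r_1}=\bm a_{i,r_1}$ and, dually (applying the same to $\pi^{-1}=\sigma_2^{-1}\sigma_1^{-1}$), ``$\bm a_i^{\oplus r_1}$ fixed'' forces the $r_2$-bit to be constant; hence the whole cycle, together with all its intermediates, lives in a single quadrant $Q_{uv}$. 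Abbreviating the outer action by $g(u,o_i)=(u,o_i')$ and $f(v,o_i')=(v,o_{i+1})$, the fixed-point conditions on $\bm a_i^{\oplus r_2}$ and $\bm a_i^{\oplus r_1}$ pin down $f(\bar v,\cdot)$ and $g(\bar u,\cdot)$, and evaluating $\pi$ on the doubly-flipped point $\bm a_i^{\oplus r_1,r_2}$ then gives $\pi(\bm a_i^{\oplus r_1,r_2})=\bm a_{i-1}^{\oplus r_1,r_2}$. These points therefore form a \emph{reversed twin cycle} in the opposite quadrant $Q_{\bar u\bar v}$, contradicting that $\pi$ is a single cycle. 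Consequently the cycle must contain a flip-adjacent pair.

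It then remains to rule out the \emph{non-generic} cases, where some $\bm a_p^{\oplus r_1}=\bm a_q$ or $\bm a_p^{\oplus r_2}=\bm a_q$. Here I would propagate the link through the maps: an $r_1$-link between $\bm a_p,\bm a_q$ forces, by $\sigma_1$'s concurrency, an $r_1$-link $\bm m_{p-1}=\bm m_{q-1}^{\oplus r_1}$ between the preceding intermediates (and symmetrically for $r_2$-links through $\sigma_2$), while the opposite map breaks the link. Because $k\in\{3,5\}$ is small, the induced quadrant-transition sequence of length $3$ or $5$, subject to $\sum_i\alpha_i\equiv\sum_i\beta_i\equiv0\bmod 2$ where $\alpha_i,\beta_i$ record the $r_1$- and $r_2$-bit flips, admits only a handful of possibilities once a link is present. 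I would check each, using injectivity of $\sigma_1,\sigma_2$ and closure of the outer dynamics, to reach either a repeated point or the same twin-type contradiction.

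The main obstacle is exactly this last, non-generic case analysis: the concurrency of $\sigma_1$ and $\sigma_2$ is with respect to \emph{different} coordinates, so neither map descends to a quotient and the bookkeeping of where flipped points travel is delicate. The analysis must also be sharp enough to separate $k=3,5$ from $k=7$, which \emph{is} realizable as the $a=1,b=7$ output of \textsc{RPack} in \alg{rpack}; thus any argument that merely produced a twin or a lower bound independent of $k$ would be too strong and hence wrong. I expect the cleanest route to be to phrase the obstruction as a parity/mod-$4$ invariant of the quadrant-occupation of the support --- matching the ``$|S|\equiv0\bmod 4$ and $S_{00}=S_{01}=S_{10}=S_{11}$'' conditions enforced by \textsc{RPack} and \textsc{TPack} --- which a single $3$- or $5$-cycle cannot meet but a $7$-cycle can.
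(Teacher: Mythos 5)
Your generic case is correct and in fact proves more than you claim: when no two cycle points differ in exactly bit $r_1$ or exactly bit $r_2$, the fixed-point conditions on the flipped points do force $\bm a_{i+1,r_1}=\bm a_{i,r_1}$ and $\bm a_{i,r_2}=\bm a_{i-1,r_2}$, confining the cycle and its intermediates $\bm m_i=\sigma_2(\bm a_i)$ to one quadrant, and the computation $\pi(\bm a_i^{\oplus r_1,r_2})=\bm a_{i-1}^{\oplus r_1,r_2}$ then yields a disjoint reversed twin cycle contradicting that the support has size $k$ --- for \emph{every} $k$, with no parity input. This is a genuinely different opening from the paper, which instead splits on whether the product preserves bit $r_1$ (then $\sigma_2=\sigma_1^{-1}\sigma\in S_{\zoton}^{(r_1)}\cap SC_{\zoton}^{(r_2)}$, and a quotient argument shows the pattern on $\bm x_{r_2}=0$ replicates on $\bm x_{r_2}=1$, so cycles duplicate), symmetrically bit $r_2$ via $\sigma^{-1}=\sigma_2^{-1}\sigma_1^{-1}$, or neither (a counting argument on the colored cuboid).

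The gap is that your non-generic case is a plan, not a proof, and it is exactly where the lemma's content lives: your own generic argument shows flip-adjacent pairs are forced, and the realizable $7$-cycle (the $a=1,b=7$ output of \textsc{RPack}) indeed contains them, so everything hinges on this branch. Two concrete problems. First, ``check each configuration'' is not carried out, and the check is not routine: for $k=5$ the paper itself cannot close the analysis by pure link-propagation --- when the cuboid has four black nodes it multiplies by an auxiliary concurrent double transposition $(\bm x,\tilde{\bm x})(\bm x^{\oplus r_2},\tilde{\bm x}^{\oplus r_2})\in SC_{\zoton}^{(r_2)}$ to reduce either to the already-excluded $3$-cycle case or to a configuration killed by a second parity invariant $\xi$ (counting same-colored $r_2$-dominoes); nothing in your sketch distinguishes $5$ from $3$ beyond ``sequence length'', nor anticipates needing such a reduction. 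Second, the invariant you propose to finish with is false as stated: $|S|\equiv 0\bmod 4$ and $S_{00}=S_{01}=S_{10}=S_{11}$ are constraints on the ambient region fed to \textsc{RPack}/\textsc{TPack}, not invariants of the product $\sigma_1\sigma_2$ --- the single $7$-cycle just mentioned has support of size $7$ and quadrant occupation $(2,2,2,1)$, violating both, so any argument built on that obstruction would wrongly exclude the $7$-cycle too (or exclude nothing). The invariant that actually works is subtler, in the spirit of the paper's: if $\sigma\sigma_2^{-1}\sigma_1^{-1}=\id$, then since right-multiplication by $SC_{\zoton}^{(r_1)}$ preserves the number of black nodes and right-multiplication by $SC_{\zoton}^{(r_2)}$ moves colors in $r_2$-concurrent pairs, the black set of $\sigma$'s cuboid must decompose into $r_2$-pairs lying within a common $r_1$-face; a short cycle crossing the $r_1$-boundary has exactly one black node in each face, which cannot be so paired. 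Until the non-generic branch is closed with a correct invariant of this kind, the proof is incomplete.
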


\section{An explicit example of our algorithm}\label{sec:example}

In this section, we decompose a specified $\sigma\in A_{\{0,1\}^4}$ to 7 blocks of $3$-bit RBFs by our algorithm. Here
\begin{align*}
\sigma=&(1001,1100,0101)(1110,0110,0111,1111)\\ &(1010,0010,0011,1011).
\end{align*}
\subsection[Transform sigma to CRBF ]{Transform $\sigma$ to CRBF}

\noindent\textbf{Step 1.} Choose $r_1=1,r_2=2$. Using method in \sec{visualizing}, we construct colored cube for $\sigma$ as \fig{sigmacolor1}.
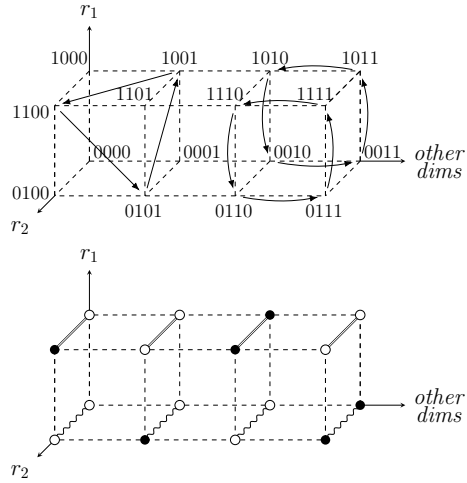
\begin{figure}[ht]
    \centering
    \scalebox{0.6}{\begin{tikzpicture}[>=stealth]
    \Large
    \draw[dashed] (0,0,0) -- (6,0,0);
    \draw[->] (6,0,0) -- (7,0,0) node [right] {\huge$\substack{\it other\\ \it dims}$};
    \draw[dashed] (0,0,0) -- (0,2,0);
    \draw[->] (0,2,0) -- (0,3,0) node [above] {$r_1$};
    \draw[->] (0,0,2) -- (0,0,3) node [below left] {$r_2$};

    \draw[dashed] (0,2,2) -- (0,0,2) -- (6,0,2) (6,0,0) -- (6,2,0) -- (0,2,0)
               (0,2,2) -- (6,2,2) -- (6,2,0) (6,2,2) -- (6,0,2);
    \foreach \x in {0,2,4,6} {
        \draw[dashed] (\x,2,0) -- (\x,2,2);
        \draw[dashed] (\x,0,2) -- (\x,2,2) (\x,0,0) -- (\x,2,0);
    }
    \foreach \x in {0,2,4,6} {
        \draw[dashed] (\x,0,0) -- (\x,0,2);
    }

    \large
    \node at (0.5,0.2) {$0000$}; \node at (2.5,0.2) {$0001$};
    \node at (4.5,0.2) {$0010$}; \node at (6.5,0.2) {$0011$};
    \node at (-0.45,2.3) {$1000$}; \node at (2,2.3) {$1001$};
    \node at (4,2.3) {$1010$}; \node at (6,2.3) {$1011$};
    \node at (-1.3,1.1) {$1100$}; \node at (-1+2,1.5) {$1101$};
    \node at (-1+4,1.5) {$1110$}; \node at (-1+6,1.5) {$1111$};
    \node at (-1.3,-0.7) {$0100$}; \node at (1.2,-1.1) {$0101$};
    \node at (1.2+2,-1.1) {$0110$}; \node at (1.2+4,-1.1) {$0111$};

    \draw[-Latex,shorten >= 5pt, shorten <= 5pt] (2,0,2) to (2,2,0);
    \draw[-Latex,shorten >= 5pt, shorten <= 5pt] (2,2,0) to (0,2,2);
    \draw[-Latex,shorten >= 5pt, shorten <= 5pt] (0,2,2) to (2,0,2);
    \foreach \z in {0,2}{
        \draw[-Latex,shorten >= 5pt, shorten <= 5pt] (4,0,\z) to [bend right=10] (6,0,\z);
        \draw[-Latex,shorten >= 5pt, shorten <= 5pt] (6,0,\z) to [bend right=15] (6,2,\z);
        \draw[-Latex,shorten >= 5pt, shorten <= 5pt] (6,2,\z) to [bend right=10] (4,2,\z);
        \draw[-Latex,shorten >= 5pt, shorten <= 5pt] (4,2,\z) to [bend right=15] (4,0,\z);
    }

\end{tikzpicture}}
    
    \scalebox{0.6}{\begin{tikzpicture}[>=stealth]
    \Large
    \draw[dashed] (0,0,0) -- (6,0,0);
    \draw[->] (6,0,0) -- (7,0,0) node [right] {\huge$\substack{\it other\\ \it dims}$};
    \draw[dashed] (0,0,0) -- (0,2,0);
    \draw[->] (0,2,0) -- (0,3,0) node [above] {$r_1$};
    \draw[->] (0,0,2) -- (0,0,3) node [below left] {$r_2$};

    \draw[dashed] (0,2,2) -- (0,0,2) -- (6,0,2) (6,0,0) -- (6,2,0) -- (0,2,0)
               (0,2,2) -- (6,2,2) -- (6,2,0) (6,2,2) -- (6,0,2);
    \foreach \x in {0,2,4,6} {
        \draw[double] (\x,2,0) -- (\x,2,2);
        \draw[dashed] (\x,0,2) -- (\x,2,2) (\x,0,0) -- (\x,2,0);
    }
    \foreach \x in {0,2,4,6} {
        \draw[zig] (\x,0,0) -- (\x,0,2);
    }
    \foreach \x/\y/\z in {0/0/0,2/0/0,4/0/0,0/2/0,2/2/0,6/2/0,0/0/2,4/0/2,2/2/2,6/2/2}{
        \whitedot{\x}{\y}{\z};
    }
    \foreach \x/\y/\z in {6/0/0,4/2/0,0/2/2,2/0/2,6/0/2,4/2/2}{
        \blackdot{\x}{\y}{\z};
    }
\end{tikzpicture}}
    \caption{Visualize $\sigma$ on a colored cube} \label{fig:sigmacolor1}
\end{figure}
Read the colored cube, we get $a_1=1,a_2=0,a_3=1,a_4=2;b_1=1,b_2=0,b_3=1,b_4=2$.
\medskip

\noindent\textbf{Step 2.} Check \lem{goodcase} and \lem{badcase}, we find this case falls into \lem{goodcase}. we can transform $\sigma$ to $S^{(1)}_{\{0,1\}^4}$ by $SC^{(2)}_{\{0,1\}^4}$,$SC^{(1)}_{\{0,1\}^4}$,$SC^{(2)}_{\{0,1\}^4}$ by \lem{goodcase}. Specific construction are as follows.
\smallskip

\begin{casepar}{Step 2.1}
    Using \lem{canonical}, we transform $\sigma$ to canonical form by $\pi=\pi_1\pi_2$. Let 
    \[
        \pi_1=(1110,0111)(1010,0011),
    \]
    which transforms the colored cube to a cube with $a_3=b_3=0,a_2=0$. Setting $\pi_2=(0100,0101)(0000,0001)$, it rearrange the cube to canonical form. The process is pictured as \fig{sigmatau}, \fig{sigmatautau}.
    \begin{figure}[ht]
        \centering
        \scalebox{0.6}{\begin{tikzpicture}[>=stealth]
    \Large
    \draw[dashed] (0,0,0) -- (6,0,0);
    \draw[->] (6,0,0) -- (7,0,0) node [right] {\huge$\substack{\it other\\ \it dims}$};
    \draw[dashed] (0,0,0) -- (0,2,0);
    \draw[->] (0,2,0) -- (0,3,0) node [above] {$r_1$};
    \draw[->] (0,0,2) -- (0,0,3) node [below left] {$r_2$};

    \draw[dashed] (0,2,2) -- (0,0,2) -- (6,0,2) (6,0,0) -- (6,2,0) -- (0,2,0)
               (0,2,2) -- (6,2,2) -- (6,2,0) (6,2,2) -- (6,0,2);
    \foreach \x in {0,2,4,6} {
        \draw[double] (\x,2,0) -- (\x,2,2);
        \draw[dashed] (\x,0,2) -- (\x,2,2) (\x,0,0) -- (\x,2,0);
    }
    \foreach \x in {0,2,4,6} {
        \draw[zig] (\x,0,0) -- (\x,0,2);
    }
    \foreach \x/\y/\z in {4/0/0,0/2/0,2/2/0,6/2/0,0/0/2,4/0/2,2/2/2,6/2/2,0/2/2,2/0/2,6/0/2,4/2/2,6/0/0,4/2/0,0/0/0,2/0/0}{
        \whitedot{\x}{\y}{\z};
    }
    \foreach \x/\y/\z in {0/2/2,2/0/2}{
        \blackdot{\x}{\y}{\z};
    }
\end{tikzpicture}}
        \caption{Colored cube for $\sigma\pi_1$}        \label{fig:sigmatau}
        \scalebox{0.6}{\begin{tikzpicture}[>=stealth]
    \Large
    \draw[dashed] (0,0,0) -- (6,0,0);
    \draw[->] (6,0,0) -- (7,0,0) node [right] {\huge$\substack{\it other\\ \it dims}$};
    \draw[dashed] (0,0,0) -- (0,2,0);
    \draw[->] (0,2,0) -- (0,3,0) node [above] {$r_1$};
    \draw[->] (0,0,2) -- (0,0,3) node [below left] {$r_2$};

    \draw[dashed] (0,2,2) -- (0,0,2) -- (6,0,2) (6,0,0) -- (6,2,0) -- (0,2,0)
               (0,2,2) -- (6,2,2) -- (6,2,0) (6,2,2) -- (6,0,2);
    \foreach \x in {0,2,4,6} {
        \draw[double] (\x,2,0) -- (\x,2,2);
        \draw[dashed] (\x,0,2) -- (\x,2,2) (\x,0,0) -- (\x,2,0);
    }
    \foreach \x in {0,2,4,6} {
        \draw[zig] (\x,0,0) -- (\x,0,2);
    }
    \foreach \x/\y/\z in {4/0/0,0/2/0,2/2/0,6/2/0,0/0/2,4/0/2,2/2/2,6/2/2,0/2/2,2/0/2,6/0/2,4/2/2,6/0/0,4/2/0,0/0/0,2/0/0}{
        \whitedot{\x}{\y}{\z};
    }
    \foreach \x/\y/\z in {0/2/2,0/0/2}{
        \blackdot{\x}{\y}{\z};
    }
\end{tikzpicture}}
        \caption{Colored cube for $\sigma\pi_1\pi_2$} \label{fig:sigmatautau}
    \end{figure}
\end{casepar}

\begin{casepar}{Step 2.2}
    Using \lem{goodcase}, we construct the following CCRBFs
    \begin{align*}
        \pi_3=  &(0100,1110)(0000,1010)(1101,\\
                &0110)(1001,0010)\\
        \pi_4=  &(1000,1010)(0000,0010)\\
        \pi_5=  &(1100,0100)(1000,0000)(1101,\\
                &0110)(1001,0010).
    \end{align*}
    
    It's easy to verify $\pi_1,\pi_2,\pi_3,\pi_5 \in SC^{(2)}_{\{0,1\}^4}$,$\pi_4\in SC^{(1)}_{\{0,1\}^4}$ and 
    \begin{align*}
        \pi_1\pi_2\pi_3=&(0000,0011,1010,0001)(0100,0111,\\
                        &1110,0101)(0010,1001)(0110,1101).
    \end{align*}

    And finally we transform the colored cube for $\sigma$ to a white cube by verifying
    \begin{align*}
        \sigma^{(1)}=&\sigma(\pi_1\pi_2\pi_3)\pi_4\pi_5\\
        =&(0000,0001)(0010,0011)(0100,0101)\\
        &(0110,0111)(1000,1100,1111,1110,\\
        &1001,1011,1010).
    \end{align*}
\end{casepar}

\subsection[Transform sigma1 to identity]{Transform $\sigma^{(1)}$ to identity}
We can use two $3$-bit RBFs to represent $\sigma^{(1)}$. That is
\begin{align*}
    f=&(000,001)(010,011)(100,101)(110,111)\\
    g=&(000,100,111,110,001,011,010)
\end{align*}
such that 
$$
    \sigma^{(1)}: (0,\bm x)\rightarrow (0,f(\bm x)), (1,\bm x)\rightarrow(1,g(\bm x)).
$$

\begin{casepar}{Step 1}
    Determine whether $f^{-1}g$ has 3/5-cycle. By directly calculating $f^{-1}g$ we know the answer is no. So we can jump the process for eliminating 3/5-cycles.
    $$
        f^{-1}g=(000,101,100,110)(001,010).
    $$
\end{casepar}
\begin{casepar}{Step 2}
    First, we construct a $\sigma_1\in SC^{(1)}_{\{0,1\}^3},$ a $\sigma_2\in SC^{(2)}_{\{0,1\}^3}$ to generate a 2,4-cycle pattern like $f^{-1}g$. Based on Algorithm \textsc{TPack} 
    \begin{align*}
        \sigma_1=&(000,011)(100,111)\\
        \sigma_2=&(010,110)(000,100).
    \end{align*}
\end{casepar}
\begin{casepar}{Step 3}
    Find $h\in S_{\{0,1\}^3}$ such that $h(f^{-1}g)h^{-1}=\sigma_1\sigma_2$. By group theory we know that if $\tau=(i_1,i_2,...,i_k)$, then $h\tau h^{-1}=(h(i_1),h(i_2),...,h(i_k))$. So we can construct 
    $$
    h=(101,111)(001,010,110,011).
    $$
\end{casepar}
\begin{casepar}{Step 4}
    Now we verify $h(f^{-1}g)h^{-1}=\sigma_1\sigma_2$. Thus
    \begin{align*}
        \sigma^{(1)}\!=\!&\begin{bmatrix}f& \\&g\end{bmatrix}\\
        \!=\!&\begin{bmatrix}fh^{-1}&\\&fh^{-1}\end{bmatrix}\!
        \begin{bmatrix}\id&\\&\sigma_1\end{bmatrix}\!\begin{bmatrix}\id&\\&\sigma_2\end{bmatrix}\!\begin{bmatrix}h&\\&h\end{bmatrix}\\
        \!\triangleq&\pi_6\pi_7\pi_8\pi_9.
    \end{align*}
    Written in the form of permutation cycle pattern, 
    \begin{align*}
        \pi_6=  &(0000,0001,0010)(0011,0111,0100,0101,0110)\\
                &(1000,1001,1010)(1011,1111,1100,1101,1110)\\
        \pi_7=  &(1000,1011)(1100,1111)\\
        \pi_8=  &(1010,1110)(1000,1100)\\
        \pi_9=  &(0101,0111)(0001,0010,0110,0011)\\
                &(1101,1111)(1001,1010,1110,1011).
    \end{align*} 
\end{casepar}

\subsection{Summary}
In a word, $\sigma=\pi_6\pi_7\pi_8\pi_9\pi_5^{-1}\pi_4^{-1}(\pi_1\pi_2\pi_3)^{-1}$,
where 
\begin{align*}
&\pi_6,\pi_9,\pi_4^{-1}\in SC^{(1)}_{\{0,1\}^4},\\
&\pi_7,\pi_5^{-1},(\pi_1\pi_2\pi_3)^{-1}\in SC^{(2)}_{\{0,1\}^4},\\
&\pi_8\in SC^{(3)}_{\{0,1\}^4}.
\end{align*}

\section{Even block depth}\label{sec:EvenBlock}

In previous sections, we prove for any $\sigma\in A_{\zoton},n\geq6$, $\sigma$ has block depth $7$. However, the block itself may be an odd permutation which resists further decomposition. In this section, we address this concern and show that any $\sigma\in A_{\zoton}$, with $n\geq10$, has even block depth $10$, which is stated as \thm{10steps}. This is proven by some modification of the framework in previous sections. The idea is similar, but the analysis is much more complicated. Here we only sketch the proof and leave the detail in the appendix. 

We prove \thm{10steps} by the modified versions of \prop{new1} and \prop{new2}. Specifically, we prove that arbitrary even $n$-bit permutation can be transformed to even CRBF by $3$ even blocks; arbitrary even CRBF can be transformed to identity by $8$ even blocks. Choosing carefully, we can merge some of them and finally decompose even $n$-bit permutation to identity using $10$ even blocks. The results are summarized as the following two propositions.

\begin{proposition}\label{prop:new1even}
    For $n\geq 4,\sigma\in A_{\zoton}$ and $r_1\in[n]$, there exist at least $(n-2)$ different $r_2\in[n]\backslash\{r_1\}$ such that there exist $\sigma_1\in AC_{\zoton}^{(r_1)},\pi_1,\pi_2\in AC_{\zoton}^{(r_2)}$ satisfying $\sigma\pi_1\sigma_1\pi_2\in A_{\zoton}^{(r_1)}.$
\end{proposition}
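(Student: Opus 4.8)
The plan is to reprove \prop{new1} essentially verbatim at the level of the colored-cuboid construction, and to superimpose on it a parity bookkeeping that forces each of the three blocks to be concurrently even. The key observation is that for a block $\pi\in SC_{\zoton}^{(i)}$ the only extra datum we must control is the sign $\mathrm{sgn}(\pi|_{-i})\in\{\pm1\}$ of its underlying $(n-1)$-bit permutation, and $\pi$ lies in $AC_{\zoton}^{(i)}$ precisely when this sign is $+1$. So I would run the \lem{canonical}/\lem{goodcase}/\lem{badcase} machinery unchanged to obtain $\pi_1,\pi_2\in SC_{\zoton}^{(r_2)}$ and $\sigma_1\in SC_{\zoton}^{(r_1)}$ with $\sigma\pi_1\sigma_1\pi_2\in A_{\zoton}^{(r_1)}$, and then argue that each of $\mathrm{sgn}(\pi_1|_{-r_2})$, $\mathrm{sgn}(\sigma_1|_{-r_1})$, $\mathrm{sgn}(\pi_2|_{-r_2})$ can be driven to $+1$ by inserting color-neutral swaps. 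Since $\sigma$ is even on $\zoton$ and every CCRBF is automatically even on $\zoton$, there is no global parity obstruction; the real content is that these three corrections can be made simultaneously without spoiling the all-white target cuboid.

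The two outer blocks are the easy part. For $\pi_2$ I would right-multiply by a transposition $\rho\in SC_{\zoton}^{(r_2)}$ whose underlying swap exchanges two $(n-1)$-strings lying in the same $r_1$-slice; then $\rho\in S_{\zoton}^{(r_1)}$, so $\sigma\pi_1\sigma_1\pi_2\rho$ still fixes bit $r_1$ and remains even, hence stays in $A_{\zoton}^{(r_1)}$, while $\mathrm{sgn}(\pi_2|_{-r_2})$ is toggled; such a $\rho$ exists as soon as $2^{n-2}\geq2$. For $\pi_1$ I would instead insert, during the canonicalization of \lem{canonical}, a concurrent swap of two equal-color nodes sharing the same $r_1$-coordinate: by the recoloring rule used in the proof of \lem{canonical}, such a swap leaves the colored cuboid (hence the card counts $\alpha,\beta,\gamma$) exactly unchanged, so the already-built $\sigma_1,\pi_2$ still work verbatim, yet the parity of $\pi_1$ is toggled. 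Both corrections are local and do not disturb the downstream construction.

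The genuine difficulty is the middle block $\sigma_1\in SC_{\zoton}^{(r_1)}$, whose parity cannot be fixed by a single appended swap: any transposition that is concurrent in both $r_1$ and $r_2$ has automatically even underlying sign, and a purely $SC_{\zoton}^{(r_1)}$ swap absorbed into $\sigma_1$ cannot be pushed into $\pi_1$ or $\pi_2$ without leaving $SC_{\zoton}^{(r_2)}$. I therefore expect $\sigma_1$'s parity to be controlled intrinsically, by reworking the card-group constructions of \lem{goodcase}: each elementary group (``two A-cards'', ``one A-card and one B-card with two C-cards'', and so on) contributes a computable parity to the $\tau_2$ factor, and re-pairing a group with one additional spare card (an all-white C-card when available, otherwise a matched pair of same-type cards) flips that contribution. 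Because the total number of cards is $2^{n-2}$, for $n\geq4$ there are enough spare cards to act as parity buffers, so the aggregate sign of $\sigma_1$ can be set to $+1$; the analogous recount must also be carried out inside \lem{badcase}. In the final assembly I would fix $\sigma_1$ first through this re-pairing, then apply the two color-neutral outer corrections, which no longer perturb the card counts. This middle-block case analysis is exactly the ``much more complicated'' step flagged in the section preamble and is where I expect essentially all of the work to lie, whereas the $(n-2)$-fold freedom in the choice of $r_2$ is inherited unchanged from \prop{new1}.
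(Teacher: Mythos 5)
Your overall strategy --- rerun the \prop{new1} machinery and then correct the concurrent parity of each block --- is in the right spirit, and your treatment of the two outer blocks is essentially sound, up to one small omission: a concurrent transposition in $SC_{\zoton}^{(r_2)}$ moves four nodes, so color-neutrality requires the partner pair $\bm x^{\oplus r_2},\bm y^{\oplus r_2}$ to be equal-colored as well, not just $\bm x,\bm y$; this is repaired by swapping corresponding nodes of two same-type cards, which exist by pigeonhole since $2^{n-2}\geq 4$ cards carry only three types in canonical form. The genuine gap is exactly where you locate it, in the middle block, but your proposed resolution does not close it: the claim that ``re-pairing a group with one additional spare card flips that contribution'' is asserted and never substantiated. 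Whether a card group admits an alternative three-step construction, in the same order $r_2,r_1,r_2$, whose $\tau_2$-factor has prescribed concurrent parity is precisely the technical content of \prop{new1even}; it does not follow from counting spare cards. The paper's proof consists of exhibiting, for every card-group case of \lem{goodcase}, an explicit re-engineered construction in which the $\tau_2$- and $\tau_3$-contributions are concurrently even \emph{per group} (e.g., replacing underlying transpositions by underlying $3$-cycles), while $\tau_1$ is supplied in both parities and serves as the single dial absorbing the parity of the canonicalization and cycle-elimination prefix; the \lem{badcase} side is handled analogously. Without such constructions, your argument reduces the proposition to an unproven claim of comparable difficulty to the proposition itself.

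Moreover, the obstruction you invoke to justify this detour is illusory. You argue that $\sigma_1$'s parity cannot be fixed by one inserted swap because a purely $SC_{\zoton}^{(r_1)}$ swap ``cannot be pushed into $\pi_1$ or $\pi_2$'' --- but nothing requires the overall product to stay fixed, only its membership in $A_{\zoton}^{(r_1)}$. Since $\sigma_1\in SC_{\zoton}^{(r_1)}$ preserves bit $r_1$, whether a given $\pi_2$ whitens the cuboid of $\sigma\pi_1\sigma_1$ depends only on that cuboid's coloring. Hence inserting into the middle block a concurrent transposition $\rho'=(\bm u,\bm v)(\bm u^{\oplus r_1},\bm v^{\oplus r_1})\in SC_{\zoton}^{(r_1)}$ in which $\bm u,\bm v$ share a color and $\bm u^{\oplus r_1},\bm v^{\oplus r_1}$ share a color leaves that coloring unchanged, so the very same $\pi_2$ still gives $\sigma\pi_1\sigma_1\rho'\pi_2\in A_{\zoton}^{(r_1)}$, while the underlying sign of the middle block is toggled; such $\rho'$ exists for $n\geq 4$ because the $2^{n-1}$ vertical pairs realize at most four color patterns. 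This is the same color-neutral trick you already deploy for $\pi_1$, and with it your plan would close without any re-pairing machinery; as written, however, your proposal both rules this route out incorrectly and leaves its substitute unproven, so the middle-block step --- the heart of the proposition --- remains open.
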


Here we only give the intuition. The key observation in the proof of \lem{goodcase} is that we can always swap some nodes without changing color in cuboid. For example, if we swap two nodes who has the the same color and lie in the same face, then the corresponding colored cuboid will not change. This observation can be used to modify the  permutation to be concurrently even. 

For example, we can transform two B-cards to white cube by the following two methods. 
\begin{figure}[ht]
    \centering
    \input{./pics/part1split00x2.tex}
    \caption{Transform two B-cards to identity where $\tau_1$ is concurrently even, $\tau_2,\tau_3$ are  concurrently odd. }
    \vspace{1em}
    \scalebox{0.8}{\begin{tikzpicture}
    \Large
    \draw[dashed] (0,2,0) -- (0,1,0) (0,2,1) -- (0,1,1) (1,2,0) -- (1,1,0) (1,2,1) -- (1,1,1)
        (0,2,0) -- (1,2,0) (0,2,1) -- (1,2,1) (0,1,0) -- (1,1,0) (0,1,1) -- (1,1,1);
    \draw[dashed] (0,2,0) -- (0,2,1); \whitedot{0}{2}{0}; \blackdot{0}{2}{1}; 
    \draw[dashed] (0,1,0) -- (0,1,1); \blackdot{0}{1}{1}; \whitedot{0}{1}{0};
    \draw[dashed] (1,2,0) -- (1,2,1); \whitedot{1}{2}{0}; \blackdot{1}{2}{1}; 
    \draw[dashed] (1,1,0) -- (1,1,1); \blackdot{1}{1}{1}; \whitedot{1}{1}{0};
    \draw[Latex-Latex,shorten >= 5pt, shorten <= 5pt,very thin] (0,2,0) to [bend left=15] (1,2,0);
    \draw[Latex-Latex,shorten >= 5pt, shorten <= 5pt,very thin] (0,2,1) to [bend left=15] (1,2,1);
    \node at (1.6,1.5,0) {$\xLongrightarrow{\phantom{\tau_1}}$};

    \pgfmathsetmacro{\x}{2.7}
    \draw[dashed] (0+\x,2,0) -- (0+\x,1,0) (0+\x,2,1) -- (0+\x,1,1) (1+\x,2,0) -- (1+\x,1,0) (1+\x,2,1) -- (1+\x,1,1)
        (0+\x,2,0) -- (1+\x,2,0) (0+\x,2,1) -- (1+\x,2,1) (0+\x,1,0) -- (1+\x,1,0) (0+\x,1,1) -- (1+\x,1,1);
    \draw[dashed] (0+\x,2,0) -- (0+\x,2,1); \whitedot{0+\x}{2}{0}; \blackdot{0+\x}{2}{1}; 
    \draw[dashed] (0+\x,1,0) -- (0+\x,1,1); \blackdot{0+\x}{1}{1}; \whitedot{0+\x}{1}{0};
    \draw[dashed] (1+\x,2,0) -- (1+\x,2,1); \whitedot{1+\x}{2}{0}; \blackdot{1+\x}{2}{1}; 
    \draw[dashed] (1+\x,1,0) -- (1+\x,1,1); \blackdot{1+\x}{1}{1}; \whitedot{1+\x}{1}{0};
    \node at (1.6+\x,1.5,0) {$\xLongrightarrow{\tau_1'}$};
    
    \pgfmathsetmacro{\x}{5.4}
    \draw[dashed] (0+\x,2,0) -- (0+\x,1,0) (0+\x,2,1) -- (0+\x,1,1) (1+\x,2,0) -- (1+\x,1,0) (1+\x,2,1) -- (1+\x,1,1)
        (0+\x,2,0) -- (1+\x,2,0) (0+\x,2,1) -- (1+\x,2,1) (0+\x,1,0) -- (1+\x,1,0) (0+\x,1,1) -- (1+\x,1,1);
    \draw[dashed] (0+\x,2,0) -- (0+\x,2,1); \whitedot{0+\x}{2}{0}; \blackdot{0+\x}{2}{1}; 
    \draw[dashed] (0+\x,1,0) -- (0+\x,1,1); \blackdot{0+\x}{1}{1}; \whitedot{0+\x}{1}{0};
    \draw[dashed] (1+\x,2,0) -- (1+\x,2,1); \whitedot{1+\x}{2}{0}; \blackdot{1+\x}{2}{1}; 
    \draw[dashed] (1+\x,1,0) -- (1+\x,1,1); \blackdot{1+\x}{1}{1}; \whitedot{1+\x}{1}{0};
    \draw[Latex-Latex,shorten >= 5pt, shorten <= 5pt,very thin] (0+\x,2,1) to [bend left=15] (1+\x,2,1);
    \draw[Latex-Latex,shorten >= 5pt, shorten <= 5pt,very thin] (0+\x,1,1) to [bend left=15] (1+\x,1,1);
    \node at (1.6+\x,1.5,0) {$\xLongrightarrow{\phantom{\tau_1}}$};

    \pgfmathsetmacro{\x}{8.1}
    \draw[dashed] (0+\x,2,0) -- (0+\x,1,0) (0+\x,2,1) -- (0+\x,1,1) (1+\x,2,0) -- (1+\x,1,0) (1+\x,2,1) -- (1+\x,1,1)
        (0+\x,2,0) -- (1+\x,2,0) (0+\x,2,1) -- (1+\x,2,1) (0+\x,1,0) -- (1+\x,1,0) (0+\x,1,1) -- (1+\x,1,1);
    \draw[dashed] (0+\x,2,0) -- (0+\x,2,1); \whitedot{0+\x}{2}{0}; \blackdot{0+\x}{2}{1}; 
    \draw[dashed] (0+\x,1,0) -- (0+\x,1,1); \blackdot{0+\x}{1}{1}; \whitedot{0+\x}{1}{0};
    \draw[dashed] (1+\x,2,0) -- (1+\x,2,1); \whitedot{1+\x}{2}{0}; \blackdot{1+\x}{2}{1}; 
    \draw[dashed] (1+\x,1,0) -- (1+\x,1,1); \blackdot{1+\x}{1}{1}; \whitedot{1+\x}{1}{0};
    \draw[Latex-Latex,shorten >= 5pt, shorten <= 5pt,very thin] (0+\x,2,1) to (1+\x,2,0);
    \draw[Latex-Latex,shorten >= 5pt, shorten <= 5pt,very thin] (0+\x,1,1) to (1+\x,1,0);

    \pgfmathsetmacro{\y}{-2}
    \pgfmathsetmacro{\x}{0}
    \node at (1.6+\x,1.5+\y,0) {$\xLongrightarrow{\tau_2'}$};

    \pgfmathsetmacro{\x}{2.7}
    \draw[dashed] (0+\x,2+\y,0) -- (0+\x,1+\y,0) (0+\x,2+\y,1) -- (0+\x,1+\y,1) (1+\x,2+\y,0) -- (1+\x,1+\y,0) (1+\x,2+\y,1) -- (1+\x,1+\y,1)
        (0+\x,2+\y,0) -- (1+\x,2+\y,0) (0+\x,2+\y,1) -- (1+\x,2+\y,1) (0+\x,1+\y,0) -- (1+\x,1+\y,0) (0+\x,1+\y,1) -- (1+\x,1+\y,1);
    \draw[dashed] (0+\x,2+\y,0) -- (0+\x,2+\y,1); \whitedot{0+\x}{2+\y}{0}; \whitedot{0+\x}{2+\y}{1}; 
    \draw[dashed] (0+\x,1+\y,0) -- (0+\x,1+\y,1); \whitedot{0+\x}{1+\y}{1}; \whitedot{0+\x}{1+\y}{0};
    \draw[dashed] (1+\x,2+\y,0) -- (1+\x,2+\y,1); \blackdot{1+\x}{2+\y}{0}; \blackdot{1+\x}{2+\y}{1}; 
    \draw[dashed] (1+\x,1+\y,0) -- (1+\x,1+\y,1); \blackdot{1+\x}{1+\y}{1}; \blackdot{1+\x}{1+\y}{0};
    \draw[Latex-Latex,shorten >= 5pt, shorten <= 5pt,very thin] (0+\x,2+\y,0) to [bend left=15] (1+\x,2+\y,0);
    \draw[Latex-Latex,shorten >= 5pt, shorten <= 5pt,very thin] (0+\x,2+\y,1) to [bend left=15] (1+\x,2+\y,1);
    \node at (1.6+\x,1.5+\y,0) {$\xLongrightarrow{\phantom{\tau_1}}$};

    \pgfmathsetmacro{\x}{5.4}
    \draw[dashed] (0+\x,2+\y,0) -- (0+\x,1+\y,0) (0+\x,2+\y,1) -- (0+\x,1+\y,1) (1+\x,2+\y,0) -- (1+\x,1+\y,0) (1+\x,2+\y,1) -- (1+\x,1+\y,1)
        (0+\x,2+\y,0) -- (1+\x,2+\y,0) (0+\x,2+\y,1) -- (1+\x,2+\y,1) (0+\x,1+\y,0) -- (1+\x,1+\y,0) (0+\x,1+\y,1) -- (1+\x,1+\y,1);
    \draw[dashed] (0+\x,2+\y,0) -- (0+\x,2+\y,1); \blackdot{0+\x}{2+\y}{0}; \blackdot{0+\x}{2+\y}{1}; 
    \draw[dashed] (0+\x,1+\y,0) -- (0+\x,1+\y,1); \whitedot{0+\x}{1+\y}{1}; \whitedot{0+\x}{1+\y}{0};
    \draw[dashed] (1+\x,2+\y,0) -- (1+\x,2+\y,1); \whitedot{1+\x}{2+\y}{0}; \whitedot{1+\x}{2+\y}{1}; 
    \draw[dashed] (1+\x,1+\y,0) -- (1+\x,1+\y,1); \blackdot{1+\x}{1+\y}{1}; \blackdot{1+\x}{1+\y}{0};
    \draw[Latex-Latex,shorten >= 5pt, shorten <= 5pt,very thin] (1+\x,1+\y,0) to (0+\x,2+\y,0);
    \draw[Latex-Latex,shorten >= 5pt, shorten <= 5pt,very thin] (1+\x,1+\y,1) to (0+\x,2+\y,1);
    \node at (1.6+\x,1.5+\y,0) {$\xLongrightarrow{\tau_3'}$};

    \pgfmathsetmacro{\x}{8.1}
    \draw[dashed] (0+\x,2+\y,0) -- (0+\x,1+\y,0) (0+\x,2+\y,1) -- (0+\x,1+\y,1) (1+\x,2+\y,0) -- (1+\x,1+\y,0) (1+\x,2+\y,1) -- (1+\x,1+\y,1)
        (0+\x,2+\y,0) -- (1+\x,2+\y,0) (0+\x,2+\y,1) -- (1+\x,2+\y,1) (0+\x,1+\y,0) -- (1+\x,1+\y,0) (0+\x,1+\y,1) -- (1+\x,1+\y,1);
    \draw[dashed] (0+\x,2+\y,0) -- (0+\x,2+\y,1); \whitedot{0+\x}{2+\y}{0}; \whitedot{0+\x}{2+\y}{1}; 
    \draw[dashed] (0+\x,1+\y,0) -- (0+\x,1+\y,1); \whitedot{0+\x}{1+\y}{1}; \whitedot{0+\x}{1+\y}{0};
    \draw[dashed] (1+\x,2+\y,0) -- (1+\x,2+\y,1); \whitedot{1+\x}{2+\y}{0}; \whitedot{1+\x}{2+\y}{1}; 
    \draw[dashed] (1+\x,1+\y,0) -- (1+\x,1+\y,1); \whitedot{1+\x}{1+\y}{1}; \whitedot{1+\x}{1+\y}{0};
\end{tikzpicture}}
    \caption{Transform two B-cards to identity where $\tau_1'$ is concurrently odd, $\tau_2',\tau_3'$ are  concurrently even.}
    \label{new1even:exam}
\end{figure}
     
    \prop{new2even} states that we can recover any even $n$-bit CRBF by $8$ concurrently even CCRBFs. 
    \begin{proposition}\label{prop:new2even}
        For $n\geq 10$, $r_1\in[n]$, $\sigma\in A_{\zoton}^{(r_1)}$ and distinct $r_2,r_3,r_4\in [n]/\{r_1\}$. There exist $\sigma_1,\sigma_4,\sigma_7\in AC^{(r_1)}_{\{0,1\}^n}$,  $,\sigma_6,\sigma_8\in AC^{(r_2)}_{\{0,1\}^n}$,  $\sigma_2,\sigma_5\in AC^{(r_3)}_{\{0,1\}^n}$,  $\sigma_3\in AC^{(r_4)}_{\{0,1\}^n}$ such that $\sigma=\sigma_1\circ\cdots\circ\sigma_8$.
    \end{proposition}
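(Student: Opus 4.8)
The plan is to transcribe the proof of \prop{new2} into the even setting, upgrading each ingredient to produce \emph{concurrently even} blocks and paying for the unavoidable parity defects with the three extra slots. Assume $r_1=1$ and write $\sigma=\mathrm{diag}(f,g)$ with $f,g\in S_{\zoto{n-1}}$. Since $\sigma$ is even we have $\mathrm{sgn}(f)=\mathrm{sgn}(g)$, so $f^{-1}g$ is even. Exactly as in \prop{new2}, taking $\pi_1=\mathrm{diag}(\id,g')$ and conjugating by $h$ yields
\begin{align*}
\sigma=\mathrm{diag}(fh^{-1},fh^{-1})\,\mathrm{diag}(\id,\rho_1)\,\mathrm{diag}(\id,\rho_2)\,\mathrm{diag}(h,h)\,\mathrm{diag}(\id,g'^{-1}),
\end{align*}
where $g'$ makes $f^{-1}gg'$ free of $3/5$-cycles (via \lem{new2rearrange}) and $\rho_1\rho_2$ realizes the cycle pattern of $f^{-1}gg'$ (via \prop{new2sum}). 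These five blocks live on dimensions $r_1,r_4,r_3,r_1,r_2$; the task is to render all five concurrently even and to absorb the residual parity defects into the three leftover slots $\sigma_5,\sigma_7,\sigma_8$ on $r_3,r_1,r_2$.

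First I would establish even analogues of the two subroutines. Running \lem{new2rearrange} in the $(n-1)$-bit space controlled by $r_2$, each round contributes a double transposition $(\bm u,\bm t)(\bm v,\bm s)$ with $\bm v=\bm u^{\oplus r_2}$ and $\bm s=\bm t^{\oplus r_2}$; this is concurrent on $r_2$ but projects to a single transposition, hence is concurrently odd, so $g'$ is concurrently even iff the number of rounds is even, leaving at most one odd defect on $r_2$ (note $g'$ is always even as a full permutation, so $f^{-1}gg'$ remains even and \prop{new2sum} still applies). For the packing I would prove an even version of \prop{new2sum}: the cycle pattern of the even permutation $w$ (the cycle type of $f^{-1}gg'$, still free of $3/5$-cycles) is realizable as $\rho_1\rho_2$ with $\rho_1\in AC_{\zoto{n-1}}^{(r_4)}$, $\rho_2\in AC_{\zoto{n-1}}^{(r_3)}$ up to one controllable defect; the extra room from $n\ge10$ lets \textsc{RPack}/\textsc{TPack} build the required cycles while reserving fresh coordinates, and the freedom to shift a transposition between $\rho_1$ and $\rho_2$ (preserving the cycle type of $w$) lets me force $\rho_2$ concurrently even and push any remaining oddness onto $\rho_1$.

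The key device for clearing a defect is the gadget $G_j=\mathrm{diag}(s,s)$ with $s=(\bm a,\bm a^{\oplus r_j})$ a transposition flipping bit $r_j$: it is concurrent on $r_j$ and its $r_j$-projection is the identity, so $G_j\in AC_{\zoton}^{(r_j)}$, yet its projection along any other dimension is a single transposition, so $G_j$ is concurrently odd there. Hence a concurrently odd block $B=\mathrm{diag}(p,p)$ on $r_i$ factors as $B=(BG_j^{-1})\,G_j$, where $BG_j^{-1}$ is concurrently even on $r_i$ and $G_j$ is an extra concurrently even block on $r_j$. Choosing each $s$ supported on coordinates disjoint from the active cycles—possible since $n\ge10$—makes $G_j$ commute with the surrounding blocks, so it slides into its designated slot. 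I would then route the defects: the $g'$-defect on $r_2$ is cleared by a fresh $r_1$-gadget in slot $\sigma_7$; the conjugation defect on $r_1$ (present exactly when $f$ is odd, since then $\mathrm{diag}(fh^{-1},fh^{-1})$ and $\mathrm{diag}(h,h)$ cannot both be concurrently even) is cleared by a fresh $r_2$-gadget in slot $\sigma_8$; and the $\rho_1$-defect on $r_4$ is cleared by a fresh $r_3$-gadget in slot $\sigma_5$. Absent defects are filled with the identity, which is concurrently even. This yields precisely the prescribed blocks $\sigma_1,\sigma_4,\sigma_7\in AC_{\zoton}^{(r_1)}$, $\sigma_6,\sigma_8\in AC_{\zoton}^{(r_2)}$, $\sigma_2,\sigma_5\in AC_{\zoton}^{(r_3)}$, $\sigma_3\in AC_{\zoton}^{(r_4)}$.

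The hard part will be simultaneous parity control within the block budget. A priori there are four possible odd defects (from the $fh^{-1}/h$ pair, from $g'$, and from each of $\rho_1,\rho_2$) but only three correction slots, so the argument must show at most three are genuinely unavoidable: the freedom in the conjugator $h$ collapses the conjugation pair to a single defect, and the transposition-redistribution between $\rho_1$ and $\rho_2$ always makes $\rho_2$ concurrently even. Making this redistribution compatible with the cycle-type constraints of \textsc{RPack}/\textsc{TPack}, while guaranteeing the gadgets act on coordinates disjoint from all active cycles and from each other, is exactly where the hypothesis $n\ge10$ is spent; this step also subsumes the $\lem{35free}$-type obstruction and, as already illustrated for \prop{new1even}, I expect it to demand the most careful case analysis.
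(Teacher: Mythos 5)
Your proposal has a fatal flaw at its load-bearing step: the gadget $G_j$ does not exist under the paper's definitions. A transposition $s=(\bm a,\bm a^{\oplus r_j})$ \emph{flips} bit $r_j$ on its support, so any permutation built from it that moves $\bm a$ fails the membership condition $\sigma(\bm x)_{r_j}=\bm x_{r_j}$ of $S^{(r_j)}_{\zoton}$; commuting with the flip of bit $r_j$ (which your $G_j$ does satisfy) is not the CCRBF condition, so $G_j\notin SC^{(r_j)}_{\zoton}$, let alone $AC^{(r_j)}_{\zoton}$. Worse, the failure is not fixable by a cleverer gadget: in your factorization $B=(BG_j^{-1})G_j$ with $B, BG_j^{-1}\in SC^{(r_i)}_{\zoton}$, the group property forces $G_j\in SC^{(r_i)}_{\zoton}\cap SC^{(r_j)}_{\zoton}$, i.e. $G_j$ fixes both bits and acts identically on all four quadrants, $G_j=\mathrm{diag}_{r_i}(t,t)$ with $t=\mathrm{diag}_{r_j}(q,q)$ for some $q\in S_{\zoto{n-2}}$. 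But then $\mathrm{sgn}(t)=\mathrm{sgn}(q)^2=+1$: the $r_i$-projection of \emph{any} such correction factor is automatically even, so a single block on another dimension can never repair a concurrent-parity defect. This obstruction is exactly why the paper's Lemma~\ref{Odd4} needs \emph{four} concurrently even blocks (on $r_3,r_2,r_1,r_2$) to synthesize one concurrently odd block --- your one-gadget-per-defect routing, and hence your $5+3$ budget, contradicts it.

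The paper's actual proof avoids defect-routing almost entirely by making each ingredient concurrently even natively: Lemma~\ref{e35a2} strengthens \lem{new2rearrange} so that $g'\in AC^{(r_2)}$ directly (its Stage~IV pairs up the parity internally) and additionally guarantees $f^{-1}gg'$ has an even cycle; \lem{GcyclesEven1} and \lem{GcyclesEven2} strengthen \prop{new2sum} by constructing both-parity variants of the \textsc{RPack}/\textsc{TPack} packings (via cycle-pattern-preserving swaps of consecutive node pairs and a dimension switch), which requires either at least $12$ cycles or a cycle of length at least $12$ --- arranged, when necessary, by padding two $13$-cycles using $n\geq 10$; and the conjugation pair is handled by Lemma~\ref{evenh}: because an even cycle is present, an odd permutation commuting with the target exists, so $h$ can always be chosen \emph{even}, making $\mathrm{diag}(fsh,fsh)$ and $\mathrm{diag}(h^{-1},h^{-1})$ simultaneously concurrently even once $fs$ is even (your claim that this defect occurs ``exactly when $f$ is odd'' conflates $\mathrm{sgn}(fh^{-1})$ with $\mathrm{sgn}(f)$). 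The single unavoidable defect --- when $f$ and $gg'$ are both odd --- is paid for with all four blocks $\pi_5,\ldots,\pi_8$ via Lemma~\ref{Odd4}, with $\pi_8$ merging into $\pi_9^{-1}$ to meet the budget of $8$. Your instinct that transposition-redistribution between $\rho_1,\rho_2$ preserves the cycle type is in the right spirit (it is the engine inside \lem{GcyclesEven1}), but without replacing the gadget by the four-block mechanism the overall count and routing do not go through.
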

    
    Similar to the proof of \prop{new2}, here we first construct a concurrently even CCRBF $\pi$ such that $\sigma\pi$ is free of 3/5-cycle and $\sigma\pi$ has an even cycle. Then we use concurrently even CCRBFs to formulate cycles. Besides, we need to solve some special cases. Those proofs are similar to the corresponding ones and are put into the appendix. 
    Here is the new lemma for eliminating cycles.
    
    \begin{lemma}\label{e35a2}
       For $n\geq 8$, $r_1\in[n]$ and $\sigma\in A_{\zoton}$, there exists $\pi\in AC_{\zoton}^{(r_1)}$ such that $\sigma\pi$ is free of 3/5-cycle, and $\sigma\pi$ has at least an even cycle.
    \end{lemma}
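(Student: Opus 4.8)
The plan is to graft an even-parity and even-cycle bookkeeping onto the round-based elimination of \lem{new2rearrange}. The first thing I would record is that the elimination moves there are \emph{not} concurrently even: a move $\pi_i=(\bm u,\bm t)(\bm v,\bm s)$ with $\bm v=\bm u^{\oplus r_1}$ and $\bm s=\bm t^{\oplus r_1}$ has underlying $(n-1)$-bit permutation equal to the single transposition $(\bm u_{\mathrm{out}},\bm t_{\mathrm{out}})$, hence is concurrently odd. Thus the accumulated $\pi\in SC_{\zoton}^{(r_1)}$ produced by \lem{new2rearrange} lies in $AC_{\zoton}^{(r_1)}$ exactly when the number of rounds is even. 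So I would run the elimination verbatim, reaching $\rho:=\sigma\pi_1\cdots\pi_k$ free of $3/5$-cycles (the four-case analysis on whether $\bm t,\bm s\in\mathscr C_2$ and the bound $|T|\le 16<2^{n-1}$ for $n\ge 8$ carry over unchanged), while tracking the parity of $k$.

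Since each $\pi_i$ and $\sigma$ are even permutations on $\zoton$, $\rho$ is even; I then need a final correction $\pi_{k+1}$ that (i) makes the accumulated underlying permutation even, and (ii) forces the result to contain an even-length cycle, all without reintroducing a $3$- or $5$-cycle. The basic device is a concurrent move whose underlying permutation is a transposition $(\bm p_{\mathrm{out}},\bm q_{\mathrm{out}})$; on $\zoton$ it acts as $(0\bm p,0\bm q)(1\bm p,1\bm q)$, a product of two disjoint transpositions, so it flips the parity of the accumulated $f$ and simultaneously performs a merge or split on each of the two $r_1$-faces. The arithmetic I would exploit is that merging two odd-length cycles yields an even-length cycle, and that after eliminating $3/5$-cycles every nontrivial odd cycle has length at least $7$, so merging any two odd cycles lands at $2$ or at a value $\ge 8$ and never at $3$ or $5$.

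For the correction I would split on the parity of $k$. When $k$ is odd, a single concurrent transposition is needed to fix the parity; I pick its support so that both induced transpositions \emph{merge} (rather than split) cycles of $\rho$, with at least one merge joining two odd cycles, which produces an even cycle while keeping all merged lengths out of $\{3,5\}$. When $k$ is even and $\rho$ already contains an even cycle, nothing is required. When $k$ is even but every cycle of $\rho$ is odd—which forces at least two cycles, since a single cycle would have even length $2^n$, hence at least two odd cycles available—I instead use a concurrent move with even underlying permutation (two disjoint transpositions), merging odd cycles into even ones on both faces while keeping parity even. The room afforded by $n\ge 8$ (at least $2^{n-1}\ge 128$ outer coordinates and enough cycles to select from) makes all these choices realizable.

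The main obstacle is exactly the coupling intrinsic to concurrent moves: one transposition on the outer coordinates triggers two simultaneous merges or splits, one on each $r_1$-face, so I cannot adjust a single face in isolation. Verifying that there always exist choices of $\bm p,\bm q$ (and, in the even-$k$ degenerate case, of a second transposition) making \emph{all} the induced cycles avoid lengths $3$ and $5$ while still yielding an even cycle and the correct parity is where the real work lies; I expect this to require a short but careful enumeration of how the affected points distribute among the cycles of $\rho$, precisely in the spirit of the $\bm t,\bm s\in\mathscr C_2$ case split in \lem{new2rearrange}.
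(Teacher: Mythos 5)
Your bookkeeping is right as far as it goes: each elimination round of \lem{new2rearrange} is indeed concurrently odd (underlying a single transposition on the remaining $n-1$ bits), so the accumulated $\pi$ is concurrently even iff the number of rounds is even, and the paper's own final stage (Stage~IV, Case~1) is essentially your odd-$k$ correction: one concurrently odd move $(\bm u,\bm t)(\bm v,\bm s)$ whose two induced transpositions both merge cycles. But there is a genuine gap exactly where you flag ``the real work'': your correction step presupposes that suitable \emph{merges} are available, and this fails in degenerate configurations that actually occur. If $k$ is odd and $\rho$ is a single $2^n$-cycle (reachable, e.g., when one round merges a $3$-cycle into the long cycle), there are no two cycles to merge at all; any concurrent transposition acts by two simultaneous splits/rejoins of the one cycle, and the outcome depends on the cyclic order of the four affected points. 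Similarly, if $k$ is even and $\rho$ consists of exactly two odd cycles, your ``concurrently even double-transposition'' needs four induced transpositions to land in enough distinct cycles, which do not exist; after the first induced merge the remaining ones split the single resulting cycle. The paper resolves precisely this via two devices you have no counterpart for: Stage~I pre-plants a short even cycle $\mathscr C_0$ (so an even cycle never has to be manufactured by the last move) and protects it through Stages~II--III; and Stage~IV, Case~2 handles the degenerate ``all concurrent pairs in one cycle'' situation by first merging cycles with concurrently \emph{even} underlying $3$-cycles into a long even cycle of length at least $110$, then choosing three pairwise-distant concurrent pairs and using a pigeonhole on distances, together with the Order~1/2/3 analysis of how a concurrently odd swap breaks a cycle, to guarantee every fragment has length at least $6$ while an even cycle survives.

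A second, smaller hole: even in the non-degenerate odd-$k$ case, ``merging two odd cycles never yields $3$ or $5$'' does not control the \emph{companion} merge forced on the other $r_1$-face. Since $\rho$ may contain fixed points, $2$-cycles and $4$-cycles, the coupled transposition can realize $1+2=3$ or $1+4=5$, so you must argue the four points $0\bm p,0\bm q,1\bm p,1\bm q$ can be placed so that both merges avoid these sums and, moreover, that the second transposition still merges after the first has rearranged cycle membership (if $1\bm p$ lies in the cycle of $0\bm q$, the second move splits instead). None of this is routine, and it is where the paper's length bounds ($|T|\le 4+5+11$, the $\ell\ge 110$ threshold, $2^n\ge |W|+2\ell$ for $n\ge 8$) do their work. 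As written, your proposal is a correct strategy outline whose crux --- the degenerate-case correction --- is deferred rather than proved.
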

    
    The additional demand for an even cycle comes from the following lemma.
    
    \begin{lemma}\label{evenh}
        For $\sigma,\pi\in S_{\{0,1\}^n}$. $\sigma,\pi$ have the same cycle pattern and $\sigma$ has an even cycle. Then there exists $h\in A_{\zoton}$ such that $h\sigma h^{-1}=\pi$.
    \end{lemma}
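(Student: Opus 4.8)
The plan is to reduce the statement to the classical fact that two permutations of a finite set are conjugate in the full symmetric group precisely when they have the same cycle pattern, and then to repair the parity of the conjugating element using the guaranteed even cycle. First I would use the shared cycle pattern to produce \emph{some} conjugator: fix a bijection between the cycle decompositions of $\sigma$ and $\pi$ that matches cycles of equal length, and read off from it an element $g\in S_{\zoton}$ with $g\sigma g^{-1}=\pi$ (this is the standard explicit construction, matching the $i$-th point of a cycle of $\sigma$ with the $i$-th point of the corresponding cycle of $\pi$). If $g$ happens to be even, then $h:=g$ already works and we are done.

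The only obstacle is that this $g$ may be odd, so that $g\notin A_{\zoton}$. The key observation is that the set of \emph{all} $w\in S_{\zoton}$ with $w\sigma w^{-1}=\pi$ is exactly the coset $g\cdot C(\sigma)$, where $C(\sigma)$ denotes the centralizer of $\sigma$ in $S_{\zoton}$. Hence it suffices to exhibit a single \emph{odd} permutation inside $C(\sigma)$: multiplying $g$ by such an element flips its parity without changing the effect of conjugation.

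This is exactly where the hypothesis that $\sigma$ has an even cycle is used. Let $c=(\bm x_1,\bm x_2,\ldots,\bm x_{2m})$ be a cycle of even length in $\sigma$, and regard $c$ as an element of $S_{\zoton}$ that fixes every point outside its support. Since the support of $c$ is $\sigma$-invariant and $\sigma$ restricts to $c$ on it (while $c$ is the identity on the complement), $c$ commutes with $\sigma$, i.e.\ $c\in C(\sigma)$. Moreover a cycle of length $2m$ is a product of $2m-1$ transpositions and is therefore an odd permutation. Consequently, setting $h:=gc$ gives
$$
    h\sigma h^{-1}=g\,(c\sigma c^{-1})\,g^{-1}=g\sigma g^{-1}=\pi,
$$
and $h$ is a product of the odd permutation $g$ with the odd permutation $c$, hence even, so $h\in A_{\zoton}$, as required. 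There is no real technical difficulty here; the whole content is the realization that an even-length cycle of $\sigma$ supplies an odd element of the centralizer, which lets us correct the parity of an arbitrary conjugator.
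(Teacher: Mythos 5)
Your proposal is correct and follows essentially the same route as the paper: the paper likewise takes an arbitrary conjugator $h_1$ with $h_1\sigma h_1^{-1}=\pi$ and, when $h_1$ is odd, corrects its parity by right-multiplying with the even-length cycle of $\sigma$ itself (an odd permutation $h_0$ commuting with $\sigma$). Your observation that this cycle is an odd element of the centralizer $C(\sigma)$, so that $h:=gc$ remains a conjugator but becomes even, is exactly the paper's argument phrased slightly more explicitly.
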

    
    These following 2 lemmas ensure that cycle pattern can be constructed by 2 concurrently even CCRBFs on different dimensions \textbf{under some restrictions}.
    
    \begin{lemma}\label{lem:GcyclesEven1}
        For $\sigma \in A_{\zoton}$ which is free of 3/5-cycle and contains at least 12 cycles with the length of at least $2$, there exist $\pi \in AC^{(r_1)}_{\zoton}$ and $\tau \in AC^{(r_2)}_{\zoton}$ such that $\pi\tau$ has the same cycle pattern with $\sigma$.
    \end{lemma}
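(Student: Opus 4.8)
The plan is to overlay a parity bookkeeping on top of the construction already used for \prop{new2sum}. Recall that there we pair the cycles of $\sigma$ by matching parities and realize each pair (or each $2\bmod 4$ quadruple) on a disjoint rectangular support $\{0,1\}^2\times T'$ via \textsc{RPack} (\alg{rpack}) or \textsc{TPack} (\alg{tpack}), producing $\pi\in SC_{\zoton}^{(r_1)}$ and $\tau\in SC_{\zoton}^{(r_2)}$ whose product has the desired cycle pattern. The only extra requirement here is that the underlying $(n-1)$-bit permutations be even, i.e. $\pi\in AC_{\zoton}^{(r_1)}$ and $\tau\in AC_{\zoton}^{(r_2)}$, which is a statement about the parities $\epsilon_\pi:=\mathrm{sgn}(\pi|_{-r_1})$ and $\epsilon_\tau:=\mathrm{sgn}(\tau|_{-r_2})$.

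First I would record two structural facts. (i) Since the blocks act on disjoint supports, and restricting a block to $-r_1$ keeps its support disjoint from the others in the $(n-1)$-bit space, the parities $\epsilon_\pi,\epsilon_\tau$ are the $\bmod\,2$ sums of the per-block contributions; so it suffices to control the parity each individual \textsc{RPack}/\textsc{TPack} block contributes. (ii) If one insists on realizing the \emph{exact} permutation output by a block, the pair $(\epsilon_\pi,\epsilon_\tau)$ is forced: any two decompositions $\pi\tau=\pi'\tau'$ with the prescribed supports differ by some $c\in SC_{\zoton}^{(r_1)}\cap SC_{\zoton}^{(r_2)}$, and every such $c$ is a fourfold-replicated permutation of the remaining $(n-2)$ bits, whence both $c|_{-r_1}$ and $c|_{-r_2}$ are even. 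Thus a block's parity contribution can only be altered by realizing a \emph{different} permutation of the same cycle type on the same support, i.e. by rerouting the arrows inside the block.

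The core step is therefore to show that each block admits rerouted realizations covering enough parity vectors in $\mathbb Z_2^2$. A short computation (already visible in the minimal $2,2$ configuration, whose three inequivalent realizations give exactly $(1,0)$, $(0,1)$, $(1,1)$) shows that on a thick support $\{0,1\}^2\times T'$ one can, by relocating a single orbit point across an $r_1$- or $r_2$-edge, toggle $\epsilon_\pi$ alone, toggle $\epsilon_\tau$ alone, or toggle both, while keeping the cycle lengths intact and staying free of $3/5$-cycles. I would package these as three elementary moves and verify each against the four cases of \textsc{RPack} and \textsc{TPack}. Finally, because $\sigma$ has at least $12$ cycles of length $\ge 2$, the pairing yields at least six blocks; this slack guarantees that whatever residual defect $(\epsilon_\pi,\epsilon_\tau)\neq(0,0)$ the naive construction leaves, one can select blocks admitting the required toggles and flip them to drive the total parity to $(0,0)$.

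The main obstacle I expect is exactly the coupling between the two parities together with the cycle-length constraints: as the $2,2$ example shows, a single small block cannot be made even in both coordinates at once, and a move repairing $\epsilon_\pi$ may disturb $\epsilon_\tau$. The delicate part is thus to certify, through a finite case analysis over the lengths appearing in the four packing cases, that the available moves span $\mathbb Z_2^2$ across the guaranteed blocks — and in particular to pin down why a lower bound such as $12$ (rather than a smaller constant) is needed, so that blocks of suitable lengths are always present to realize the two independent toggles of $\epsilon_\pi$ and $\epsilon_\tau$.
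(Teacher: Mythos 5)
Your overall strategy coincides with the paper's: realize the cycle pattern via the disjoint-support \textsc{RPack}/\textsc{TPack} packing of \prop{new2sum}, observe that the concurrent parities add over blocks with disjoint supports, and then repair the total parity by rerouting arrows inside blocks while preserving cycle lengths. Your observation (ii) is correct and in fact sharper than anything the paper states explicitly: if $\pi\tau=\pi'\tau'$ with the prescribed supports, then $c:=\pi'^{-1}\pi=\tau'\tau^{-1}\in SC_{\zoton}^{(r_1)}\cap SC_{\zoton}^{(r_2)}$ is fourfold-replicated, so $c|_{-r_1}$ and $c|_{-r_2}$ are both even and $(\epsilon_\pi,\epsilon_\tau)$ is indeed an invariant of the realized permutation; changing parity genuinely requires realizing a \emph{different} permutation of the same cycle type, which is what both you and the paper do.

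The gap is that your core step --- that enough blocks admit rerouted realizations toggling $\epsilon_\pi$ alone, $\epsilon_\tau$ alone, or both --- is asserted rather than proved, and you flag it yourself; that verification is precisely the content of the lemma. The paper only proves \emph{half} of your claim and sidesteps the rest: it constructs a $\tau$-side toggle directly (the ``swap two pairs of consecutive nodes'' move of Figure~\ref{SwapPairs}, illustrated for $a=b=2k$), yielding $\tau_1'$ of opposite concurrent parity with $\pi_1\tau_1'$ of unchanged cycle pattern, and then obtains the $\pi$-side toggle for free by applying the same move to a \emph{second, disjoint} group of cycles and conjugating by the bit-exchange $\rho_{r_1,r_2}$ (the $\mathrm{switch}$ map), which transports an $SC^{(r_2)}$-toggle into $SC^{(r_1)}$; choosing among the four products $\pi^{(i)}\tau^{(j)}$ then forces parity $(0,0)$ in both coordinates. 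This matters because a per-block toggle in \emph{both} coordinates need not exist --- your own $2,2$ computation shows $(0,0)$ is unattainable within one small block --- so a case analysis over all four \textsc{RPack}/\textsc{TPack} shapes is delicate exactly where you predict. Your accounting of the hypothesis is also off: the paper does not use ``$\geq 6$ blocks of slack,'' but extracts from the $12$ cycles five pairs with even length-sums avoiding $\{2,4\}$, whence two disjoint groups of two pairs each with total length $\equiv 0 \bmod 4$; these two groups carry the two independent toggles, and all remaining cycles are packed untouched as $\pi_3,\tau_3$. To complete your proof you must either carry out the per-case toggle verification (with special care for the degenerate small patterns) or adopt the switch trick, which reduces the entire burden to the single $\tau$-side move.
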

    
    \begin{lemma}\label{lem:GcyclesEven2}
        For $\sigma \in A_{\zoton}$ which is free of 3/5-cycle and contains a cycle with the length of at least $12$, there exist $\pi \in AC^{(r_1)}_{\zoton}$ and $\tau \in AC^{(r_2)}_{\zoton}$ such that $\pi\tau$ has the same cycle pattern with $\sigma$.
    \end{lemma}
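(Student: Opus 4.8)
The plan is to follow the scheme of \prop{new2sum} — pair up the cycles of $\sigma$ and realize each pair by an in-place packing (\textsc{RPack} or \textsc{TPack}) on a disjoint rectangular/trapezoidal support — while additionally forcing the two assembled blocks $\pi\in SC^{(r_1)}_{\zoton}$ and $\tau\in SC^{(r_2)}_{\zoton}$ to be \emph{concurrently even}, that is, so that the restrictions $f_\pi=\pi|_{-r_1}$ and $f_\tau=\tau|_{-r_2}$ are even permutations on $n-1$ bits. The crucial structural feature is that the packing blocks act on pairwise disjoint supports, so $\operatorname{sign}(f_\pi)$ (resp. $\operatorname{sign}(f_\tau)$) is simply the product over blocks of the parities that each block contributes. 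Hence, once the cycle pattern is matched, the whole extra requirement reduces to controlling exactly two parity bits.

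First I would run Stage I (pairing) and Stage II (packing) of \prop{new2sum} unchanged to obtain a pair $\pi_0,\tau_0$ with $\pi_0\tau_0$ having the cycle pattern of $\sigma$; this is legitimate because $\sigma$ is free of 3/5-cycle, so \textsc{RPack} and \textsc{TPack} never err. I would then record the parities $s_\pi=\operatorname{sign}(f_{\pi_0})$ and $s_\tau=\operatorname{sign}(f_{\tau_0})$ produced by this standard construction. If both are already even we are finished; otherwise we must flip up to two parity bits without disturbing the multiset of cycle lengths. This is exactly where the hypothesis of a cycle of length at least $12$ enters: I would designate the packing block containing that long cycle as a \emph{parity reservoir} and show that, because its support is large, the long cycle can be re-routed so as to realize the \emph{same} multiset of cycle lengths but with the opposite parity of $f_\pi$, or of $f_\tau$, or of both. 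Exhibiting such re-routings in each relevant residue case of \textsc{RPack}/\textsc{TPack} — and checking that a block of size $\ge 12$ leaves enough free vertices to carry them out while still respecting the case thresholds (e.g. $a,b\ge 5$ or $b\ge 7$) — is the bulk of the work; the constant $12$ is precisely the room needed to accommodate these detours in the worst residue class.

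The main obstacle is establishing that the reservoir block can \emph{independently} realize all four parity combinations $(s_\pi,s_\tau)\in\{\pm1\}^2$, rather than being tied to some hidden global invariant that couples the two parities. I would argue for independence by designing two separate re-routings: one that touches only the solid-arrow ($\pi$-) edges of the block and changes $f_\pi$ by a single odd-length rearrangement while leaving $f_\tau$ untouched, and a symmetric one for the dashed-arrow ($\tau$-) edges. Granting this independence, whatever parities the remaining blocks contribute, I can tune the reservoir block to make the global $f_\pi$ and $f_\tau$ both even. At that point $\pi$ and $\tau$ are concurrently even and $\pi\tau$ shares the cycle pattern of $\sigma$, which is exactly the conclusion of \lem{GcyclesEven2}.
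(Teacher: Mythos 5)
Your proposal matches the paper's own proof in essentially every respect: the paper also treats the pair of cycles containing the length-$\geq 12$ cycle as a parity reservoir, reducing the lemma (via the disjoint-support parity-multiplicativity and the assembly argument from \lem{GcyclesEven1}) to exhibiting $\pi_i,\tau_i$ realizing the $a,b$-cycle in all four concurrent-parity combinations $(s_\pi,s_\tau)$, and it establishes exactly the independence you flag as the main obstacle by explicit case-by-case re-routings (cases $a=b$ even, $a\neq b$ even, $a=b$ odd, $a\neq b$ odd with $b\geq 7$, and $b=1$), including a $\tau$-side-only modification ($\rho',\rho''$ concurrently odd, appended without changing the cycle pattern) that decouples the two parity bits. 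Your sketch defers those explicit constructions, but the strategy, the role of the constant $12$, and the independence argument are the same as the paper's.
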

    
    The last preparation is to construct a concurrently odd CCRBF by $4$ concurrently even CCRBFs.
     \begin{lemma}\label{Odd4}
     For $n\geq 3$, distinct $r_1,r_2,r_3\in[n]$, there exists concurrently odd $\pi\in SC^{(r_1)}_{\{0,1\}^n}$, such that $\pi=\tau_1\tau_2\tau_3\tau_4$, where $\tau_1\in AC^{(r_3)}_{\{0,1\}^n},\tau_2,\tau_4\in AC^{(r_2)}_{\{0,1\}^n},\tau_3\in AC^{(r_1)}_{\{0,1\}^n}$.
     \end{lemma}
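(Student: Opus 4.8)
The plan is to reduce the statement to one explicit finite construction on the three active coordinates $r_1,r_2,r_3$, and then pad with the remaining $n-3$ bits. Since the lemma only asserts the \emph{existence} of one concurrently odd $\pi$, I am free to design a convenient one. I would relabel the bits so that a point is recorded by its $(r_1,r_2,r_3)$-values with all other $n-3$ bits fixed to $0$, and build $\pi$ together with the four factors so that they are supported entirely on the sub-cube $C=\{\,\bm x\in\zoton : \bm x_i=0\ \text{for } i\notin\{r_1,r_2,r_3\}\,\}$, extending every map by the identity off $C$.

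Concretely, writing each triple as its $r_1r_2r_3$-bits, I would take
\begin{align*}
\pi    &= (000,010,001,011)(100,110,101,111),\\
\tau_1 &= (000,010,100)(001,011,101),\\
\tau_2 &= (000,101,001)(010,111,011),\\
\tau_3 &= (000,010,011)(100,110,111),\\
\tau_4 &= (000,001,101)(010,011,111),
\end{align*}
each extended by the identity on points outside $C$. A direct check then shows $\tau_1\in AC_{\zoton}^{(r_3)}$, $\tau_2,\tau_4\in AC_{\zoton}^{(r_2)}$, and $\tau_3\in AC_{\zoton}^{(r_1)}$: every $\tau_i$ is a concurrent block whose underlying action on its two non-control active bits is an even $3$-cycle. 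Meanwhile $\pi\in SC_{\zoton}^{(r_1)}$ is concurrently \emph{odd}, its action on the $(r_2,r_3)$-bits being the $4$-cycle $(00,10,01,11)$. The remaining step is to verify $\pi=\tau_1\tau_2\tau_3\tau_4$, which is a routine composition of permutations on the $8$ points of $C$ (the identity elsewhere).

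For the reduction to general $n$, I would observe that, because every $\tau_i$ and $\pi$ moves only points of $C$, each restriction $\tau_i|_{-r_j}$ is a single even $3$-cycle on $\zoto{n-1}$ — hence $\tau_i$ is genuinely concurrently even for all $n\geq 3$ — whereas $\pi|_{-r_1}$ is a single $4$-cycle on $\zoto{n-1}$, which is odd, so $\pi$ is concurrently odd; and the identity $\pi=\tau_1\tau_2\tau_3\tau_4$ lifts verbatim since all maps agree with the identity off $C$. I expect the real difficulty to be \emph{finding} the four blocks rather than verifying them: the concurrency constraint forces each even block to act identically on both slices of its control bit, so no block — and no conjugation by such blocks, which preserves the parity of the controlled action — can by itself manufacture the odd parity of $\pi$. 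The odd parity must instead emerge from the interaction of blocks controlling three \emph{different} bits $r_1,r_2,r_3$; the displayed triple of even $3$-cycles is exactly such an interaction, and locating it (e.g.\ by a short search over the even-action, i.e.\ $A_4$-valued, choices in each direction) is the crux of the argument.
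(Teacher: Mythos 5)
Your proposal is correct and takes essentially the same route as the paper's proof: an explicit construction on the three active bits $r_1,r_2,r_3$ (yours a concurrent $4$-cycle pair built from four concurrent $3$-cycle pairs, the paper's a concurrent transposition pair built from four $3$-cycle pairs), extended by the identity outside the subcube $C$ for general $n$. I verified your identity $\pi=\tau_1\tau_2\tau_3\tau_4$ pointwise on the $8$ points of $C$ under the paper's convention that products act right-to-left ($\tau_4$ applied first), and it holds, as do the claimed memberships $\tau_1\in AC_{\zoton}^{(r_3)}$, $\tau_2,\tau_4\in AC_{\zoton}^{(r_2)}$, $\tau_3\in AC_{\zoton}^{(r_1)}$ and the concurrent oddness of $\pi$ for every $n\geq3$.
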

     Finally we give proof of \prop{new2even}.
    \begin{proof} [Proof of \prop{new2even}]
        W.l.o.g, assume $r_1 = 1, r_2 = 2$. Similar to the proof of \prop{new2}, since $\sigma\in A_{\zoton}^{(r_1)}$, there exist $f,g\in S_{\zoto{n-1}}$ such that $\sigma=\begin{bmatrix}f&\\&g\end{bmatrix}$. Observe that for any $g',s, h\in S_{\{0,1\}^{n-1}}$, let $\pi_9=\begin{bmatrix}\id&\\&g'\end{bmatrix}$, we have
        \begin{align*}
            \sigma\pi_9=&\begin{bmatrix}
                fsh & \\
                    & fsh
            \end{bmatrix}
            \begin{bmatrix}
                \id & \\
                    & h^{-1}(fs)^{-1}(gg's)h
                \end{bmatrix}\\
            &\begin{bmatrix}
                h^{-1}  & \\
                        & h^{-1}
            \end{bmatrix}
            \begin{bmatrix}
                s^{-1}  & \\
                        & s^{-1}
            \end{bmatrix}.
        \end{align*}
        
        We first use Lemma \ref{e35a2} to choose $g'\in AC^{(r_2)}_{\{0,1\}^{n-1}}$, such that $f^{-1}gg'$ is free of 3/5-cycle and has an even cycle. For convenience, we  perform another pre-processing. Technically, if $f^{-1}gg'$ has a cycle of length $\geq 12$ or has at least $12$ cycles, we do nothing. Otherwise, there are at least $13$ fix-point pairs $(\bm x_1, \bm y_1), \ldots, (\bm x_{13}, \bm y_{13})$ in $f^{-1}gg'$ satisfying $(\bm x_i)_{r_1} = (\bm y_i)_{r_1} = 1$ and $\bm x_i = \bm y_i^{\oplus r_2}$ for all $i\in [13]$ since $n\geq 10$. 
        Thus, we can perform $g''\in AC^{(r_2)}_{\{0,1\}^{n-1}}$ to add two $13$-cycles without affecting other cycle in $f^{-1}gg'$. For simplicity, we update $g'$ as $g'g''$.
        
        Since $\sigma, \pi_9$ are even, $f, gg'$ are either both even or both odd. If $f,gg'$ are both even, we choose $s=\id$. If otherwise, using Lemma \ref{Odd4}, we choose concurrently odd $\begin{bmatrix}s^{-1}&\\&s^{-1}\end{bmatrix}\in SC^{(r_1)}_{\zoton}$ where  $s^{-1}$ is odd,
        and construct it with $4$ even blocks in order $r_3,r_2,r_1,r_2$ (i.e., $\pi_5,\pi_6,\pi_7,\pi_8$). Then $fs,gg's$ will be both even. 
     
        Next we synthesize $\begin{bmatrix}\id&\\&h^{-1}(fs)^{-1}(gg's)h\end{bmatrix}$. Note that $f^{-1}gg'$ either contains at least 12 cycles, or contains a long cycle of length at least $12$. According to \lem{GcyclesEven1} and \lem{GcyclesEven2}, there exist $\tau_1 \in AC_{\zoto{n-1}}^{(r_3)}$ and $\tau_2 \in AC_{\zoto{n-1}}^{(r_4)}$ such that $\tau_1\tau_2$ has the same cycle pattern with $f^{-1}gg'$ and $(fs)^{-1}gg's$. Furthermore, since $f^{-1}gg'$ has an even cycle, by Lemma \ref{evenh}, there exists $h\in A_{\{0,1\}^{n-1}}$ such that $\tau_1\tau_2=h^{-1}(fs)^{-1}gg'sh$.
        
        To sum up, let $\pi_1=\begin{bmatrix}(fs)h&\\&(fs)h\end{bmatrix}$, $\pi_2=\begin{bmatrix}id &\\&\tau_1\end{bmatrix}$, $\pi_3=\begin{bmatrix}id &\\&\tau_2\end{bmatrix}$, $\pi_3=\begin{bmatrix}h^{-1} &\\&h^{-1}\end{bmatrix}$. Then $\pi_1,\pi_4,\pi_7\in AC^{(r_1)}_{\{0,1\}^n}$,  $\pi_6,\pi_8,\pi_9\in AC^{(r_2)}_{\{0,1\}^n}$,  $\pi_2,\pi_5\in AC^{(r_3)}_{\{0,1\}^n}$,  $\pi_3\in AC^{(r_4)}_{\{0,1\}^n}$, and
        $$
            \sigma=\pi_1\pi_2\pi_3\pi_4\pi_5\pi_6\pi_7(\pi_8\pi_9^{-1}).
        $$
    \end{proof}

\section{Conclusion and open questions}\label{sec:con}
    In our work, we offer a method to decompose arbitrary even $n$-bit reversible Boolean function (RBF) into $7$ blocks of $(n-1)$-bit RBFs for $n\geq 6$, or into $10$ blocks of even $(n-1)$-bit RBFs for $n\geq 10$,
    where the blocks have certain freedom to choose. Technically, we transform even RBF to an even controlled reversible Boolean function (CRBF) by $3$ blocks. Then we transform the even CRBF to identity by $5$ blocks.
    In addition, the last block of the first step can be merged with the first block of the second step, thus providing a $7$-depth decomposition. The road map of even block depth is similar but much more complicated.

    One direct open question is whether the constant $7$ (and $10$) can be further improved and what is the optimal constant.
    Besides, one may try to relax the conditions that $n\geq 6$ and $n\geq 10$. 
    Another interesting question is, given an even $n$-bit RBF, if we are allowed to use general unitary blocks to synthesize it, can we use strictly fewer blocks than only using RBF blocks?

    \bibliographystyle{ieeetr}
    \bibliography{reference}

\begin{thebibliography}{10}

\bibitem{bennett1988notes}
C.~H. Bennett, ``Notes on the history of reversible computation,'' {\em ibm
  Journal of Research and Development}, vol.~32, no.~1, pp.~16--23, 1988.

\bibitem{saeedi2013synthesis}
M.~Saeedi and I.~L. Markov, ``Synthesis and optimization of reversible
  circuits-a survey,'' {\em ACM Computing Surveys (CSUR)}, vol.~45, no.~2,
  p.~21, 2013.

\bibitem{arabzadeh2010rule}
M.~Arabzadeh, M.~Saeedi, and M.~S. Zamani, ``Rule-based optimization of
  reversible circuits,'' in {\em Proceedings of the 2010 Asia and South Pacific
  Design Automation Conference}, pp.~849--854, IEEE Press, 2010.

\bibitem{Landauer1961Irreversibility}
R.~Landauer, ``Irreversibility and heat generation in the computing process,''
  {\em Ibm J.res.dev}, vol.~5, no.~1.2, pp.~261--269, 1961.

\bibitem{book}
M.~A. Nielsen and I.~L. Chuang, {\em Quantum Computation and Quantum
  Information}.
\newblock Cambridge University Press, 2010.

\bibitem{shor1999polynomial}
P.~W. Shor, ``Polynomial-time algorithms for prime factorization and discrete
  logarithms on a quantum computer,'' {\em SIAM review}, vol.~41, no.~2,
  pp.~303--332, 1999.

\bibitem{Grover1996A}
L.~K. Grover, ``A fast quantum mechanical algorithm for database search,'' in
  {\em Twenty-Eighth ACM Symposium on Theory of Computing}, pp.~212--219, 1996.

\bibitem{shende2003synthesis}
V.~V. Shende, A.~K. Prasad, I.~L. Markov, and J.~P. Hayes, ``Synthesis of
  reversible logic circuits,'' {\em IEEE Transactions on Computer-Aided Design
  of Integrated Circuits and Systems}, vol.~22, no.~6, pp.~710--722, 2003.

\bibitem{maslov2008quantum}
D.~Maslov, G.~W. Dueck, D.~M. Miller, and C.~Negrevergne, ``Quantum circuit
  simplification and level compaction,'' {\em IEEE Transactions on
  Computer-Aided Design of Integrated Circuits and Systems}, vol.~27, no.~3,
  pp.~436--444, 2008.

\bibitem{takahashi2009quantum}
Y.~Takahashi, S.~Tani, and N.~Kunihiro, ``Quantum addition circuits and
  unbounded fan-out,'' {\em arXiv preprint arXiv:0910.2530}, 2009.

\bibitem{selinger2018finite}
P.~Selinger, ``A finite alternation result for reversible boolean circuits,''
  {\em Science of Computer Programming}, vol.~151, pp.~2--17, 2018.

\bibitem{saeedi2010reversible}
M.~Saeedi, M.~S. Zamani, M.~Sedighi, and Z.~Sasanian, ``Reversible circuit
  synthesis using a cycle-based approach,'' {\em ACM Journal on Emerging
  Technologies in Computing Systems (JETC)}, vol.~6, no.~4, p.~13, 2010.

\bibitem{saeedi2010block}
M.~Saeedi, M.~Arabzadeh, M.~S. Zamani, and M.~Sedighi, ``Block-based
  quantum-logic synthesis,'' {\em arXiv preprint arXiv:1011.2159}, 2010.

\bibitem{shende2006synthesis}
V.~V. Shende, S.~S. Bullock, and I.~L. Markov, ``Synthesis of quantum-logic
  circuits,'' {\em IEEE Transactions on Computer-Aided Design of Integrated
  Circuits and Systems}, vol.~25, no.~6, pp.~1000--1010, 2006.

\bibitem{devitt2016performing}
S.~J. Devitt, ``Performing quantum computing experiments in the cloud,'' {\em
  Physical Review A}, vol.~94, no.~3, p.~032329, 2016.

\bibitem{divincenzo2000physical}
D.~P. DiVincenzo, ``The physical implementation of quantum computation,'' {\em
  Fortschritte der Physik: Progress of Physics}, vol.~48, no.~9-11,
  pp.~771--783, 2000.

\bibitem{zulehner2018efficient}
A.~Zulehner, A.~Paler, and R.~Wille, ``An efficient methodology for mapping
  quantum circuits to the {IBM} {QX} architectures,'' {\em IEEE Transactions on
  Computer-Aided Design of Integrated Circuits and Systems}, 2018.

\bibitem{almudever2017engineering}
C.~G. Almudever, L.~Lao, X.~Fu, N.~Khammassi, I.~Ashraf, D.~Iorga,
  S.~Varsamopoulos, C.~Eichler, A.~Wallraff, L.~Geck, {\em et~al.}, ``The
  engineering challenges in quantum computing,'' in {\em 2017 Design,
  Automation \& Test in Europe Conference \& Exhibition (DATE)}, pp.~836--845,
  IEEE, 2017.

\bibitem{veldhorst2017silicon}
M.~Veldhorst, H.~Eenink, C.~Yang, and A.~Dzurak, ``Silicon {CMOS} architecture
  for a spin-based quantum computer,'' {\em Nature communications}, vol.~8,
  no.~1, p.~1766, 2017.

\end{thebibliography}

   \clearpage
    
    \section{Appendix}
    
    \begin{proof}[Proof of \lem{cannot}]
        Assume for contradiction there exist $\sigma_1\in SC_{\zoton}^{(r_1)},\pi_1,\pi_2\in SC_{\zoton}^{(r_2)},$ such that $\sigma\pi_1\sigma_1\pi_2\in S_{\zoton}^{(r_1)}$. Construct the black-white cuboid for $\sigma$.
        
        For the $1^{\it st}$ case, define $\eta$ as the number of \tikz[thick]{\draw[dashed] (0,0,0) -- (0,0.7,0); \blackdot{0}{0}{0}; \whitedot{0}{0.7}{0};} and \tikz[thick]{\draw[dashed] (0,0,0) -- (0,0.7,0); \whitedot{0}{0}{0}; \blackdot{0}{0.7}{0};}. It is easy to check $\eta\equiv 2\mod 4$ at the beginning and any permutation $\pi_1 \in SC_{\zoton}^{(r_2)}$ does not changes the value of $\eta\mod 4$. Note that any permutation $\sigma_1\in SC_{\zoton}^{(r_1)}$ does not change $\eta$. Thus, $\pi_2\in SC_{\zoton}^{(r_2)}$ can not transform all node to white, since it requires $\eta\equiv 0\mod 4$, which is a contradiction.
        
        For the $2^{\it nd}$ case, define $\xi$ as the number of \tikz[thick]{\draw[dashed] (0,0,0) -- (0,0.7,0); \blackdot{0}{0}{0}; \blackdot{0}{0.7}{0};}. It is easy to check $\xi$ is odd at the beginning and any permutation $\pi_1 \in SC_{\zoton}^{(r_2)}$ does not change its parity. Note that any permutation $\sigma_1\in SC_{\zoton}^{(r_1)}$ does not change $\xi$. Thus, $\pi_2\in SC_{\zoton}^{(r_2)}$ can not transform all node to white, since it requires $\xi$ is even, which is a contradiction.
    \end{proof}
    
    \begin{proof}[Proof of \lem{new1tight}]
        Let 
        $$
        \sigma_3=(000,001) (101,111) (010,110)\in S_{\{0,1\}^3},
        $$
        then define $\sigma_{k+1}$ recursively based on $\sigma_k$ and let $\sigma=\sigma_n$.
        Assume $\bm u\in\{0,1\}^k$ is a fix-point under $\tau_k$, then
        $$
        \sigma_{k+1}(x)=\begin{cases}
            0\sigma_k(\bm v), &  x=0\bm v,\bm v\neq\sigma_k(v)\\
            1\bm u, & \bm x=0\bm u\\
            0\bm u, & \bm x=1\bm u\\
            \bm x, & \it otherwise.
        \end{cases}
        $$
        Thus $\sigma_k\in S_{\{0,1\}^k}$ is the composition of $k$ disjoint swaps.
        
        We paint $\bm x\in\{0,1\}^n$ black if 
        $\sigma(\bm x)_{r_3}\neq\bm x_{r_3}$. Therefore, only two $\bm x$'s will be black
        and their coordinates are distinct only in $r_3$-th.
        Thus, w.l.o.g, assume $r_1,r_2,r_3$ are distinct. 
        Following the same notation $a_i$'s, $b_i$'s in \sec{visualizing}, we have
        $a_2=b_2=1,a_1=a_3=a_4=b_1=b_3=b_4=0$. Thus after $\tau\in SC_{\zoton}^{(r_1)}$, $a_1+b_2=b_1+a_2=1$.
        Since $\pi\in SC_{\zoton}^{(r_2)}$ will have to eliminate all black nodes, the pattern in the $r_2=0$ part should be
        the same with the $r_2=1$ part. Thus a contradiction.
    \end{proof}

    \begin{proof}[Proof of \lem{35free}]
        W.l.o.g, assume $r_1=1, r_2=2$. 
        Suppose $\sigma=\sigma_1\sigma_2$ is a $3$-cycle. 
            \begin{itemize}
                \item If $\sigma\in S_{\zoto{n}}^{(1)}$, then $\sigma_2=\sigma_1^{-1}\sigma$ must belong to $S_{\zoto{n}}^{(1)}\cap SC_{\zoto{n}}^{(2)}$, thus there exist $\tau_0,\tau_1\in S_{\zoto{n-2}}$ that for any $\bm x\in\{0,1\}^{n-2}$, $\sigma_2(0a\bm x)=0a\tau_0(\bm x)$,$\sigma_2(1a\bm x)=1a\tau_1(\bm x)$, for $a=0,1$. 

                For $\sigma_1\in SC_{\zoto{n}}^{(1)}$, there exists $g\in S_{\zoto{n-1}}$ such that for any $\bm y\in\{0,1\}^{n-1}$, $\sigma_1(a\bm y)=ag(\bm y)$. Then $\sigma(ab\bm x)=ag(b\tau_a(\bm x))$. Thus, if $\sigma$ is 3-cycle, then w.l.o.g, we can assume $\sigma(0b\bm x)=0b\bm x$, then $g(b\tau_0(\bm x))=b\bm x$ and $\sigma(1b\bm x)=1g(b\tau_1(\bm x))=1b\tau_0^{-1}\tau_1(\bm x)$. Patterns in $\{10\}\times\{0,1\}^{n-2}$ should be the same with patterns in $\{11\}\times\{0,1\}^{n-2}$. Thus patterns in the whole space can not be only a cycle., which means $\sigma$ can not be a $3$-cycle.

                \item If $\sigma\in S_{\zoton}^{(2)}$, the analysis is similar as $\sigma^{-1}=\sigma_2^{-1}\sigma_1^{-1}$.
                
                \item If $\sigma\not\in S_{\zoton}^{(1)}\cup S_{\zoton}^{(2)}$.  We prove $\sigma\sigma_2^{-1}\sigma_1^{-1}$ does not belong to $S_{\zoton}^{(1)}$ thus it can not be $\id$. Towards this, we construct a colored cuboid described in \sec{colored graph}. Then the cuboid will have $2$ black nodes. 
                
                Notice that $\sigma_1^{-1}$ does not change the number of black nodes. Thus the colored cuboid for $\sigma\sigma_2^{-1}$ is white. If we use $\eta$ to denote the number of black nodes. Then $\eta$ in the colored cuboid for $\sigma$ must satisfy $\eta\equiv 0\mod 4$, thus a contradiction.
            \end{itemize} 
            
        On the other hand, suppose $\sigma=\sigma_1\sigma_2$ is a $5$-cycle.
            \begin{itemize}
                \item If $\sigma\in S_{\zoton}^{(1)}\cup S_{\zoton}^{(2)}$, the analysis is similar.
                
                \item If $\sigma\not\in S_{\zoton}^{(1)}\cup S_{\zoton}^{(2)}$. Construct a colored cuboid and use $\eta$ to denote the number of black nodes in the cuboid. According to the definition, $\eta$ must be even. 
                If $\eta=2$, the analysis is similar. 
                
                Now we assume $\eta=4$. Since $\sigma\sigma_2^{-1}\sigma_1^{-1}=\id$ and $\sigma_1^{-1}$ does not change number of black nodes, we conclude colored cuboid for $\sigma\sigma_2^{-1}$ is white and the 4 black nodes for $\sigma$ must be $\bm x,\bm x^{\oplus 2},\tilde{\bm x},\tilde{\bm x}^{\oplus2}$ for some $\bm x,\tilde{\bm x},\bm x_1\neq \tilde{\bm x}_1, \bm x_2=\tilde{\bm x_2}$. W.l.o.g, we assume the fifth element in the $5$-cycle to be $\bm z$ where $\bm z_1=\bm x_1$. 
    
                Also, we can assume the relative position of the black nodes in the cycle is $\bm x,\tilde{\bm x},\bm x^{\oplus 2},\tilde{\bm x}^{\oplus 2}$ or $\bm x,\tilde{\bm x}^{\oplus 2},\bm x^{\oplus 2},\tilde{\bm x}$.  Let $\pi=(\bm x,\tilde{\bm x})(\bm x^{\oplus 2},\bm \tilde{\bm x}^{\oplus 2})\in SC_{\zoto{n}}^{(2)}$. By checking all possible arrangement of $\bm z$, we have the following cases:
                \begin{itemize}
                    \item $\sigma=(\bm x,\tilde{\bm x},\bm x^{\oplus 2},\tilde{\bm x}^{\oplus 2},\bm z)$ .  Then $\sigma\pi=(\bm x,\bm x^{\oplus 2},\bm z)=\sigma_1(\sigma_2\pi)$, which is impossible.
            
                    \item $\sigma=(\bm x,\tilde{\bm x}^{\oplus 2},\bm x^{\oplus 2},\tilde{\bm x},\bm z)$. Then $\sigma\pi=(\bm x,\bm z)(\tilde{\bm x},\tilde{ \bm x}^{\oplus 2})=\sigma_1(\sigma_2\pi)$. Construct the colored cuboid for $\sigma\pi$ with $r_1,r_2$ swapped and let $\xi$ be the number of 
                    \tikz[thick]{\draw[dashed] (0,0,0) -- (0,0.7,0); \blackdot{0}{0}{0}; \blackdot{0}{0.7}{0};}. Then $\xi=1$. Since $(\sigma_2\pi)^{-1}$ does not change $\xi$, $(\sigma\pi)(\sigma_2\pi)^{-1}\sigma_1^{-1}=\id$ requires $\xi\equiv 0\mod 2$, thus a contradiction.
                \end{itemize}
            \end{itemize}
    \end{proof}

\begin{proof}[Proof of \prop{new1even} ]
    In the following, we transformed paired cards to identity by  CCRBFs where $\tau_1$ has the different concurrently parity of the original construction in \lem{goodcase}. $\tau_2,\tau_3$ are concurrently even. Whether we use concurrently odd or even $\tau_1$ depends on the concurrently parity of $\pi'$, which is constructed for modifying cycles in Lemma \ref{e35a2}.
 
    Notice that,  whether we use the even or concurrently odd construction of $\tau_1$ does not change the  resulted cuboid, thus does not influence the following modification of $\tau_2,\tau_3$.
    
    The constructions are as below. First, we give the new construction which transforms  two A-cards or two B-cards into white cube.
        \begin{figure}[ht]                  
            \centering
            \input{./pics_even/part1split00x.tex}  
            \input{./pics_even/part1split00x2.tex}
        \end{figure}
        
       The left cases can be modified to

        \begin{itemize}
            \item $\alpha=1,\beta=1,\gamma\geq 2$ : This graph shows how to tackle $1$ A-card and $1$ B-card with $2$ C-cards.
                \begin{center}
                    
                \end{center}
            \item $\alpha=1,\beta=0,\gamma\geq 2$ : This graph shows how to tackle $1$ A-card with $2$ C-cards.
                \begin{center}
                    
                \end{center}
            \item $\alpha=0,\beta=1,\gamma\geq 2$ : This graph shows how to tackle $1$ B-card with $2$ C-card.
                \begin{center}                    
                    
                \end{center}
        \end{itemize}
        
        \begin{itemize} 
            \item $\alpha=2,\beta=1,\gamma\geq 1$: This graph shows how to tackle $2$ A-cards and $1$ B-card with $1$ C-card. 
                \begin{center}                    
                     
                \end{center}
            \item $\alpha=0,\beta=3,\gamma\geq 1$ : This graph shows how to tackle $3$ B-cards with $1$ C-card. 
                \begin{center}
                     
                \end{center}
            \item $\alpha=1,\beta=2,\gamma\geq 1$ : This graph shows how to tackle $1$ A-card and $2$ B-cards with $1$ C-card. 
                \begin{center}
                     
                \end{center}
        \end{itemize}
    
\end{proof}

\begin{proof}[Proof of Lemma \ref{e35a2}]
    To ease the presentation, we say $\bm u, \bm v$ (or $\{\bm u, \bm v\}$) is a \emph{concurrent pair}, if $\bm u = \bm v^{\oplus r_1}$.
	
	The cycle transforming process is divided into following 4 stages: 

	\noindent\textbf{Stage I.}\quad In the first stage, we attempt to construct $\pi_0\in SC_{\zoton}^{(r_1)}$ such that $\sigma\pi_0$ contains an even cycle $\mathscr C_0$ of length no more than $4$.
	
	\begin{casepar}{Case 0}
	    Suppose there exists a 2-cycle in $\sigma$ already, then simply let $\pi_0 := \id$.
	\end{casepar}
	
	\begin{casepar}{Case 1}
	    Suppose there exist $\bm u, \bm v$ such that $\bm u_{r_1} = \bm v_{r_1} = \sigma(\bm u)_{r_1} = \sigma(\bm v)_{r_1}$. If $\sigma(\bm u) = \bm v$ (or $\sigma(\bm v) = \bm u$), perform 
	    \[
	        \pi_0 := (\bm u, \sigma(\bm v))(\bm u^{\oplus r_1}, \sigma(\bm v)^{\oplus r_1})
        \]
	    and a 2-cycle $\mathscr C_0 = (\bm v, \sigma(\bm v))$ will appear. Otherwise, perform 
	    \[
	        \pi_0' := \begin{array}{l}
	            (\bm u, \sigma(\bm u))(\bm u^{\oplus r_1}, \sigma(\bm u)^{\oplus r_1})\\
	            (\bm v, \sigma(\bm v))(\bm v^{\oplus r_1}, \sigma(\bm v)^{\oplus r_1})
            \end{array}            
	    \]
	    and 2 fix-points $\sigma(\bm u), \sigma(\bm v)$ will appear. Thus, 
	    \[
	        \pi_0 := \pi_0'\circ (\sigma(\bm u),\sigma(\bm v))(\sigma(\bm u)^{\oplus r_1},\sigma(\bm v)^{\oplus r_1})
        \] 
        is as required such that $\sigma\pi_0$ contains a 2-cycle $\mathscr C_0 = (\sigma(\bm u), \sigma(\bm v))$. 
	\end{casepar}
	
	\begin{casepar}{Case 2}
	    Suppose there exists $\bm u$ such that $\bm u_{r_1} = \sigma^2(\bm u)_{r_1}$, $\bm u_{r_1} \neq \sigma(\bm u)_{r_1}$, and $\bm u, \sigma(\bm u)^{\oplus r_1}, \sigma^2(\bm u)$ are distinct. Let 
	    \[
	        \pi_0 := (\bm u, \sigma^2(\bm u))(\bm u^{\oplus r_1}, \sigma^2(\bm u)^{\oplus r_1}).
        \] 
        Thus, $\sigma\pi_0$ will contain a 2-cycle $\mathscr C_0 = (\sigma(\bm u), \sigma^2(\bm u))$.
	\end{casepar}

    \begin{casepar}{Case 3}
        Suppose there exist fix-points $\bm u, \bm v$ such that $\bm u_{r_1} = \bm v_{r_1}$. Let 
        \[
            \pi_0 := (\bm u, \bm v)(\bm u^{\oplus r_1}, \bm v^{\oplus r_1}).
        \]
        Thus, $\sigma\pi_0$ will contain a 2-cycle $\mathscr C_0 = (\bm u, \bm v)$.
    \end{casepar}
    
    \begin{casepar}{Case 4}
        If none of the previous 3 cases holds, either there exists a 4-cycle containing two concurrent pairs, or there exist distinct $\bm u_1, \ldots, \bm u_6$ such that $(\ldots, \bm u_1, \bm u_2,\bm u_3,\bm u_4,\bm u_5, \bm u_6, \ldots)$ is in $\sigma$, $(\bm u_1)_{r_1} = (\bm u_3)_{r_1} = (\bm u_5)_{r_1}$ and $(\bm u_1, \bm u_2), (\bm u_3, \bm u_4), (\bm u_5, \bm u_6)$ are concurrent pairs. 
        Then let $\pi_0=\id$ for the first one; and $\pi_0 = (\bm u_1,\bm u_3,\bm u_5)(\bm u_2,\bm u_4,\bm u_6)$ for the second.
        
        \begin{figure}[H]
        	\centering
        	\scalebox{0.8}{\begin{tikzpicture}[>=stealth,dot/.style={circle,fill=black,inner sep=0pt,minimum size=5pt}]

        \draw[<->] (-1.5,0) -- (-1.5,1); 
        
        \node at (-1.7, 0.5) {$r_1$};
        
        \draw (-1.6,0) -- (-1.4,0);
        \draw (-1.6,1) -- (-1.4,1);

		\node[dot, label=below:{$\bm u_2$}] (a00) at (0,0) {};
		\node[dot, label=below:{$\bm u_4$}] (a01) at (1,0) {};
		\node[dot, label=below:{$\bm u_6$}] (a02) at (2,0) {};
		\node[dot, label={$\bm u_1$}] (a10) at (0,1) {};
		\node[dot, label={$\bm u_3$}] (a11) at (1,1) {};
		\node[dot, label={$\bm u_5$}] (a12) at (2,1) {};

		\draw[-Latex,shorten >= 1pt] (-0.5, 0.5) -- (a10);
        \draw[-Latex,shorten >= 1pt] (a10) -- (a00);
        \draw[-Latex,shorten >= 1pt] (a00) -- (a11);
        \draw[-Latex,shorten >= 1pt] (a11) -- (a01);
        \draw[-Latex,shorten >= 1pt] (a01) -- (a12);
        \draw[-Latex,shorten >= 1pt] (a12) -- (a02);
        \draw[-Latex,shorten >= 1pt] (a02) -- (2.5, 0.5);

        \node[dot, label=below:{$\bm u_2$}] (b00) at (4,0) {};
		\node[dot, label=below:{$\bm u_4$}] (b01) at (5,0) {};
		\node[dot, label={$\bm u_1$}] (b10) at (4,1) {};
		\node[dot, label={$\bm u_3$}] (b11) at (5,1) {};


        \draw[-Latex,shorten >= 1pt] (b10) -- (b00);
        \draw[-Latex,shorten >= 1pt] (b00) -- (b11);
        \draw[-Latex,shorten >= 1pt] (b11) -- (b01);
        \draw[-Latex,shorten >= 1pt] (b01) -- (b10);


\end{tikzpicture}}
        	\caption{One of the structures can be found in Case 4}
    	\end{figure}
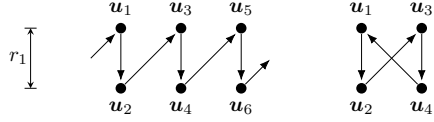
    \end{casepar}

	\noindent\textbf{Stage II.}\quad 
	In this stage, several concurrent swaps will be performed to eliminate most of the 3/5-cycles and keep $\mathscr C_0$ invariant. 
	The following operation will be iterated in several rounds. In round-$i$, $\pi_i \in SC_{\zoton}^{(r_1)}$ is performed. Let $S_{i,c}$ be the set of all $c$-cycle on which each vertex $\bm v$ satisfies $\bm v, \bm v^{\oplus r_1} \not\in \mathscr C_0$ in $\sigma_{i-1}$ ($\sigma_t:=\sigma\pi_0\pi_1\ldots\pi_t$ in the following).
	
	Denote $\zeta_{i}:=|S_{i,1}|+|S_{i,2}|+|S_{i,3}|+|S_{i,4}|+|S_{i,5}|$. If $S_{i-1,3} \cup S_{i-1,5}\neq\emptyset$, 
k an arbitrary cycle $\mathscr C_1$ from it. Since $\mathscr C_1$ is an odd cycle, there exists $\bm u\in\mathscr C_1$ such that $\bm v:=\bm u^{\oplus r_1}\notin\mathscr C_1$. Let $\mathscr C_2$ be the cycle where $\bm v$ belongs (by choice of $\mathscr C_1$, here $\mathscr C_2\neq\mathscr C_0$). Define
    $$
	    T:=\mathscr C_0\cup\mathscr C_1\cup 
	    \big\{\bm w\in\mathscr C_2\mid \mdist^{\sigma_{i-1}}(\bm v,\bm w)\leq 5\big\}.
	$$
	Note that $|T|\leq 4+5+11$. Since $n\geq8$, we can always find a concurrent pair $(\bm s, \bm t)$ that $\bm s,\bm t\not \in T$. Then, let 
	\[
	    \pi_i:=(\bm u, \bm t)(\bm v, \bm s)\in SC_{\zoton}^{(r_1)}.
	\]
	We will prove $\zeta_{i} < \zeta_{i - 1}$, by checking the following cases:
	\begin{casepar}{Case 1}
	    $\bm t, \bm s \notin \mathscr C_2$: 
			Swapping $\bm u, \bm t$ merges $\mathscr C_1$ with another cycle. And similarly when swapping $\bm v, \bm s$.
	\end{casepar}
	\begin{casepar}{Case 2}
	    $\bm t \notin \mathscr C_2, \bm s \in \mathscr C_2$: 
			Swapping $\bm u, \bm t$ merges $\mathscr C_1$ with another cycle. Then swapping $\bm v, \bm s$ splits $\mathscr C_2$ into two cycles; and the length of neither is smaller than $6$, which does not increase the number of short cycles.
	\end{casepar}
	\begin{casepar}{Case 3}
	    $\bm t \in \mathscr C_2, \bm s \notin \mathscr C_2$: 
			Swapping $\bm u, \bm t$ merges $\mathscr C_1$ with $\mathscr C_2$. Then swapping $\bm v, \bm s$ merges new $\mathscr C_2$ with another cycle.
	\end{casepar}
	\begin{casepar}{Case 4}
	    $\bm t, \bm s\in\mathscr C_2$:
			Swapping $\bm u, \bm t$ merges $\mathscr C_1$ with $\mathscr C_2$. Then swapping $\bm v, \bm s$ splits new $\mathscr C_2$ into two cycles; and the length of neither is smaller than $6$, which does not increase the number of short cycles.
	\end{casepar}
	Repeat until $S_{i,3} \cup S_{i,5}=\emptyset$. Suppose this process has $k$ rounds, then the permutation after Stage II is $\sigma_k=\sigma\pi_0\pi_1\ldots\pi_{k}$.
	\medskip

	\noindent\textbf{Stage III.}\quad 
	This stage is designed to remove remaining 3/5-cycles by a permutation $\pi_{k+1}$. Notice that after Stage I if $\mathscr  C_0$ is a $4$-cycle it must consist of 2 concurrent pairs, and in Stage II we exclude the cycles containing a vertex in $\{w, w^{\oplus r_1} \mid w\in \mathscr C_0\}$. Thus there are at most two 3/5-cycles in $\sigma_k$.
	\medskip
	
	\begin{casepar}{Case 1}
	    If there is no 3/5-cycle, simply let $\pi_{k + 1} := \text{id}$. Note that if $|\mathscr C_0| = 4$, it must be in Case 1.
	\end{casepar}
	\begin{casepar}{Case 2}
	    If there are two 3/5-cycles $\mathscr C_3, \mathscr C_4$, we can always find $\bm v_3\in \mathscr C_3, \bm v_4\in \mathscr C_4$ such that $\bm v_3^{\oplus r_1}, \bm v_4^{\oplus r_1}$ are in $\mathscr C_0$. Perform $\pi_{k + 1} := (\bm v_3, \bm v_4)(\bm v_3^{\oplus r_1}, \bm v_4^{\oplus r_1})$. If $(\bm v_3)_{r_1} = (\bm v_4)_{r_1}$, $\mathscr C_3, \mathscr C_4$ are merged into an even cycle and $\mathscr C_0$ becomes two fix-points. Otherwise, $\mathscr C_0, \mathscr C_3, \mathscr C_4$ are merged into an even cycle of length at most $12$. Let the new even cycle be $\mathscr C_0$.
	\end{casepar}
	\begin{casepar}{Case 3}
	    Suppose there is a unique 3/5-cycle $\mathscr C_3$.
	    \begin{casepar}{Case 3.1}
	        If $\mathscr C_3$ contains a vertex $\bm u'$ such that $\bm u'^{\oplus r_1} \not\in \mathscr C_0, \mathscr C_3$, perform another round of Stage II with $\mathscr C_1=\mathscr C_3, \bm u=\bm u'$; and construct a swap $\pi_{k + 1}$. 
	    \end{casepar}
	    \begin{casepar}{Case 3.2}
	        Otherwise, $\mathscr C_3$ contains a concurrent pair $(\bm u, \bm v)$. Attempt to find a concurrent pair $\bm s, \bm t$ where $\bm s, \bm t \not\in \mathscr C_0, \mathscr C_3$ are contained by different cycles and assume $\bm u_{r_1} = \bm t_{r_1}$. 
	        \begin{casepar}{Case 3.2.1}
	            If such $\bm s,\bm t$ exist, perform $\pi'_{k + 1} := (\bm u, \bm t)(\bm v, \bm s)$ which will merge 3 different cycles including $\mathscr C_3$ and leaves $\mathscr C_0$ invariant.
	        \end{casepar}
	        \begin{casepar}{Case 3.2.2}
	            Otherwise, let $\bm s \in \mathscr C_0$ such that $\bm s^{\oplus r_1}$, denoted by $\bm t$, is not in $\mathscr C_3$. In this case, such $\bm s$ must exist. Also, let the cycle containing $\bm t$ be $\mathscr C_4$; then $\mathscr C_4$ is of odd length. 
	            
	            Assume $\bm u_{r_1} = \bm t_{r_1}$. Thus, $\pi'_{k + 1} := (\bm u, \bm t)(\bm v, \bm s)$ merges $\mathscr C_0,\mathscr C_3,\mathscr C_4$ as an even cycle if $\mathscr C_0=\{\bm v_1,\bm v_2\}$ and $(\bm v_1)_{r_1}\neq (\bm v_2)_{r_1}$. Otherwise $(\bm v_1)_{r_1}=(\bm v_2)_{r_2}$, $\pi'_{k + 1}$ will merge $\mathscr C_3,\mathscr C_4$ as an even cycle and breaks $\mathscr C_0$ into two fix-points. Let the new even cycle be $\mathscr C_0$.
	            
	            Note that it is also the only possible case where the length of the smallest even cycle can be larger than 12. Define $W := \{\bm v_1,\bm v_1^{\oplus r_1},\bm v_2,\bm v_2^{\oplus r_2}\}$. 
	            In this case, every concurrent pair $(\bm s, \bm t)$, where $\bm s, \bm t\not\in W$, is contained by the same cycle in $\sigma_k\pi_{k+1}'$.
	        \end{casepar}
	    \end{casepar}
		If all 3/5-cycles are eliminated, let $\pi_{k + 1} = \pi_{k + 1}'$. But when the remaining $\mathscr C_3$ is a 3-cycle, $\pi_{k + 1}'$ may give a 5-cycle. Consider the (only) two bad instances: 
		
		\ding{91} $\mathscr C_3$ is merged with a 2-cycle in Case 3.1; 
		
		\ding{91} $\mathscr C_3$ is merged with two fix-points in Case 3.2.1. 
		
		\noindent In either bad instance, $\mathscr C_0$ is unchanged, all 3-cycles are eliminated and at most one 5-cycle is left. Try another round of Stage III with $\sigma_k\pi_{k + 1}'$ and get $\pi''_{k + 1}$. Then let $\pi_{k + 1} = \pi_{k + 1}'\pi_{k + 1}''$; and $\sigma_{k + 1}:=\sigma_k\pi_{k+1}$ is 3/5-cycle free.
	\end{casepar}

	\noindent{\textbf{Stage IV.}}\quad 
	After Stage III, $\sigma_{k + 1}$ is 3/5-cycle free, and contains an even cycle. If $\pi_0\pi_1\cdots\pi_{k+1}\in AC_{\zoton}^{(r_1)}$, simply let $\pi_{k + 2} := \text{id}$. If otherwise, we construct 
	$$
	\pi_{k + 2}\in SC_{\zoton}^{(r_1)}\backslash AC_{\zoton}^{(r_1)},
	$$ 
	which preserves an even cycle but forbids 3/5-cycle.
	
    \begin{casepar}{Case 1}
        If there exists a concurrent pair $\bm u, \bm v\notin\mathscr C_0$ contained by different cycles, $|\mathscr C_0|$ can not be greater than $12$ due to the analysis in Case 3.2.2 of Stage III. Let $\mathscr C_1$ and $\mathscr C_2$ be cycles that $\bm u \in \mathscr C_1, \bm v \in \mathscr C_2$. Define
			$$
				T := \mathscr C_0 \cup \{\bm w\mid \mdist^{\sigma_{k+1}}(\bm u,\bm w) \leq 5\}$$
				$$\cup \{\bm w\mid \mdist^{\sigma_{k+1}}(\bm v,\bm w) \leq 5\}.
			$$
		Note that $|T| \leq 34$. Since $n\geq8$ and $2^n\geq2|T|+1$, we can always find a concurrent pair $\bm t, \bm s \not\in T$ where $\bm t_{r_1} = \bm u_{r_1}$. Let $\pi_{k + 2} := (\bm u, \bm t)(\bm v, \bm s)$. Thus, $\sigma_{k + 2}$ still contains $\mathscr C_0$. With the same argument in Stage II, no new 3/5-cycle appears. 
    \end{casepar}
    \begin{casepar}{Case 2}
        Otherwise, consider the size of $\mathscr C_0$. If $|\mathscr C_0| \leq 12$, define $W = \{\bm w, \bm w^{\oplus r_1} \mid \bm w \in \mathscr C_0\}$. If $|\mathscr C_0|>12$, it must comes from Case 3.2.2 of Stage III; and we adopt the definition of $W$ from there. In either case, $|W|\leq24$.
        
        Now, each concurrent pair out of $W$ is contained in the same cycle. If there exist 3 concurrent pairs $\bm u_i, \bm v_i\not\in W,i\in[3]$ and $\bm u_1, \bm u_2, \bm u_3$ are contained in 3 distinct cycles. Let $\tau := (\bm u_1, \bm u_2, \bm u_3)(\bm v_1, \bm v_2, \bm v_3)$ (assuming $(\bm u_1)_{r_1}=(\bm u_2)_{r_1}=(\bm u_3)_{r_1}$). Then $\tau\in AC_{\zoton}^{(r_1)}$ and merges the 3 cycles. Repeat such merging operation until a large even cycle $\mathscr C_1$ of length $\ell\geq 2\times (21\times2+12+1)=110$ appears. Since $n\geq8$ and $2^n\geq|W|+2\ell$, this is inevitable. Let $\pi_{k + 2}'\in AC_{\zoton}^{(r_1)}$ as the merging process.
				
		Denote $\sigma_{k+1}\pi_{k+2}'$ by $\sigma'$ for convenience. Pick 3 distinct concurrent pairs $\bm u_i,\bm v_i\in\mathscr C_1,i\in[3]$ such that
			\begin{gather*}
				\mdist^{\sigma'}(\bm u_i, \bm u_j), \mdist^{\sigma'}(\bm v_i, \bm v_j),\\
				\mdist^{\sigma'}(\bm v_i, \bm u_j), \mdist^{\sigma'}(\bm u_i, \bm v_j) \geq 6
			\end{gather*}
		and $\bm u_i, \bm v_j \not\in W$ for all distinct $i,j \in [3]$. Let $\pi^{i,j} := (\bm u_i, \bm u_j)(\bm v_i, \bm v_j)$. The cycle pattern after $\pi^{i,j}$ is related to the order of the 4 vertices. 
		
		Since 
		$$
		\dist^{\sigma'}(\bm u_i, \bm v_i) = |\mathscr C_1|- \dist^{\sigma'}(\bm v_i, u_i),~\forall i\in[3]
		$$
		and $|\mathscr C_1|=\ell\geq110$,
		there exist distinct $\hat i,\hat j\in[3]$ such that 
		$$
		\dist^{\sigma'}(\bm u_{\hat i}, \bm v_{\hat i}) + \dist^{\sigma'}(\bm v_{\hat j}, \bm u_{\hat j})\geq6
		$$ 
		and 
		$$
		\dist^{\sigma'}(\bm v_{\hat i}, \bm u_{\hat i}) + \dist^{\sigma'}(\bm u_{\hat j}, \bm v_{\hat j})\geq6.
		$$
		
		Define a notation $a \rightsquigarrow b \rightsquigarrow c \rightsquigarrow d$ to represent that $\sigma_{k+1}$ contains a cycle in \fig{patternabcd}.
		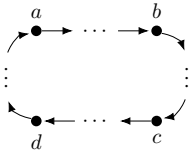
\begin{figure}[H]
    		\centering
    		\scalebox{0.8}{\begin{tikzpicture}[>=stealth,dot/.style={circle,fill=black,inner sep=0pt,minimum size=5pt}]
	\node[dot, label={$a$}] (a) at (0,1.5) {};
	\node[dot, label={$b$}] (b) at (2,1.5) {};
	\node[dot, label=below:{$c$}] (c) at (2,0) {};
	\node[dot, label=below:{$d$}] (d) at (0,0) {};

	\node (ab) at (1,1.5) {$\cdots$};
	\node (bc) at (2.5,0.75) {$\phantom{\vdots}$};
	\node (cd) at (1,0) {$\cdots$};
	\node (da) at (-0.5,0.75) {$\phantom{\vdots}$};
	\node at (2.5,0.825) {$\vdots$};
	\node at (-0.5,0.825) {$\vdots$};

    \draw[-Latex,shorten >= 1pt] (a) -- (ab);
    \draw[-Latex,shorten >= 1pt] (ab) -- (b);
    \draw[-Latex,shorten >= 1pt] (b) to [bend left] (bc);
    \draw[-Latex,shorten >= 1pt] (bc) to [bend left] (c);
    \draw[-Latex,shorten >= 1pt] (c) -- (cd);
    \draw[-Latex,shorten >= 1pt] (cd) -- (d);
    \draw[-Latex,shorten >= 1pt] (d) to [bend left] (da);
    \draw[-Latex,shorten >= 1pt] (da) to [bend left] (a);
\end{tikzpicture}}
    		\caption{Pattern $a \rightsquigarrow b \rightsquigarrow c \rightsquigarrow d$}\label{fig:patternabcd}
		\end{figure}
		
        Here we list possible orders of the 4 vertices.
        \begin{casepar}{\underline{Order 1}}
            $\bm u_i \rightsquigarrow \bm u_j \rightsquigarrow \bm v_j \rightsquigarrow \bm v_i$: Break into 3 cycles with the length of $\dist^{\sigma'}(\bm u_i, \bm u_j)$, $\dist^{\sigma'}(\bm v_j, \bm v_i)$ and $\dist^{\sigma'}(\bm u_j, \bm v_j) + \dist^{\sigma'}(\bm v_i, \bm u_i)$ respectively;
        \end{casepar}
        \begin{casepar}{\underline{Order 2}}
            $\bm u_i \rightsquigarrow \bm u_j \rightsquigarrow \bm v_i \rightsquigarrow \bm v_j$: Break into 3 cycles with the length of  $\dist^{\sigma'}(\bm u_i, \bm u_j)$, $\dist^{\sigma'}(\bm v_j, \bm v_i)$ and $\dist^{\sigma'}(\bm u_j, \bm v_i) + \dist^{\sigma'}(\bm v_j, \bm u_i)$ respectively;
        \end{casepar}
        \begin{casepar}{\underline{Order 3}}
            $\bm u_i \rightsquigarrow \bm v_i \rightsquigarrow \bm u_j \rightsquigarrow \bm v_j$: Remain a cycle of the same length.
        \end{casepar}
		Due to symmetry, other orders are not essentially different from these. 
		Then, let $\pi_{k + 2} := \pi_{k+2}'\pi^{\hat i,\hat j}$; we have $\pi_0\pi_1\cdots\pi_{k+2}\in AC_{\zoton}^{(r_1)}$ and $\sigma\pi_0\cdots\pi_{k+2}$ satisfies the desired properties.
    \end{casepar}
\end{proof}

\begin{proof}[Proof of Lemma \ref{evenh}]
    W.l.o.g, suppose $(1,...,2k)$ is an even cycle in $\sigma$. Define $h_0 \in S_{\zoton}$ as
    \[
        h_0(i) = \left\{\begin{array}{cl}
            i + 1 & i \in [2k - 1] \\
            1 & i = 2k \\
            i & \text{otherwise.}
        \end{array}\right.
    \] 
    It is easy to see that $h_0$ is odd and satisfies $h_0\sigma h_0^{-1}=\sigma$. Since $\sigma, \pi$ has the same cycle pattern, then there exists $h_1\in S_{\{0,1\}^n}$ such that $h_1\sigma h_1^{-1}=\pi$. If $h_1$ is odd, define $h := h_1 h_0$. Otherwise, define $h := h_1$. Thus, $h$ is even and satisfies $h\sigma h^{-1} = \pi$, which finishes the proof.
\end{proof}

\begin{proof}[proof of \lem{GcyclesEven1}]
    W.l.o.g, assume $r_1 = 1$ and $r_2 = 2$. There are at least 12 cycles $\mathscr C_1, \mathscr C_2, \ldots, \mathscr C_k$ with $|\mathscr C_i| \geq 2$ for all $i\in [k]$ in $\sigma$, which implies that there are at least 5 pairs of cycles $\{\mathscr C_1^{(1)}, \mathscr C_2^{(1)}\}, \ldots, \{\mathscr C_1^{(5)}, \mathscr C_2^{(5)}\}$ with the length of $\{a_1, b_1\}, \ldots, \{a_5, b_5\}$ respectively, such that $a_i + b_i$ is even and $\{a_i, b_i\} \neq \{2, 4\}$ for all $i \in [5]$. W.l.o.g, assume $a_1+b_1+a_2+b_2 \equiv a_3 + b_3 + a_4 + b_4 \equiv 0\mod 4$ and the selected 8 cycles are $\mathscr C_1, \ldots, \mathscr C_8$. Let $\ell_1 := a_1 + b_1 + a_2 + b_2$, $\ell_2 := a_3 + b_3 + a_4 + b_4$ and $\ell := \ell_1 + \ell_2$. Choose arbitrary $S \subseteq \{0,1\}^{n-2}$ with size of $\ell/4$ and define $T := \{0,1\}^{n-2} \setminus S$. Due to the fact that $\sigma$ is free of 3/5-cycle and a simple generalization of \prop{new2sum}, there exist $\pi_3 \in S_{\zoton}^{(r_1)}$ and $\tau_3 \in S_{\zoton}^{(r_2)}$ such that $\Supp(\pi_3), \Supp(\tau_3) \subseteq \zoto{2}\times T$ and $\pi_3\tau_3$ is a $|\mathscr C_9|, \ldots, |\mathscr C_k|$-cycle.
    
    In the remaining part of the proof, we provide 4 schemata to construct a $|\mathscr C_1|, \ldots |\mathscr C_8|$-cycle locally with parity-distinct $\pi^{(1)}, \pi^{(2)} \in AC_{\zoton}^{(r_1)}$ and $\tau^{(1)}, \tau^{(2)} \in AC_{\zoton}^{(r_2)}$. Thus, not so strictly speaking, we can adjust the parity of $\pi$ and $\tau$ as required and keep $\pi\tau$ being a $|\mathscr C_1|, \ldots, |\mathscr C_k|$-cycle.
    
    Divide $S = S_1 \sqcup S_2$ where $|S_1| = \ell_1/4$. Let $S_{1,1}, S_{1,2}$ be disjoint subsets of $S_1$ where $|S_{1,i}| = \left\lfloor (a_i + b_i)/4 \right\rfloor$ for $i \in [2]$. Consider the value of $(a_1 + b_1) \bmod 4$:
    \smallskip
    \begin{casepar}{Case 1}
        If $(a_1 + b_1) \equiv 2 \bmod 4$, call 
        \[
            \textsc{TPack}(r_1, r_2, a_1, b_1, a_2, b_2, \zoto{2}\times S_1)
        \]
        with $\pi_{1}, \tau_{1}$ as the outputs.
    \end{casepar}
    \begin{casepar}{Case 2}
        Otherwise, call
        \[
            \textsc{RPack}\left(a_i, b_i, r_1, r_2, \zoto{2}\times S_{1,i}\right)
        \]
        with $\pi_{1,i}, \tau_{1,i}$ as the outputs for $i \in [2]$. Define $\pi_1 = \pi_{1,1} \circ \pi_{1,2}$ and $\tau_1 = \tau_{1,1} \circ \tau_{1,2}$.
    \end{casepar}
    Since $\sigma$ is free of 3/5-cycle, $a_i, b_i$ is valid as inputs of \textsc{TPack} and \textsc{RPack}.

    The proof is based on the following observations: If we swap two pairs of consecutive nodes as shown in Figure \ref{SwapPairs}, then the resulted permutation will have the same cycle pattern with the original one, no matter the two pairs belong to the same cycle or not.
    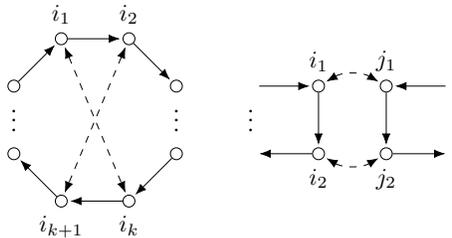
\begin{figure}[H]
        \centering
        \scalebox{0.9}{\begin{tikzpicture}[>=stealth,dot/.style={circle,draw=black,inner sep=0pt,minimum size=5pt}]
	\node[dot, label=above:{}] (p1) at (0,0.7) {};		
	\node[dot, label=below:{}] (p2) at (0,1.7) {};
	\node[dot, label=above:{$i_1$}] (p3) at (0.7,2.4) {};
	\node[dot, label=above:{$i_2$}] (p4) at (1.7,2.4) {};
	\node[dot, label=below:{}] (p5) at (2.4,1.7) {};
	\node[dot, label=below:{}] (p6) at (2.4,0.7) {};
	\node[dot, label=below:{$i_k$}] (p7) at (1.7,0) {};
	\node[dot, label=below:{$i_{k+1}$}] (p8) at (0.7,0) {};
	\draw[-Latex,shorten >= 1pt] (p2) -- (p3);
	\draw[-Latex,shorten >= 1pt] (p3) -- (p4);
	\draw[-Latex,shorten >= 1pt] (p4) -- (p5);
	\node at (2.4,1.3) {$\vdots$};
	\draw[-Latex,shorten >= 1pt] (p6) -- (p7);
	\draw[-Latex,shorten >= 1pt] (p7) -- (p8);
	\draw[-Latex,shorten >= 1pt] (p8) -- (p1);
	\node at (0,1.3) {$\vdots$};
	\draw[dashed, Latex-Latex,shorten >= 1pt,shorten <= 1pt] (p3) -- (p7);
	\draw[dashed, Latex-Latex,shorten >= 1pt,shorten <= 1pt] (p4) -- (p8);
			
	\begin{scope}[shift={(3.5,0.7)}]
	\node (q1) at (0,0) {};
	\node[dot, label=below:{$i_2$}] (q2) at (1,0) {};
	\node[dot, label=below:{$j_2$}] (q3) at (2,0) {};
	\node (q4) at (3,0) {};
	\node (q5) at (0,1) {};
	\node[dot, label=above:{$i_1$}] (q6) at (1,1) {};
	\node[dot, label=above:{$j_1$}] (q7) at (2,1) {};
	\node (q8) at (3,1) {};

	\draw[-Latex,shorten >= 1pt] (q5) -- (q6);
	\draw[-Latex,shorten >= 1pt] (q6) -- (q2);
	\draw[-Latex,shorten >= 0pt] (q2) -- (q1);
	\node at (3,0.6) {$\vdots$};

	\draw[-Latex,shorten >= 1pt] (q8) -- (q7);
	\draw[-Latex,shorten >= 1pt] (q7) -- (q3);
	\draw[-Latex,shorten >= 0pt] (q3) -- (q4);
	\node at (0,0.6) {$\vdots$};

	\draw[dashed, Latex-Latex,shorten >= 1pt,shorten <= 1pt] (q2) to [bend right] (q3);
	\draw[dashed, Latex-Latex,shorten >= 1pt,shorten <= 1pt] (q6) to [bend left] (q7);
	\end{scope}
\end{tikzpicture}}
        \caption{Swap two pairs of consecutive nodes}
        \label{SwapPairs}
    \end{figure}

    Formally, the following equations hold
    \begin{align*}
        &(i_1,i_2,...,i_k,i_{k+1},...)(i_1,i_k)(i_2,i_{k+1})\\
        =&(i_1,i_{k+1},i_3,...,i_k,i_2,i_{k+2},...)\\[3pt]
        &(i_1,..,i_k)(j_1,...,j_l)(i_1,j_1)(i_2,j_2)\\
        =&(i_1,j_2,i_3,...,i_k)(j_1,i_2,j_3,...,j_l).
    \end{align*}

 
    In order to change the concurrent parity of $\tau_{1}$, we simply perform a swap in proper position to the original construction. For example, when $a = b = 2k$, we can construct a $a, b$-cycle with $\pi'\tau'$ or $\pi'\tau''$, where $\tau'$ ($\id$) is concurrently even while $\tau''$ (a swap) is concurrently odd, as pictured in Figure \ref{even:part1split00x}. 
    

    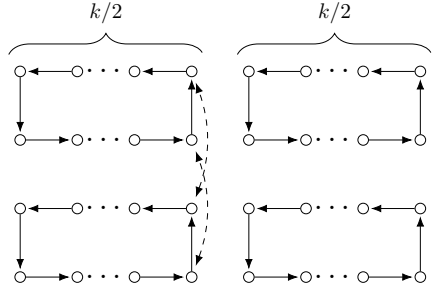
\begin{figure}[H]
        \centering
        \scalebox{0.8}{\begin{tikzpicture}[-Latex,shorten >= 1pt,scale=0.95,>=stealth,dot/.style={circle,draw=black,inner sep=0pt,minimum size=5pt}]
	\foreach \p in {0,1}{
        \foreach \x in {0,1,2,3}
            \foreach \z in {0,1}
                \node[dot] (\p\x\z) at (\x,1.2*\z+\p*2.4) {};

        \draw (\p11) -- (\p01); \draw (\p01) -- (\p00); \draw (\p00) -- (\p10);
        \draw (\p20) -- (\p30); \draw (\p30) -- (\p31); \draw (\p31) -- (\p21);

        \path (\p10) -- node {\Large$\cdots$} (\p20);
        \path (\p11) -- node {\Large$\cdots$} (\p21);
    }

    \draw[Latex-Latex,shorten <= 3pt,shorten >= 3pt,dashed] (030) to [bend right=20] (130);
    \draw[Latex-Latex,shorten <= 3pt,shorten >= 3pt,dashed] (031) to [bend right=20] (131);

    \foreach \p in {0,1}{
        \foreach \x in {0,1,2,3}
            \foreach \z in {0,1}
                \node[dot] (\p\x\z) at (\x+4,1.2*\z+\p*2.4) {};

        \draw (\p11) -- (\p01); \draw (\p01) -- (\p00); \draw (\p00) -- (\p10);
        \draw (\p20) -- (\p30); \draw (\p30) -- (\p31); \draw (\p31) -- (\p21);

        \path (\p10) -- node {\Large$\cdots$} (\p20);
        \path (\p11) -- node {\Large$\cdots$} (\p21);
    }

    \draw [-,decorate,decoration={brace,amplitude=10pt,raise=8pt},yshift=0pt]
        (0-0.2,3.6) -- (3+0.2,3.6) node [black,midway,yshift=28pt] {$k/2$};

    \draw [-,decorate,decoration={brace,amplitude=10pt,raise=8pt},yshift=0pt]
        (4-0.2,3.6) -- (7+0.2,3.6) node [black,midway,yshift=28pt] {$k/2$};
\end{tikzpicture}}
        \caption{$\tau''$ is concurrently odd (left); $\tau'$ is concurrently even (right).}\label{even:part1split00x}
    \end{figure}
    
    We can use the similar method to change the parity of $\tau_{1}$. One technique which need be emphasized is that, by arranging nodes in proper positions, we can ensure the existence of 2 proper consecutive node pairs such that the concurrent swap on them will not change the cycle pattern. As the result, $\tau_1'$ can be constructed, such that $\pi_1\tau_1'$ has the same cycle pattern with $\pi_1\tau_1$, but $\tau_1'$ has different concurrent parity with $\tau_1$. Furthermore, define $\pi_2, \tau_2$ and $\tau_2'$ for $a_3, b_3, a_4, b_4$ in the same way.
    

    Another essential ingredient is to ``rotate'' the constructed permutations in some way. Formally, we exchange $r_1, r_2$ dimensions by a permutation $\rho$, i.e, define $\rho_{r_1, r_2} \in S_{\{0,1\}^n}$ for $i < j$ as
    \[
        \rho_{i,j}: s_1 \ldots s_{i} \ldots s_{j} \ldots s_n \mapsto s_1 \ldots s_{j} \ldots s_{i} \ldots s_n,
    \]
    which maps $s$ to the string constructed by exchanging the $r_1$-th and $r_2$-th elements. Define $\text{switch}: S_{\{0,1\}^n} \to S_{\{0,1\}^n}$ as
    \[
        \text{switch}(\nu) := \rho_{r_1,r_2}^{-1}\circ \nu \circ \rho_{r_1,r_2}.
    \]
    Define 
    \begin{align*}
        \pi^{(1)} := \pi_1\circ\text{switch}(\tau_2)\circ\pi_3\\
        \pi^{(2)} := \pi_1\circ\text{switch}(\tau_2')\circ\pi_3\\
        \tau^{(1)} := \tau_1\circ\text{switch}(\pi_2)\circ\tau_3\\
        \tau^{(2)} := \tau_1'\circ\text{switch}(\pi_2)\circ\tau_3
    \end{align*}
    where $\pi^{(1)},\pi^{(2)} \in SC^{(r_1)}_{\zoton}$ have different concurrent parity, as well as $\tau^{(1)},\tau^{(2)} \in SC^{(r_2)}_{\zoton}$. Note the following facts:
    \begin{itemize}
        \item $\pi_1\circ\tau_1\circ\pi_2\circ\tau_2\circ\pi_3\circ\tau_3$ has the same cycle pattern with $\sigma$;
        \item $\text{switch}(\tau_2)\circ\text{switch}(\pi_2)$ is conjugated with $\pi_2\circ\tau_2$;
        \item $\pi' \circ \tau'$ is conjugated with $\tau' \circ \pi'$ for any $\pi', \tau'$;
        \item These permutations noted with different subscripts act on disjoint supports.
    \end{itemize}
    Thus, it can be shown that $\pi^{(i)}\circ \tau^{(j)}$ has the same cycle pattern with $\sigma$ for all $i,j \in [2]$, which finishes the proof.

\end{proof}

\begin{proof}[Proof of \lem{GcyclesEven2}]
    W.l.o.g, assume $r_1 = 1$ and $r_2 = 2$. Due the restriction of given $\sigma$, there exists cycles $\mathscr C_1, \mathscr C_2$ with the length of $a, b$ respectively, such that $a + b \equiv 0 \bmod 2$ and $a \geq 12$. Due to a similar argument to the one used in the proof of \lem{GcyclesEven1}, it suffices to prove there exist $\pi_1, \ldots, \pi_4 \in SC_{\zoton}^{(r_1)}$ and $\tau_1, \ldots, \tau_4 \in SC_{\zoton}^{(r_2)}$ such that
    \begin{itemize}
        \item $\pi_1, \pi_2, \tau_1, \tau_3$ are concurrently even;
        \item $\pi_3, \pi_4, \tau_2, \tau_4$ are concurrently odd;
        \item $\pi_i\tau_i$ is an $a,b$-cycle for all $i \in [4]$.
    \end{itemize}

    Next, we will construct $\pi', \pi'', \tau', \tau''$ for the following cases such that $\pi'\tau', \pi''\tau''$ are $a,b$-cycles, and $\pi', \pi''$ have different concurrent parity. Let $k := \lfloor a/2 \rfloor$ and $l := \lfloor b/2 \rfloor$.
    \smallskip
    
    \begin{casepar}{Case 1}
        $a, b$ are even and $a = b$:
        \vspace{0.5em}
        
        \begin{center}\scalebox{0.6}{\begin{tikzpicture}[-Latex,shorten >= 1pt,scale=0.95,>=stealth,dot/.style={circle,draw=black,inner sep=0pt,minimum size=5pt}]
	\foreach \p in {0,1}{
        \foreach \x in {0,1,2,3}
            \foreach \z in {0,1}
                \node[dot] (\p\x\z) at (\x,1.2*\z+\p*2.4) {};

        \draw (\p11) -- (\p01); \draw (\p01) -- (\p00); \draw (\p00) -- (\p10);
        \draw (\p20) -- (\p30); \draw (\p30) -- (\p31); \draw (\p31) -- (\p21);
        \node at (1.5,\p*2.4) {\Large$\cdots$};
        \node at (1.5,1.2+\p*2.4) {\Large$\cdots$};
    }

	\draw [-,decorate,decoration={brace,amplitude=10pt,raise=8pt},yshift=0pt]
        (0-0.2,3.6) -- (3+0.2,3.6) node [black,midway,yshift=28pt] {$k$};

    \foreach \p in {0,1}{
        \foreach \x in {0,1,2,3,4,5}
            \foreach \z in {0,1}
                \node[dot] (\p\x\z) at (\x+4,1.2*\z+\p*2.4) {};
        \draw (\p00) to [bend right=20] (\p40);
        \draw (\p40) to (\p30);
        \path (\p30) -- node {\Large$\cdots$} (\p20);
        \draw (\p20) to (\p10);
        \draw (\p10) to (\p11);
        \draw (\p11) to (\p00);

        \draw (\p01) to [bend left] (\p21);
        \draw (\p21) to (\p31);
        \path (\p31) -- node {\Large$\cdots$} (\p41);
        \draw (\p41) to (\p51);
        \draw (\p51) to (\p50);
        \draw (\p50) to (\p01);
    }
    \draw[Latex-Latex,shorten <= 3pt,shorten >= 3pt,dashed] (000) to [bend left=20] (100);
    \draw[Latex-Latex,shorten <= 3pt,shorten >= 3pt,dashed] (001) to [bend left=20] (101);
    \draw [-,decorate,decoration={brace,amplitude=10pt,raise=8pt},yshift=1pt]
        (4-0.2,1.2*1+1*2.4) -- (9+0.2,1.2*1+1*2.4) node [black,midway,yshift=27pt] {$k$};
\end{tikzpicture}}\end{center}
    \end{casepar}
    \begin{casepar}{Case 2}
        $a, b$ are even and $a \neq b$:
        \vspace{0.5em}
        
        \begin{center}\scalebox{0.5}{\begin{tikzpicture}[-Latex,shorten >= 1pt,scale=0.95,>=stealth,dot/.style={circle,draw=black,inner sep=0pt,minimum size=5pt}]
	\foreach \p in {0,1}{
        \foreach \x in {0,1,2,3,4,5,6,7,8}
            \foreach \z in {0,1}
                \node[dot] (\p\x\z) at (\x,1.5*\z+\p*3) {};
        \draw (\p00) to (\p11);
        \draw (\p11) to (\p21);
        \path (\p21) -- node {\Large$\cdots$} (\p31);
        \draw (\p31) -- (\p41);
        \draw (\p41) -- (\p30);
        \draw (\p30) -- (\p20);
        \path (\p20) -- node {\Large$\cdots$} (\p10);
        \draw (\p10) -- (\p00);
        
        \draw[Latex-Latex,shorten <= 1pt,shorten >= 1pt] (\p01) -- (\p50);

        \draw (\p40) -- (\p51);
        \draw (\p51) -- (\p61);
        \path (\p61) -- node {\Large$\cdots$} (\p71);
        \draw (\p71) -- (\p81);
        \draw (\p81) -- (\p80);
        \draw (\p80) -- (\p70);
        \path (\p70) -- node {\Large$\cdots$} (\p60);
        \draw (\p60) to [bend left] (\p40);
    }
    \draw[Latex-Latex,shorten <= 3pt,shorten >= 3pt,dashed] (000) to [bend left=20] (100);
    \draw[Latex-Latex,shorten <= 3pt,shorten >= 3pt,dashed] (001) to [bend left=20] (101);

    \draw[Latex-Latex,shorten <= 3pt,shorten >= 3pt,dashed] (080) to (150);
    \draw[Latex-Latex,shorten <= 3pt,shorten >= 3pt,dashed] (081) to (151);

    \draw [-,decorate,decoration={brace,amplitude=10pt,mirror,raise=12pt},yshift=5pt]
        (0-0.2,0) -- (3+0.2,0) node [black,midway,yshift=-30pt] {\Large$k/2$};
    \draw [-,decorate,decoration={brace,amplitude=10pt,mirror,raise=12pt},yshift=5pt]
        (4-0.2,0) -- (8+0.2,0) node [black,midway,yshift=-30pt] {\Large$l/2$};

    \begin{scope}[shift={(9,0)}]\foreach \p in {0,1}{
        \foreach \x in {0,1,2,3,4,5,6}
            \foreach \z in {0,1}
                \node[dot] (\p\x\z) at (\x,1.5*\z+\p*3) {};

        \draw[-Latex,shorten >= 1pt] (\p01) -- (\p11);
        \draw[Latex-,shorten <= 1pt] (\p31) -- (\p21);
        \draw[-Latex,shorten >= 1pt] (\p31) -- (\p41);
        \draw[Latex-,shorten <= 1pt] (\p61) -- (\p51);
        \draw[-Latex,shorten >= 1pt] (\p61) -- (\p60);
        \draw[Latex-,shorten <= 1pt] (\p40) -- (\p50);
        \draw[-Latex,shorten >= 1pt] (\p00) -- (\p10);
        \draw[Latex-,shorten <= 1pt] (\p30) -- (\p20);
        \draw[-Latex,shorten >= 1pt] (\p30) to [bend left] (\p00);
        \draw[-Latex,shorten >= 1pt] (\p40) -- (\p01);

        \foreach \x in {1.25,1.5,1.75}{
            \node at (\x,0+\p*3) {\Large$\cdot$};
            \node at (\x,1.5+\p*3) {\Large$\cdot$};
        }
        \foreach \x in {4.25,4.5,4.75}{
            \node at (\x,1.5+\p*3) {\Large$\cdot$};
        }
        \foreach \x in {5.25,5.5,5.75}{
            \node at (\x,0+\p*3) {\Large$\cdot$};
        }
    }
    \draw[Latex-Latex,shorten >= 3pt,shorten <= 3pt,dashed] (000) to [bend left] (100);
    \draw[Latex-Latex,shorten >= 3pt,shorten <= 3pt,dashed] (001) to [bend left] (101);
    \draw [-,decorate,decoration={brace,amplitude=10pt,mirror,raise=8pt},yshift=0pt]
        (6+0.2,1.5*1+1*3) -- (0-0.2,1.5*1+1*3) node [black,midway,yshift=28pt] {\Large$(k+l)/2$};
    \draw [-,decorate,decoration={brace,amplitude=10pt,mirror,raise=12pt},yshift=0pt]
        (-0.2,0) -- (3+0.2,0) node [black,midway,yshift=-30pt] {\Large$k$};
    \end{scope}

\end{tikzpicture}}\end{center}
        
        The constructions for the cases where $k$ or $l$ is not even are similar.
    \end{casepar}
    \begin{casepar}{Case 3}
        $a, b$ are odd and $a = b$:
        \vspace{0.5em}
        
        \begin{center}\scalebox{0.6}{\begin{tikzpicture}[-Latex,shorten >= 1pt,scale=0.95,>=stealth,dot/.style={circle,draw=black,inner sep=0pt,minimum size=5pt}]
			\foreach \p in {0,1}{
		        \foreach \x in {0,1,2,3}
		            \foreach \z in {0,1}
		                \node[dot] (\p\x\z) at (\x,1.2*\z+\p*2.4) {};
		        \node [dot] (\p41) at (4,1.2+\p*2.4) {};

		        \draw (\p01) -- (\p10);
		        \draw (\p10) -- (\p20);
		        \path (\p20) -- node {\Large$\cdots$} (\p30);
		        \draw (\p30) -- (\p01);

		        \draw (\p00) -- (\p11);
		        \draw (\p11) -- (\p21);
		        \path (\p21) -- node {\Large$\cdots$} (\p31);
		        \draw (\p31) -- (\p41);
		        \draw (\p41) -- (\p00);
		    }
		    \draw[Latex-Latex,shorten <= 3pt,shorten >= 3pt,dashed] (010) to (100);
		    \draw[Latex-Latex,shorten <= 3pt,shorten >= 3pt,dashed] (011) to (101);

		    \draw [-,decorate,decoration={brace,amplitude=10pt,raise=8pt},yshift=0pt]
		        (1-0.2,3.6) -- (4+0.2,3.6) node [black,midway,yshift=28pt] {$k$};

		    \begin{scope}[shift={(5,0)}]
		    	\foreach \p in {0,1}{
			        \foreach \x in {0,1,2,3}
			            \foreach \z in {0,1}
			                \node[dot] (\p\x\z) at (\x,1.2*\z+\p*2.4) {};
			        \node [dot] (\p41) at (4,1.2+\p*2.4) {};

			        \draw (\p00) -- (\p01);
			        \draw (\p01) -- (\p11);
			        \draw (\p11) -- (\p21);
			        \path (\p21) -- node {\Large$\cdots$} (\p31);
			        \draw (\p31) -- (\p41);
			        \draw (\p41) -- (\p30);
			        \path (\p30) -- node {\Large$\cdots$} (\p20);
			        \draw (\p20) -- (\p10);
			        \draw (\p10) -- (\p00);
			    }
		    \end{scope}
		\end{tikzpicture}}\end{center}
    \end{casepar}
    \begin{casepar}{Case 4}
        $a, b$ are odd, $b \geq 7$ and $a \neq b$:
        \vspace{0.5em}
        
        \begin{center}\scalebox{0.6}{\begin{tikzpicture}[-Latex,shorten >= 1pt,scale=0.95,>=stealth,dot/.style={circle,draw=black,inner sep=0pt,minimum size=5pt}]
			\foreach \p in {0,1}{
		        \foreach \x in {0,1,2,3,4,5,6}
		            \foreach \z in {0,1}
		                \node[dot] (\p\x\z) at (\x,1.2*\z+\p*2.4) {};

		        \draw (\p00) -- (\p01);
		        \draw (\p01) -- (\p10);
		        \draw (\p10) -- (\p20);
		        \foreach \x in {2.25,2.5,2.75,5.25,5.5,5.75}{
		            \node at (\x,\p*2.4) {\Large$\cdot$};
		            \node at (\x,1.2+\p*2.4) {\Large$\cdot$};
		        }
		        \draw (\p30) -- (\p40);
		        \draw (\p40) -- (\p31);
		        \draw (\p21) -- (\p11);
		        \draw (\p11) to [bend left=20] (\p41);
		        \draw (\p41) -- (\p51);
		        \draw (\p61) -- (\p60);
		        \draw (\p50) to [bend left=15] (\p00);
		    }
		    \draw[Latex-Latex,shorten <= 3pt,shorten >= 3pt,dashed] (000) to [bend left=5] (110);
		    \draw[Latex-Latex,shorten <= 3pt,shorten >= 3pt,dashed] (001) to [bend left=5] (111);
		    \draw [-,decorate,decoration={brace,amplitude=6pt,raise=8pt},yshift=2pt]
		        (2-0.2,1.2*1+1*2.4) -- (4+0.2,1.2*1+1*2.4) node [black,midway,yshift=20pt] {$(k-l+1)/2$};
		    \draw [-,decorate,decoration={brace,amplitude=6pt,mirror,raise=12pt},yshift=4pt]
		        (5-0.2,0) -- (6+0.2,0) node [black,midway,yshift=-25pt] {$l-2$};

			\begin{scope}[shift={(7,0)}]
				\foreach \p in {0,1}{
			        \foreach \x in {0,1,2,3,4,5,6}
			            \foreach \z in {0,1}
			                \node[dot] (\p\x\z) at (\x,1.2*\z+\p*2.4) {};

			        \draw (\p00) -- (\p01);
			        \draw (\p01) -- (\p10);
			        \draw (\p10) -- (\p20);
			        \path (\p20) -- node {\Large$\cdots$} (\p30);
			        \draw (\p30) -- (\p31);
			        \path (\p31) -- node {\Large$\cdots$} (\p21);
			        \draw (\p21) -- (\p11);
			        \draw (\p11) -- (\p40);
			        \draw (\p40) to [bend left=15] (\p00);

			        \draw (\p41) -- (\p50);
			        \path (\p50) -- node {\large$\cdots$} (\p60);
			        \draw (\p60) -- (\p61);
			        \path (\p51) -- node {\large$\cdots$} (\p61);
			        \draw (\p51) -- (\p41);
			    }
			    \draw[Latex-Latex,shorten <= 3pt,shorten >= 3pt,dashed] (050) to (140);
			    \draw[Latex-Latex,shorten <= 3pt,shorten >= 3pt,dashed] (051) to (141);

			    \draw [-,decorate,decoration={brace,amplitude=6pt,raise=8pt},yshift=-4pt]
			        (2-0.2,3.6) -- (3+0.2,3.6) node [black,midway,yshift=20pt] {$k-2$};
			        
                \draw [-,decorate,decoration={brace,amplitude=6pt,raise=8pt},yshift=-4pt]
			        (4-0.2,3.6) -- (6+0.2,3.6) node [black,midway,yshift=20pt] {$(l-k+1)/2$};

			    \draw [-,decorate,decoration={brace,amplitude=6pt,mirror,raise=12pt},yshift=4pt]
			        (0-0.2,0) -- (6+0.2,0) node [black,midway,yshift=-25pt] {$(k+l+1)/2$};
			\end{scope}
		\end{tikzpicture}}\end{center}
        
        The construction for $k \equiv l \bmod 2$ is similar.
    \end{casepar}
    \begin{casepar}{Case 5}
        $a, b$ are odd and $b = 1$:
        \vspace{0.5em}
        
        \begin{center}\scalebox{0.6}{\begin{tikzpicture}[-Latex,shorten >= 1pt,scale=0.95,>=stealth,dot/.style={circle,draw=black,inner sep=0pt,minimum size=5pt}]
			\foreach \p in {0,1}{
		        \foreach \x in {0,1,2,3,4}
		            \foreach \z in {0,1}
		                \node[dot] (\p\x\z) at (\x,1.2*\z+\p*2.5) {};

		        \draw[-Latex,shorten >= 1pt] (\p00) -- (\p10);
		        \draw[-Latex,shorten >= 1pt] (\p10) -- (\p20);
		        \draw[Latex-,shorten <= 1pt] (\p40) -- (\p30);
		        \draw[-Latex,shorten >= 1pt] (\p40) -- (\p41);
		        \draw[-Latex,shorten >= 1pt] (\p41) -- (\p31);
		        \draw[Latex-,shorten <= 1pt] (\p11) -- (\p21);
		        \draw[-Latex,shorten >= 1pt] (\p11) -- (\p01);
		        \draw[-Latex,shorten >= 1pt] (\p01) -- (\p00); 

		        \foreach \x in {2.25,2.5,2.75}{
		            \node at (\x,0+\p*2.5) {\Large$\cdot$};
		            \node at (\x,1.2+\p*2.5) {\Large$\cdot$};
		        }
		    }
		    \draw[-Latex,shorten >= 3pt,dashed] (000) to [bend left] (100);
		    \draw[-Latex,shorten >= 3pt,dashed] (100) to [bend right] (110);
		    \draw[-Latex,shorten >= 3pt,dashed] (110) to (000);
		    \draw[-Latex,shorten >= 3pt,dashed] (001) to [bend left] (101);
		    \draw[-Latex,shorten >= 3pt,dashed] (101) to [bend right] (111);
		    \draw[-Latex,shorten >= 3pt,dashed] (111) to (001);
		    \draw [-,decorate,decoration={brace,amplitude=10pt,mirror,raise=12pt},yshift=5pt]
		        (-0.2,0) -- (4+0.2,0) node [black,midway,yshift=-30pt] {$(l+1)/2$};

			\begin{scope}[shift={(5,0)}]
				\foreach \p in {0,1}{
			        \foreach \x in {0,1,2,3,4,5,6}
			            \foreach \z in {0,1}
			                \node[dot] (\p\x\z) at (\x,1.2*\z+\p*2.4) {};

			        \draw (\p00) -- (\p10);
			        \draw (\p10) -- (\p20);
			        \draw (\p20) -- (\p11);
			        \draw (\p11) -- (\p01);
			        \draw (\p01) -- (\p00);

			        \draw (\p30) -- (\p40);
			        \path (\p40) -- node {\Large$\cdots$} (\p50);
			        \draw (\p50) -- (\p60);
			        \draw (\p60) -- (\p61);
			        \draw (\p61) -- (\p51);
			        \path (\p51) -- node {\Large$\cdots$} (\p41);
			        \draw (\p41) -- (\p31);
			        \draw (\p31) -- (\p21);
			        \draw (\p21) -- (\p30);
			    }
			    \draw[-Latex,shorten >= 3pt,dashed] (000) to [bend left] (100);
    		    \draw[-Latex,shorten >= 3pt,dashed] (100) to [bend right] (110);
    		    \draw[-Latex,shorten >= 3pt,dashed] (110) to (000);
    		    \draw[-Latex,shorten >= 3pt,dashed] (001) to [bend left] (101);
    		    \draw[-Latex,shorten >= 3pt,dashed] (101) to [bend right] (111);
    		    \draw[-Latex,shorten >= 3pt,dashed] (111) to (001);

			    \draw[Latex-Latex,shorten <= 3pt,shorten >= 3pt,dashed] (030) to (120);
			    \draw[Latex-Latex,shorten <= 3pt,shorten >= 3pt,dashed] (031) to (121);

			    \draw [-,decorate,decoration={brace,amplitude=10pt,mirror,raise=12pt},yshift=5pt]
			        (0-0.2,0) -- (6+0.2,0) node [black,midway,yshift=-30pt] {$(l+1)/2$};
			\end{scope}
		\end{tikzpicture}}\end{center}
        
        The construction for even $l$ is similar.
    \end{casepar}
    
    Furthermore, recalling the analysis in the proof of \lem{GcyclesEven1}, it is easy to verified that there exist concurrently odd $\rho', \rho'' \in SC_{\zoton}^{(r_2)}$ such that $\pi'\tau'$ has the same cycle pattern with $\pi'\tau'\rho'$, as well as $\pi''\tau''$ and $\rho''$, which finishes the proof.
\end{proof}

\begin{proof}[Proof of Lemma \ref{Odd4}]
We give a constructive proof when $n=3$, the construction can be easily embeded into higher dimension. For $n=3$, let
\begin{align*}
\pi=&(001,011)(101,111)\\
\tau_1=&(010,100,110)(011,101,111)\\
\tau_2=&(001,100,101)(011,110,111)\\
\tau_3=&(001,010,011)(101,110,111)\\
\tau_4=&(001,101,100)(011,111,110).
\end{align*}
For $n=4$, we simply padding $0$ to the string, that is, let $\pi=(0010,0110)(101,1110)$,
$\tau_1=(0100,1000,1100)(0110,1010,1110)$ and ditto for $n>4$.
\end{proof}

\ifCLASSOPTIONcaptionsoff
  \newpage
\fi

\end{document}